\newtheorem{theorem}{Theorem}
\newtheorem{proposition}[theorem]{Proposition}
\newtheorem{lemma}[theorem]{Lemma}
\newtheorem{corollary}[theorem]{Corollary}
\newtheorem{definition}[theorem]{Definition}
\newtheorem{observation}[theorem]{Observation}
\begin{document}

\title{Approximating shortest paths in weighted square and hexagonal meshes}

\date{}

\author[1]{Prosenjit Bose\thanks{Email: jit@scs.carleton.ca}}
\author[1,2]{Guillermo Esteban\thanks{Email: g.esteban@uah.es}}
\author[2]{David Orden\thanks{Email: david.orden@uah.es}}
\author[3]{Rodrigo I. Silveira\thanks{Email: rodrigo.silveira@upc.edu}}

\affil[1]{School of Computer Science, Carleton University, Canada}
\affil[2]{Departamento de F\'{i}sica y Matem\'{a}ticas, Universidad de Alcal\'{a}, Spain}
\affil[3]{Departament de Matemàtiques, Universitat Politècnica de Catalunya, Spain}

\maketitle

\begin{abstract}
    Continuous 2-dimensional space is often discretized by considering a mesh of weighted cells. In this work we study how well a weighted mesh approximates the space, with respect to shortest paths. We consider a shortest path~$ \mathit{SP_w}(s,t) $ from~$ s $ to $ t $ in the continuous 2-dimensional space, a shortest vertex path~$ \mathit{SVP_w}(s,t) $ (or any-angle path), which is a shortest path where the vertices of the path are vertices of the mesh, and a shortest grid path $ \mathit{SGP_w}(s,t) $, which is a shortest path in a graph associated to the weighted mesh. We provide upper and lower bounds on the ratios $ \frac{\lVert \mathit{SGP_w}(s,t)\rVert}{\lVert \mathit{SP_w}(s,t)\rVert} $, $ \frac{\lVert \mathit{SVP_w}(s,t)\rVert}{\lVert \mathit{SP_w}(s,t)\rVert} $, $ \frac{\lVert \mathit{SGP_w}(s,t)\rVert}{\lVert \mathit{SVP_w}(s,t)\rVert} $ in square and hexagonal meshes, extending previous results for triangular grids. These ratios determine the effectiveness of existing algorithms that compute shortest paths on the graphs obtained from the grids. Our main results are that the ratio $ \frac{\lVert \mathit{SGP_w}(s,t)\rVert}{\lVert \mathit{SP_w}(s,t)\rVert} $ is at most $ \frac{2}{\sqrt{2+\sqrt{2}}} \approx 1.08 $ and $ \frac{2}{\sqrt{2+\sqrt{3}}} \approx 1.04 $ in a square and a hexagonal mesh, respectively.
\end{abstract}

\section{Introduction}\label{cap.introduction}

Computing a shortest path between two points $ s $ and $ t $ plays an important role in applications like isotropic sampling~\cite{fu2009direct,ying2014parallel}, image registration~\cite{paragios2003non,shekhovtsov2008efficient}, shape characterization~\cite{reuter2006laplace}, texture mapping~\cite{zhang2003synthesis}, and image segmentation~\cite{buyssens2014eikonal,wang2017superpixel}, among others. Shortest path problems can be categorized by various factors, which include the domain (e.g., discrete, continuous), the distance measure used (e.g., geodesic distance, link-distance, or Euclidean metric), or the
type and the number of domain constraints (e.g., holes in polygons, or obstacles in the plane), see~\cite{giannelli2016path,kapoor1997efficient,ramanathan2013shortest,wang2017discrete}.

Real-time applications where shortest paths are used, like geographic information systems~\cite{floriani}, robotics~\cite{gaw,rowe,Sharir}, or gaming~\cite{kamphuis,sturtevant2}, feature increasingly large amounts of information. All these data are expected to be managed efficiently in terms of execution time and solution quality. So, one alternative that is often encountered in many graphic applications to reduce time and space complexity is to discretize the 2D space by considering \emph{tessellations}. These are partitions of the space into simple polygons, with the property that the intersection of any of these polygons is a point, a segment, or the empty set. Convex polygons, and overlapping disks ---of different radii--- are among the most frequently used region shapes~\cite{van2016comparative}. Nonetheless, \emph{regular grids} are the dominant and simplest shape representation scheme in computer graphics~\cite{carsten2009global,jiang2009interpolatory,losasso2004simulating,Minecraft}. In general, regular meshes are easy to generate, require less memory (array storage can define neighbor connectivity implicitly), and are a natural choice for environments that are grid-based by design (e.g., finite element analysis, modeling, and many game designs, see Figure~\ref{fig:games}). In addition, some search and optimization algorithms can be optimized for square~\cite{Ammar,nagy} or hexagonal grids~\cite{bjornsson2003comparison}.

\begin{figure}[tb]
    \begin{subfigure}[t]{0.5\textwidth}
        \centering
        \includegraphics[scale=0.465]{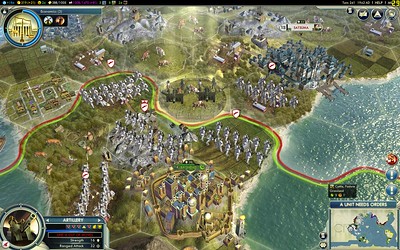}
        \caption{``CivilizationV\_DX11 2010-11-08 22-06-24-30'' by Chris Moore, used under CC BY 2.0.}
    \end{subfigure}
    \qquad\qquad
    \begin{subfigure}[t]{0.4\textwidth}
        \centering
        \includegraphics[scale=0.15]{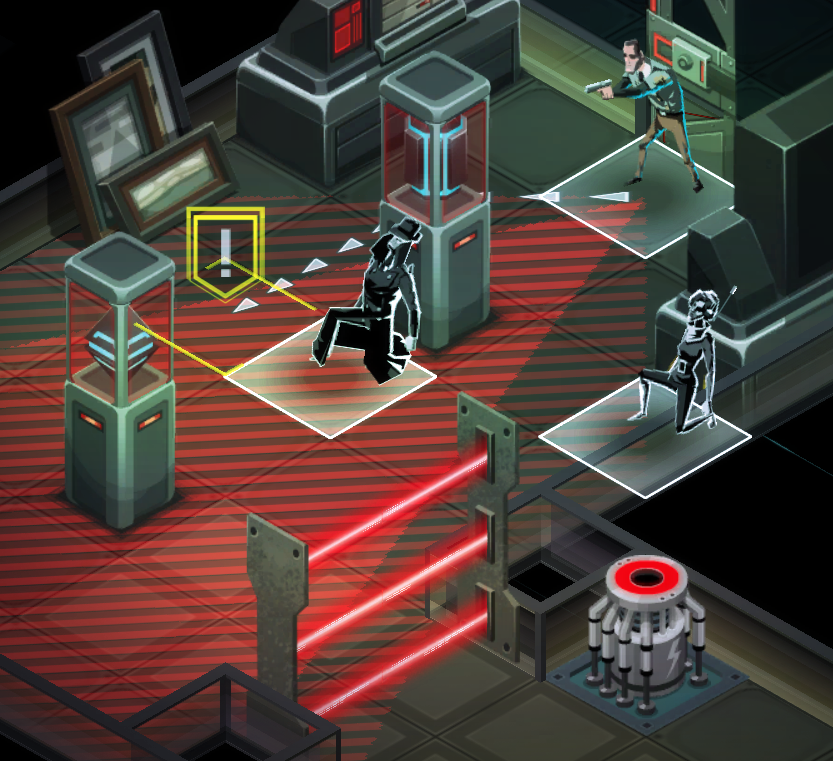}
        \caption{Screenshot of the ``Invisible, Inc.'' video game by Klei.}
        \label{fig:invisible}
    \end{subfigure}
    \caption{Note the regular meshes underlying the scenes.}
    \label{fig:games}
\end{figure}

Continuous 2D environments can be tessellated by using only three types of regular polygons, that are triangles, squares and hexagons. Using triangular meshes over other types of regular meshes offers some advantages: hexagonal cells can be obtained with six triangular cells sharing a common vertex, and the distance from one vertex to any of its six neighbors is always the same, which facilitates the calculation of distances~\cite{ColossalCitadels}. The structure of the square meshes can be exploited to efficiently create bounding volume hierarchies~\cite{kim2011coons}. Also, the axes of movement are orthogonal to one another. In terms of video games, this implies that the movement of non-player characters is free on one axis without affecting the position on the other axis. However, the higher number of sides of a hexagonal cell allows to represent curves in the patterns of the data more naturally than in a square mesh~\cite{hexagonsproperties}. In square and hexagonal mesh, each vertex is connected to multiple vertices within the same cell, which increases the number of possible movements from one vertex. In addition, the regularity of the meshes can also be regarded as a form of parametrization, which further means that it can be mapped nicely onto Cartesian coordinates, accelerating many geometric operations~\cite{cartesian}.

Consider a type of video game where the player primarily uses stealth to overcome or avoid enemies. The opponents typically have a line of sight which the player can avoid by hiding behind objects, or staying in the shadows. See Figure~\ref{fig:invisible} for an example where there is a single enemy observer whose view is obstructed by tall obstacles. These regions where the player remains undetected are considered as zero-cost regions, and those visible to at least one enemy are considered as regions with a fixed cost per unit distance. The problem is to find an obstacle-avoiding path between two locations that minimizes the total amount of time in the exposed regions.

Furthermore, we can consider the more general terrain navigation problem where the goal is to compute a shortest path when the cost of traversing the domain varies depending on the region. In this setting, the interior of each region $ R_i $ of the tessellation has a (non-negative) weight $ \omega_i $, which specifies the cost per unit distance of traveling interior to region $ R_i $. That is, the domain consists of a weighted planar polygonal subdivision. This problem is known as the \emph{weighted region problem} (WRP)~\cite{Mitchell2}.

The WRP is very general, since it allows to model many well-known variants of geometric shortest path problems \cite{de1997trekking,Mitchell1}. However, for many applications, the underlying mesh may be associated with non-uniform density~\cite{shen2016converting}, anisotropic metric~\cite{alliez2003anisotropic,galin2010procedural,kovacs2011anisotropic}, or user-specified geometric constraints~\cite{mitra2014structure,zhang2019real}.

In the WRP, the notion of length is defined according to the \emph{weighted region metric}. A segment~$ \pi $ between two points on the same region has cost $ \omega_i \rVert \pi \lVert $ when traversing the interior of $ R_i $ and $ \min\{\omega_i,\omega_j\}\rVert \pi \lVert $ when lying on the edge between~$ R_i $ and $ R_j $. Thus, the weighted length of a path $ \Pi $ through a tessellation is the sum of the weighted lengths of its subpaths through each face and along each edge. With a slight abuse of notation, this can be denoted by~$ \lVert \Pi \rVert$.

Several algorithms have been suggested for determining shortest paths in a weighted subdivision. However, only few special cases can be solved efficiently, e.g., with distances measured based on $L_1$ metric and each edge of the obstacles parallel to the $x$- or $y$-axis~\cite{ChenKT00,lee1990shortest}, the maximum concealment problem~\cite{GewaliMMN90}, or large grid environments~\cite{Ammar,jigang2010notice,peyer2009generalization}. In more general settings, existing algorithms for the WRP are unpleasing, since they are fairly complex in design and implementation, numerically unstable and may require an exponential number of bits to perform certain computations~\cite{Aleksandrov,Aleksandrov2,Aleksandrov3}.

Recently, it has been proven that computing an exact shortest path between two points using the weighted region metric is an unsolvable problem in the Algebraic Computation Model over the Rational Numbers ($ \mathit{ACM\mathbb{Q}} $)~\cite{Lou}. In the $ \mathit{ACM\mathbb{Q}} $ one can exactly compute any number that can be obtained from rational numbers by applying a finite number of operations from $ +, -, \times, \div, \sqrt[k]{} $, for any integer $ k \geq 2 $. This justifies the search for approximate solutions as opposed to exact ones. These approximate solutions partition each edge of the subdivision into intervals, obtaining paths with length at most $ (1+\varepsilon)$ times the length of the optimal path, for $ \varepsilon \in (0,1] $.

Mitchell and Papadimitriou~\cite{Mitchell2} applied to the WRP a variant of continuous Dijkstra's method by exploiting the fact that shortest paths obey Snell's law of refraction at the boundaries of the regions.

Other researchers, such as Aleksandrov et al.~\cite{Aleksandrov2, Aleksandrov3}, Cheng~\cite{ChengJV15}, or Reif and Sun~\cite{Sun}, worked by computing a discretization of the domain by carefully placing Steiner points either on the boundary, or in the interior of the cells.

Moreover, Rowe and Ross~\cite{rowe}, and Lanthier et al.~\cite{lanthier1999shortest} solved a discretized Hamilton-Jacobi-Bellmann equation and proposed an efficient algorithm to solve an isotropic shortest path problem.

Thus, in practice, exact shortest paths, denoted $\mathit{SP_w}(s,t)$, are not computed: instead, two types of shortest paths that result from different meshes and different definitions of the neighbors of a vertex are usually considered~\cite{bailey2021path}. These two paths are used in practice to approximate shortest paths. However, the approximation error is not fully understood, and in this work we improve upon this, giving theoretical foundations that explain why meshes work so well for this in practice. We recently studied the approximation error when the space is covered with equilateral triangles~\cite{bose2023approximating}. Hence, from now on, we just focus on the other two types of meshes: square and hexagonal.

We define the \emph{$k$-corner grid graphs} ($ G_{k\text{corner}} $) as the graphs whose vertex set is the set of corners of the tessellation, and each vertex is connected by an edge to a set of $ k $ adjacent vertices. The graphs resulting from joining vertices with the edges of the cells are denoted $ G_{4\text{corner}} $ in a square mesh, and $ G_{3\text{corner}} $ in a hexagonal mesh. See Figures~\ref{fig:4corner} and \ref{fig:3corner}, respectively. If we define edges from one vertex to any other vertex of the same cell, the graph that is obtained is denoted $ G_{8\text{corner}} $ in a square mesh, and $ G_{12\text{corner}} $ in a hexagonal mesh. See Figures~\ref{fig:8corner} and \ref{fig:12corner}, respectively.

\begin{figure}[tb]
	    \captionsetup[sub]{justification=centering}
		\centering
     	\begin{subfigure}[b]{0.4\textwidth}
     	\centering
        	\includegraphics{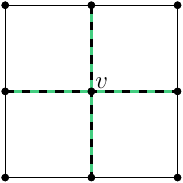}
         	\caption{}
         	\label{fig:4corner}
     	\end{subfigure}
     	\qquad\qquad
     	\begin{subfigure}[b]{0.4\textwidth}
     	\centering
        	\includegraphics{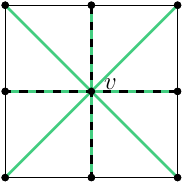}
         	\caption{}
         	\label{fig:8corner}
     	\end{subfigure}
      \qquad\qquad
     	\begin{subfigure}[b]{0.4\textwidth}
     	\centering
        	\includegraphics{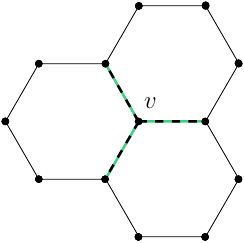}
         	\caption{}
         	\label{fig:3corner}
     	\end{subfigure}
      \qquad\qquad
     	\begin{subfigure}[b]{0.4\textwidth}
     	\centering
        	\includegraphics{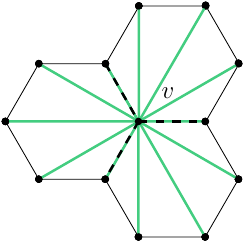}
         	\caption{}
         	\label{fig:12corner}
     	\end{subfigure}
        \caption{Vertex $ v $ is connected to its neighbors in $ G_{4\text{corner}} $ (top left), in $ G_{8\text{corner}} $ (top right), in $ G_{3\text{corner}} $ (bottom left), and in $ G_{12\text{corner}} $ (bottom right). The edges of each graph that coincide with the cell edges are depicted with dashed lines.}
       	\label{fig:6neigh}
	\end{figure}

In addition, we define the \emph{corner-vertex graph} ($ G_{\text{corner}} $) as the graph whose vertex set is the set of corners of the tessellation, and every pair of vertices is connected by an edge. This graph is the complete graph over the set of vertices. Figures~\ref{fig:squarevertex} and \ref{fig:hexvertex} depict the edges from a vertex $ v $ to some of its neighbors in the corner-vertex graph of a square, and a hexagonal mesh, respectively. Note that in these graphs some edges overlap.

\begin{figure}[tb]
	    \captionsetup[sub]{justification=centering}
		\centering
     	\begin{subfigure}[b]{0.44\textwidth}
     	\centering
        	\includegraphics{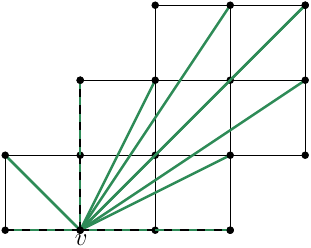}
         	\caption{}
         	\label{fig:squarevertex}
     	\end{subfigure}
     	\qquad
     	\begin{subfigure}[b]{0.49\textwidth}
     	\centering
        	\includegraphics{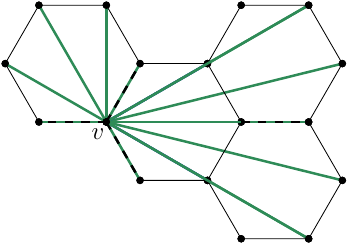}
         	\caption{}
         	\label{fig:hexvertex}
     	\end{subfigure}
        \caption{Vertex $ v $ is connected to some of its neighbors in $ G_{\text{corner}} $ of a square (left) and a hexagonal (right) mesh. The edges of each graph that coincide with the cell edges are depicted with dashed lines.}
       	\label{fig:vertex-graph}
	\end{figure}

A path in $ G_{k\text{corner}} $ is called a \emph{grid path}; a shortest grid path between two vertices $s$ and $t$ will be denoted $\mathit{SGP_w}(s,t)$, where the subscript $w$ highlights that this path depends on a particular weight assignment $w$. A path in $ G_{\text{corner}} $ is called a \emph{vertex path}; a shortest vertex path from vertex $ s $ to vertex $ t $ will be denoted by~$\mathit{SVP_w}(s,t)$.See Figure~\ref{fig:comparison-corner} for a comparison between the three types of shortest paths in $ G_{4\text{corner}} $, where $ \mathit{SP_w}(s,t) $, $ \mathit{SVP_w}(s,t) $, and $\mathit{SGP_w}(s,t)$ are represented in blue, green, and red, respectively. Note that in all figures where the complete paths from $ s $ to $ t $ are represented, cells that are not depicted are considered to have infinite weight. Moreover, two paths that coincide in a segment are represented with dashed lines in that segment.

	\begin{figure}[tb]
		\centering
    	\includegraphics{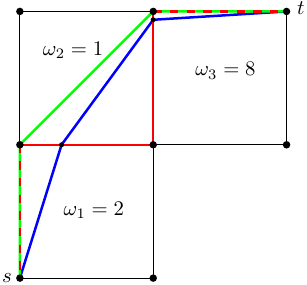}
        \captionof{figure}{The weighted length of $ \mathit{SP_w}(s,t) $ (blue), $ \mathit{SVP_w}(s,t) $ (green), and $ \mathit{SGP_w}(s,t) $ (red) is, respectively, $ 11.26 $, $ 11.41 $, and $ 12 $, when the length of the cell side is $ 1 $ in $ G_{4\text{corner}} $.}
        \label{fig:comparison-corner}
	\end{figure}
	
We stated above that path planning algorithms are faster and more efficient when the 2D space is tessellated. On the other hand, the constrained shortest paths $ \mathit{SVP_w}(s,t) $ and $ \mathit{SGP_w}(s,t) $ are approximations to the actual shortest path~$ \mathit{SP_w}(s,t) $. The quality of such approximations is determined by upper-bounding the ratios $ \frac{\lVert \mathit{SGP_w}(s,t)\rVert}{\lVert \mathit{SP_w}(s,t)\rVert} $ and $ \frac{\lVert \mathit{SVP_w}(s,t)\rVert}{\lVert \mathit{SP_w}(s,t)\rVert} $. In addition, these ratios are useful for designers when choosing path planning algorithms for their applications. The ratio $ \frac{\lVert \mathit{SGP_w}(s,t)\rVert}{\lVert \mathit{SVP_w}(s,t)\rVert} $ is also studied, to see how different the two approximations can be. In most of this paper we focus on $ G_{8\text{corner}} $, which is the most natural graph defined in a square grid. In addition, we discuss how to obtain lower and upper bounds for the same ratios for square $ G_{4\text{corner}} $, and for hexagonal $ G_{3\text{corner}} $ and $ G_{12\text{corner}} $.

\subsection{Previous results}

The comparison of the length of different types of shortest paths in the tessellated space has been previously studied. For instance, Bailey et al.~\cite{bailey2015path} performed an analysis of the three ratios on a square grid. However, they only allowed two weights for the cells: weight $ 1 $ (free space), and weight $ \infty$ (obstacles). In addition, Bailey et al.~\cite{bailey2021path} recently extended the results in~\cite{bailey2015path} by upper-bounding the worst case ratio~$ R= \frac{\lVert \mathit{SGP_w}(s,t)\rVert}{\lVert \mathit{SP_w}(s,t)\rVert} $ on triangular and hexagonal grids when the cells in the tessellated space have weights in the set $ \{1, \infty\} $. They proved that the length of $ \mathit{SGP_w}(s,t) $ in hexagonal $ G_{3\text{corner}} $ and $ G_{12\text{corner}} $, square $ G_{4\text{corner}} $ and $ G_{8\text{corner}} $, and triangle $ G_{6\text{corner}} $ can be up to $ \approx\!1.5 $, $ \approx\!1.04 $, $ \approx\!1.41 $, $ \approx\!1.08 $, and $ \approx\!1.15 $ times the length of~$ \mathit{SP_w}(s,t)$, respectively. In the more general case, when a general weight assignment $ \{1, \omega_i, \ldots, \omega_j, \infty\} $ is given in the WRP, Jaklin~\cite{Bound3} presented an upper bound of $ R \leq 2\sqrt{2} $. However, this ratio is obtained when the vertices of the associated graph are placed at the center of the cells in a square mesh, which is a different setting than placing the vertices on the corners of the cells. For instance, the ratio $ R $ in Figure~\ref{fig:infinite} is unbounded if we consider each vertex of the graph connected to four adjacent vertices.

 \begin{figure}[tb]
                \centering
	            \includegraphics{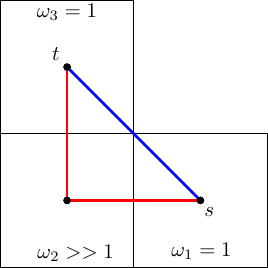}
	            \caption{$ \mathit{SGP_w}(s,t) $ (in red) intersects a cell that $ \mathit{SP_w}(s,t) = \mathit{SVP_w}(s,t) $ (in blue) does not intersect.}
	            \label{fig:infinite}
            \end{figure}

Moreover, we recently analyzed the case of a general weight assignment when considering vertices at corners of triangular cells~\cite{bose2023approximating}. In particular, we upper-bounded the ratio $ R $ by $ \frac{2}{\sqrt{3}} \approx 1.15 $ for equilateral triangle tessellations.

\subsection{Our results}

The main contribution of this paper is the generalization of the results from Bailey et al.~\cite{bailey2021path} when the weight $ \omega_i $ of each region $ R_i $ takes any value in $ \mathbb{R}_{\geq 0} $, for square and hexagonal meshes.

Our main results are that $ R \leq \frac{2}{\sqrt{2+\sqrt{2}}}\approx 1.08 $ when any (non-negative) weight is assigned to the cells of a square mesh and each vertex is connected to its $ 8$ neighboring vertices, and $ R \leq \frac{2}{\sqrt{2+\sqrt{3}}}\approx 1.04 $ when each vertex is connected to its $ 12$ neighboring vertices in a hexagonal mesh. Using these results, we can provide upper bounds on the other ratios. In particular, by applying an analogous procedure as in the graphs $ G_{8\text{corner}} $ and $ G_{12\text{corner}} $ we obtain upper bounds for the ratios in a square $ G_{4\text{corner}} $, and in hexagonal $ G_{3\text{corner}} $. The lower bounds for most of these ratios are given by the results from~\cite{bailey2021path}. Tables~\ref{tab:square} and~\ref{tab:hexw} summarize our results, together with the previously known lower bounds.

\begin{table}[htb]
\centering
\resizebox{0.9\textwidth}{!}{\begin{tabular}{||c|c|c|c||}
\cline{3-4}
\multicolumn{2}{c|}{}                                                                                          & $ G_{4\text{corner}} $ & $ G_{8\text{corner}} $ \\ \hline
\multirow{2}{*}{$ \frac{\lVert \mathit{SGP_w}(s,t)\rVert}{\lVert \mathit{SP_w}(s,t)\rVert} $}  & Lower bound &  $ \sqrt{2} \approx 1.41 $~\cite{bailey2021path} & $ \frac{2}{\sqrt{2+\sqrt{2}}} \approx 1.08 $~\cite{bailey2021path}                    \\ \cline{2-4}
                                                                                               & Upper bound  & $ \sqrt{2} \approx 1.41 $ (Thm.~\ref{thm:4square}) & $ \frac{2}{\sqrt{2+\sqrt{2}}} \approx 1.08 $ (Thm.~\ref{thm:4})                       \\ \hline
\multirow{2}{*}{$ \frac{\lVert \mathit{SVP_w}(s,t)\rVert}{\lVert \mathit{SP_w}(s,t)\rVert} $}  & Lower bound & \multicolumn{2}{c||}{                       $ \frac{\sqrt{2}\sqrt{\sqrt{2}-1}}{(\sqrt{2}-1)^{\frac{3}{2}}-\sqrt{2}+2} \approx 1.07 $ (Obs.~\ref{obs:5})} \\ \cline{2-4}
                                                                                               & Upper bound  & \multicolumn{2}{c||}{$ \frac{2}{\sqrt{2+\sqrt{2}}} \approx 1.08 $ (Cor.~\ref{cor:5})} \\ \hline
\multirow{2}{*}{$ \frac{\lVert \mathit{SGP_w}(s,t)\rVert}{\lVert \mathit{SVP_w}(s,t)\rVert} $} & Lower bound  & $ \sqrt{2} \approx 1.41 $~\cite{bailey2021path} & $ \frac{2}{\sqrt{2+\sqrt{2}}} \approx 1.08 $~\cite{bailey2021path} \\ \cline{2-4}
                                                                                               & Upper bound & $ \sqrt{2} \approx 1.41 $ (Cor.~\ref{cor:sgpsvp4}) & $ \frac{2}{\sqrt{2+\sqrt{2}}} \approx 1.08 $ (Cor.~\ref{cor:3})         \\ \hline             
\end{tabular}}
\captionof{table}{Results obtained for square meshes.}
		\label{tab:square}
\end{table}

\begin{table}[htb]
\centering
\resizebox{0.9\textwidth}{!}{\begin{tabular}{||c|c|c|c||}
\cline{3-4}
\multicolumn{2}{c|}{}                                                                                         & $ G_{3\text{corner}} $ & $ G_{12\text{corner}} $  \\ \hline
\multirow{2}{*}{$ \frac{\lVert \mathit{SGP_w}(s,t)\rVert}{\lVert \mathit{SP_w}(s,t)\rVert} $}  & Lower bound & $ \frac{3}{2} = 1.5 $~\cite{bailey2021path} & $ \frac{2}{\sqrt{2+\sqrt{3}}} \approx 1.04 $~\cite{bailey2021path} \\ \cline{2-4}
                                                                                               & Upper bound & $ \frac{3}{2} = 1.5 $ (Thm.~\ref{thm:ratiohex3corner})                       & $ \frac{2}{\sqrt{2+\sqrt{3}}} \approx 1.04 $ (Thm.~\ref{thm:ratiohex12corner})                 \\ \hline
\multirow{2}{*}{$ \frac{\lVert \mathit{SVP_w}(s,t)\rVert}{\lVert \mathit{SP_w}(s,t)\rVert} $}  & Lower bound & \multicolumn{2}{c||}{$ \frac{2\sqrt{4\sqrt{3}-6}}{(2-\sqrt{3})(\sqrt{4\sqrt{3}-6}+6)} \approx 1.03 $ (Obs.~\ref{obs:lowerhex})} \\ \cline{2-4}
                                                                                               & Upper bound & \multicolumn{2}{c||}{$ \frac{2}{\sqrt{2+\sqrt{3}}} \approx 1.04 $ (Cor.~\ref{cor:7})}  \\ \hline
\multirow{2}{*}{$ \frac{\lVert \mathit{SGP_w}(s,t)\rVert}{\lVert \mathit{SVP_w}(s,t)\rVert} $} & Lower bound & $ \frac{3}{2} = 1.5 $~\cite{bailey2021path} & $ \frac{2}{\sqrt{2+\sqrt{3}}} \approx 1.04 $~\cite{bailey2021path} \\ \cline{2-4}
                                                                                               & Upper bound & $ \frac{3}{2} = 1.5 $ (Cor.~\ref{cor:ratiohex3corner})                       & $ \frac{2}{\sqrt{2+\sqrt{3}}} \approx 1.04 $ (Cor.~\ref{cor:ratiohex12corner})             \\ \hline             
\end{tabular}}
\captionof{table}{Results obtained for hexagonal meshes.}
		\label{tab:hexw}
\end{table}

\subsection{Comparison with triangle case}

The analysis we used in~\cite{bose2023approximating} was the following: we first defined a grid path, which we called a \emph{crossing path $X(s,t)$}, that intersects the boundary of the same set of cells intersected by $ \mathit{SP_w}(s,t) $. The objective was to upper-bound the ratio between the weighted length of the two paths between consecutive points where they coincide by using that the union of those two paths generates six types of simple polygons. For one of those types, we needed to further refine the analysis by introducing another class of grid paths, called \emph{shortcut paths $ \mathit{\Pi}(s,t)$}, that intersect the edge of the cell containing the polygon that was not intersected by the crossing path.

The approach we use here is similar to the one used in \cite{bose2023approximating} for triangular tessellations. However, for these two different types of tessellations we need a more refined analysis to address several cases that do not appear in a triangular tessellation. In particular, two consecutive points where the crossing path $ X(s,t) $ and $ \mathit{SP_w}(s,t) $ coincide, might not belong to adjacent cells. In the case of square cells, we obtain two of these cases; and in the case of hexagonal cells, three such cases. Similarly to the triangular case, we need to define certain types of shortcut paths that help us to improve the upper bounds on the ratios. Indeed, for one type of simple polygons in a hexagonal mesh we need more than one shortcut path intersecting it, so that we can obtain the desired ratio. This requires more intricate arguments than those in~\cite{bose2023approximating} when comparing the lengths of the grid paths that intersect certain types of polygons. Nonetheless, we overcame these difficulties, and derived bounds for \emph{any} weight assignment for the regions. What was surprising is that although the model where arbitrary weights can be assigned to the cells is much more flexible, we prove that the worst-case upper bounds on the ratios defined above are independent of the weights assigned to the regions in the tessellations. Although counter-intuitive, in this paper we show that the worst-case occurs when cells all have either infinite weight or the same fixed constant weight.

\section{Square meshes}\label{cap.sq}
	
	In Section \ref{cap.introduction}, we introduced some bounds previously existing for the worst-case ratios between the length of the three types of shortest paths defined on a square mesh. These ratios were calculated when the weight of a cell is either~$ 1 $ or $ \infty $, or when the vertices of the graph are placed at the center of the cells, whose weight is any non-negative real value. In this work, we obtain new results by considering weights that can take any non-negative real value, and for the case where the vertices are placed at the corners of the cells. Without loss of generality, we suppose the length of each side of the square cells is~$ 1 $, so that the cell diagonal has length $ \sqrt{2} $. Let~$ \mathit{SP_w}(s, t) $ be a weighted shortest path in a square mesh $ \mathcal{S} $ from a corner $ s $ to another corner $ t $.

	\subsection{$ \frac{\lVert \mathit{SGP_w}(s,t)\rVert}{\lVert \mathit{SP_w}(s,t)\rVert} $ ratio in $ G_{8\text{corner}} $}
	\label{sec:8sq}

	This section is dedicated to obtaining, for two corners $ s $ and~$ t $, an upper bound on the ratio~$ \frac{\lVert \mathit{SGP_w}(s,t)\rVert}{\lVert \mathit{SP_w}(s,t)\rVert} $ in~$ G_{8\text{corner}} $ for a square mesh~$\mathcal{S}$ where the faces are closed squares that are assigned arbitrary weights in $ \mathbb{R}_{\geq 0} $. 
	
	\subsubsection{Crossing path and simple polygons}
	
	Knowing the exact shape of the shortest paths~$ \mathit{SP_w}(s,t) $ is difficult. In addition, the set of cells intersected by $ \mathit{SP_w}(s,t) $ and $ \mathit{SGP_w}(s,t) $ might be different, which implies considering a different set of weights for each of the paths. To address all these issues in a triangular tessellation, in~\cite{bose2023approximating} we defined a grid path $ X(s,t) $ called a \textit{crossing path} for a given $\mathit{SP_w}(s,t)$. One of the most important properties of the crossing path is that it only intersects the edges of the cells intersected by $\mathit{SP_w}(s,t)$. Moreover, $ X(s,t) $ is a grid path, so $ \lVert \mathit{SGP_w}(s,t) \rVert \leq \lVert X(s,t) \rVert $, thus the key idea to prove an upper bound on the ratio $ \frac{\lVert \mathit{SGP_w}(s,t)\rVert}{\lVert \mathit{SP_w}(s,t)\rVert} $ is to upper-bound the ratio $ \frac{\lVert \mathit{X}(s,t)\rVert}{\lVert \mathit{SP_w}(s,t)\rVert} $ instead. When the space is discretized using a square mesh we can analogously apply this idea of defining a crossing path $ X(s,t) $ for a given $\mathit{SP_w}(s,t)$, see the orange path in Figure~\ref{fig:crossingsquare}. $ X(s,t)$ will be defined based on the different positions of the points where $ \mathit{SP_w}(s,t) $ intersects the edges of a cell. We assume that $ \mathit{SP_w}(s,t) $ is unique, otherwise it is enough to define a crossing path for each shortest path and repeat the arguments where we compute upper bounds on the ratios in the upcoming sections.
	
	\begin{figure}[tb]
		\centering
		\includegraphics[width=\textwidth]{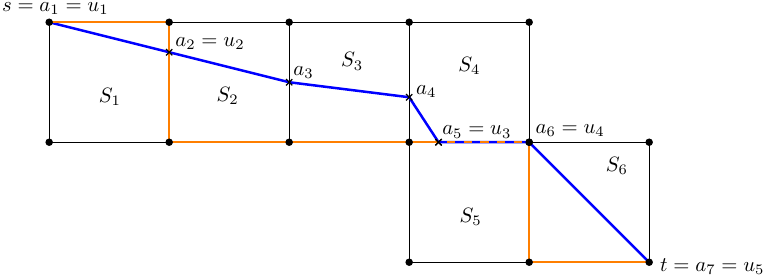}
		\captionof{figure}{Weighted shortest path $ \mathit{SP_w}(s,t) $ (blue) and the crossing path $ X(s,t) $ (orange) from~$ s $ to $ t $ in a square mesh.}
		\label{fig:crossingsquare}
	\end{figure}
	
	Let $ (S_1, \ldots, S_n) $ be the ordered sequence of consecutive square cells intersected by~$ \mathit{SP_w}(s, t) $ in the mesh $ \mathcal{S} $. We remind the reader that we never actually compute ~$ \mathit{SP_w}(s, t) $. We prove the bounds by studying geometric properties of shortest paths and essentially using those properties to define a grid path that is easier to bound.
 Let $ (s = a_1, a_2, \ldots, a_{n+1} = t) $ be the sequence of consecutive points where~$ \mathit{SP_w}(s, t) $ changes the cell(s) it belongs to in $ \mathcal{S} $. In particular, let~$ a_i $ and~$ a_{i+1} $ be, respectively, the points where $ \mathit{SP_w}(s,t) $ enters and leaves $ S_i $. 
	
	\begin{definition}
		\label{def:crossingsquare}
		The crossing path $ X(s,t) $ between two vertices $ s $ and $ t $ in a square mesh~$ \mathcal{S} $ is the path in $ \mathcal{S} $ with vertex sequence $ (X_1, \ldots, X_n) $, where $ X_i $ is a sequence of at most three vertices determined by the pair~$ (a_i, a_{i+1}), \ 1 \leq i \leq n $. Let $ e^i_1 \in S_i $ be an edge containing~$ a_{i} $, then:
		\begin{enumerate}
		\setlength\itemsep{0em}
		    
		    \item If $ a_i $ and $ a_{i+1} $ are on the same edge $ e_1^i \in S_i $, let $ u $ and $ v $ be the endpoints of $ e^i_1 $, where $ a_i $ is encountered before $ a_{i+1} $ when traversing $ e^i_1 $ from $ u $ to $ v $. If $ i=1$ then $ X_i = (a_1, v) $, otherwise $ X_i=(v) $, see Figure~\ref{fig:caseacrossingsquare}.
      \item If $ a_{i} $ and $ a_{i+1} $ are two corners of $ S_i $ that do not share an edge, let $ v $ be the corner of $ S_i $ to the right of $ \overrightarrow{a_ia_{i+1}} $. Then, $ X_i = (a_i, v, a_{i+1}) $, see Figure~\ref{fig:casebcrossingsquare}.
		    
		    \item If $ a_i $ and $ a_{i+1} $ belong to two adjacent edges $ e_1^i, e_2^i \in S_i $, let $ v $ be the corner of $ S_i $ shared by $ e_1^i $ and $ e_2^i $, and let $ u, u' $ be the other endpoints of $ e^i_1 $ and $ e^i_2 $, respectively. If $ a_i=u$ then $ X_i = (u, v, u') $, otherwise $ X_i = (v,u') $, see Figure~\ref{fig:caseccrossingsquare}.
		    
		    \item If $ a_i $ and $ a_{i+1} $ belong to the interior of two parallel edges $ e^i_1 $ and $ e^i_2 $, and the last point in $ X_{i-1} $ was the endpoint $v$ of $ e^i_1 $ to the left (resp., right) of~$ \overrightarrow{a_ia_{i+1}} $, $ X_i $ is the endpoint $v'$ of $ e^i_2 $ to the left (resp., right) of $ \overrightarrow{a_ia_{i+1}} $, see Figure~\ref{fig:casedcrossingsquare}.
		\end{enumerate}
	\end{definition}
	
	\begin{figure}[tb]
		\captionsetup[sub]{justification=centering}
		\centering
	        \begin{subfigure}[b]{0.18\textwidth}
				\centering
	    	    \includegraphics{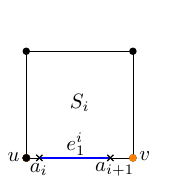}
	    	    \caption{}
	    	    \label{fig:caseacrossingsquare}
	    	\end{subfigure}
                \qquad
		    \begin{subfigure}[b]{0.18\textwidth}
   			    \centering
	    	    \includegraphics{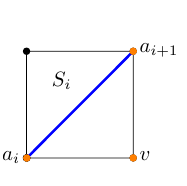}
	    	    \caption{}
	    	    \label{fig:casebcrossingsquare}
    	    \end{subfigure}
    	    \qquad
		    \begin{subfigure}[b]{0.18\textwidth}
   			    \centering
	    	    \includegraphics{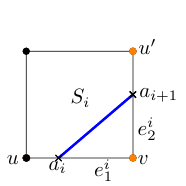}
	    	    \caption{}
	    	    \label{fig:caseccrossingsquare}
    	    \end{subfigure}
    	    \qquad
		    \begin{subfigure}[b]{0.18\textwidth}
   			    \centering
	    	    \includegraphics{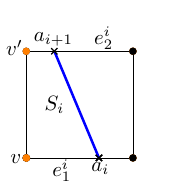}
	    	    \caption{}
	    	    \label{fig:casedcrossingsquare}
    	    \end{subfigure}
    	    \caption{Some of the positions of the intersection points between $ \mathit{SP_w}(s,t) $ (blue) and a square cell (notice that $ a_i $ and $ a_{i+1} $ could have other positions). The orange corners represent the vertices of the crossing path $ X(s,t) $.}
        \end{figure}

        \begin{lemma}
            The vertex sequence $ (X_1, \ldots, X_n) $ in Definition~\ref{def:crossingsquare} defines a grid path.
        \end{lemma}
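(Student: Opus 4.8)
The plan is to unpack Definition~\ref{def:crossingsquare} item by item and to verify that every two consecutive vertices of the concatenated sequence $(X_1,\dots,X_n)$ are either equal or joined by an edge of $G_{8\text{corner}}$, and that the sequence starts at $s$ and ends at $t$. The first thing I would record is the adjacency rule of $G_{8\text{corner}}$: by definition of this graph on a square mesh, two corners are adjacent precisely when they are both corners of a common cell (the four axis-neighbours and the four diagonal neighbours of a corner $v$ are exactly the corners sharing a cell with $v$). Granting this, it suffices to prove two statements: (i) every vertex listed in $X_i$ is a corner of the cell $S_i$; and (ii) for every $i<n$, the last vertex of $X_i$ and the first vertex of $X_{i+1}$ are both corners of the cell $S_{i+1}$. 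Indeed, (i) makes the steps inside a single block $X_i$ legal, since any two corners of the square $S_i$ are equal or $G_{8\text{corner}}$-adjacent, and (ii) does the same for the step joining $X_i$ to $X_{i+1}$.

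Statement~(i) is a short case check: in item~1 the listed vertices are $v$ (together with $a_1=s$ when $i=1$), both corners of $S_i$; in item~2 they are $a_i,v,a_{i+1}$, corners of $S_i$ by the hypothesis of that item; in item~3 they are $u,v,u'$ (or $v,u'$), all endpoints of edges of $S_i$; and in item~4 the single vertex $v'$ is an endpoint of an edge of $S_i$. For statement~(ii), the fact to exploit is that $a_{i+1}$, the point where $\mathit{SP_w}(s,t)$ leaves $S_i$ and enters $S_{i+1}$, lies on an edge $e_2^i$ of $S_i$ that is simultaneously the entry edge of $S_{i+1}$ (or, in the degenerate situation where $S_i$ and $S_{i+1}$ meet only at a corner, $a_{i+1}$ is that shared corner). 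Inspecting the four items shows that the last vertex of $X_i$ is always an endpoint of this exit edge $e_2^i$ --- namely $v$ in item~1 (there $e_2^i=e_1^i$ and $v$ is its endpoint beyond $a_{i+1}$), the corner $a_{i+1}$ itself in item~2, the vertex $u'$ in item~3, and the vertex $v'$ in item~4 --- hence a corner of $S_{i+1}$; symmetrically, the first vertex of $X_{i+1}$ is $v$ in item~1, $a_{i+1}$ in item~2, $u$ or $v$ in item~3, and $v'$ in item~4, each of those being a corner of $S_{i+1}$. Two corners of the common cell $S_{i+1}$ are equal or $G_{8\text{corner}}$-adjacent, which gives (ii). It then remains to match the extreme endpoints: $X_1$ begins at $a_1=s$, using the freedom to pick the edge $e_1^1$ through the corner $s$ so that the applicable item is item~1, item~2, or item~3 with $a_1=u$; and $X_n$ ends at $a_{n+1}=t$, because $t$ is a mesh vertex, so in the item defining $X_n$ the point $a_{n+1}=t$ is forced to coincide with the endpoint of the exit edge that the definition places last (note item~4, which cannot end at a mesh vertex, never applies for $i=n$). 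Combining (i), (ii) and these endpoint conditions shows that $(X_1,\dots,X_n)$ is a walk in $G_{8\text{corner}}$ from $s$ to $t$; removing any repeated vertices, which does not increase the weighted length, yields a grid path.

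The step I expect to require the most care is the bookkeeping at the cell interfaces: confirming that the exit edge $e_2^i$ of $S_i$ really is the entry edge $e_1^{i+1}$ of $S_{i+1}$, so that its endpoints are genuinely corners of $S_{i+1}$, and handling uniformly the degenerate configurations --- $a_{i+1}$ a corner, $S_i$ and $S_{i+1}$ sharing only a corner, or the intersection points lying on the boundary between two items of the definition --- instead of letting them proliferate into many subcases. Anchoring the whole verification to the single invariant ``the last vertex of $X_i$ and the first vertex of $X_{i+1}$ are corners of $S_{i+1}$'' is what should keep the argument routine.
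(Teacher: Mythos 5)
Your proof is correct, and its skeleton --- verify adjacency inside each block $X_i$, then at the junction between $X_i$ and $X_{i+1}$, by cases on where $a_i$ and $a_{i+1}$ sit --- matches the paper's. What genuinely differs is the adjacency you establish. You collapse everything onto the single invariant that every vertex of $X_i$ is a corner of $S_i$ and that the last vertex of $X_i$ and the first vertex of $X_{i+1}$ are both corners of the common cell $S_{i+1}$, and then invoke the fact that any two corners of one square cell are adjacent in $G_{8\text{corner}}$; this absorbs the degenerate configurations uniformly and proves exactly what the lemma needs in the $G_{8\text{corner}}$ setting, with a cleaner endpoint discussion ($s$ and $t$) than the paper gives. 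The paper's case analysis is finer: it tracks which endpoints of the shared edge $e_1^{i+1}$ land in $X_i$ versus $X_{i+1}$ and concludes that consecutive vertices of the concatenated sequence are always joined by edges of the tessellation, i.e.\ that $X(s,t)$ is a path already in $G_{4\text{corner}}$. That stronger fact is not needed for the present lemma, but it is what the paper relies on later when the very same crossing path is reused to bound the ratios in $G_{4\text{corner}}$; your argument, which in principle tolerates diagonal hops between opposite corners of a cell, would not by itself license that reuse. The upgrade is cheap, though: your own junction bookkeeping already shows that the two junction vertices are either equal or endpoints of a single edge of $S_{i+1}$, and within each block consecutive listed vertices share an edge of $S_i$, so recording those observations turns your proof into the paper's stronger conclusion.
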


        \begin{proof}
            Note that each sequence $ X_i, i\in \{1, \ldots, n\} $, is determined by an ordered sequence of adjacent vertices in $ G_{4\text{corner}} $.
            
            Now, there are two possible positions for $ a_{i+1} $: the interior of an edge, or a corner. Suppose, w.l.o.g, that $ a_{i+1} $ belongs to the interior of the edge $ e_1^{i+1} $ in cell $ S_{i+1} $ containing $ a_{i+1}$ and $ a_{i+2}$. Then there are four possible positions for $ a_i $ on the edges of $ S_i $, up to symmetries: an endpoint of $ e_1^{i+1} $, the interior of an edge adjacent to $ e_1^{i+1} $, the other endpoint of an edge adjacent to $ e_1^{i+1} $, or the interior of an edge parallel to $ e_1^{i+1} $. In each case, $ X_{i} $ either (i) contains one endpoint of the edge containing $ a_{i+1} $, and then $ X_{i+1} $ contains the other, or (ii) $ X_{i} $ contains the two endpoints, and then $ X_{i+1} $ contains the last vertex of $ X_{i} $ or does not contain any endpoint of $ e_1^{i+1} $. Otherwise, if $ a_{i+1} $ is a corner, there are three possible positions for $ a_i $, up to symmetries, see parts 1, 2 and 3 in Definition~\ref{def:crossingsquare}. In each case, the last vertex of $ X_{i} $ is $a_{i+1}$. In addition, either this vertex is not in $ X_{i+1}$, or it is the first vertex of $ X_{i+1}$. Thus, the sequence $ (X_1, \ldots, X_n) $ defines a path on the tessellation.
        \end{proof}
        
        In order to upper-bound the ratio $ \frac{\lVert \mathit{X}(s,t)\rVert}{\lVert \mathit{SP_w}(s,t)\rVert} $, we can use a variant of the mediant inequality, see Observation~\ref{thm:1}, and analyze the polygons resulting from consecutive intersection points between $\mathit{SP_w}(s,t)$ and $ X(s,t)$. Let $ (s\!=\!u_1, u_{2}, \ldots, u_\ell\!=\!t) $ be the sequence of consecutive points where $ X(s, t) $ and~$ \mathit{SP_w}(s, t) $ coincide, see Figure~\ref{fig:crossingsquare} for an illustration; in case $ \mathit{SP_w}(s, t) $ and~$X(s, t) $ share at least one segment, the corresponding points are defined as the endpoints of each of these segments.
        
        \begin{observation}
            \label{thm:1}	
		Let $\mathit{X}(s, t) $ be a crossing path defined from a weighted shortest path $ \mathit{SP_w}(s,t) $. Let $ u_i $ and $ u_{i+1} $ be two consecutive points where $ \mathit{SP_w}(s,t) $ and $\mathit{X}(s,t) $ coincide. Then, $ \frac{\lVert\mathit{X}(s,t)\rVert}{\lVert \mathit{SP_w}(s,t)\rVert} \leq \max_{1 \leq i \leq \ell-1}{\frac{\lVert\mathit{X}(u_i,u_{i+1})\rVert}{\lVert \mathit{SP_w}(u_i,u_{i+1})\rVert}} $.
	\end{observation}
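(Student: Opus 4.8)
The plan is to recognize Observation~\ref{thm:1} as an instance of the \emph{mediant inequality}: if $a_1,\dots,a_m\ge 0$ and $b_1,\dots,b_m> 0$, then $\frac{a_1+\dots+a_m}{b_1+\dots+b_m}\le \max_{1\le i\le m}\frac{a_i}{b_i}$. All the actual work is in setting up the decomposition so that this inequality can be applied, and in disposing of the degenerate indices where a denominator vanishes.

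First I would observe that, by the definition of the sequence $(u_1,\dots,u_\ell)$ (together with the stated convention for the endpoints of any segment shared by $\mathit{SP_w}(s,t)$ and $X(s,t)$), both $\mathit{SP_w}(s,t)$ and $X(s,t)$ visit $u_1,u_2,\dots,u_\ell$ in this order. Hence each of the two paths is the concatenation of its $\ell-1$ consecutive subpaths between successive $u_i$'s, and since the weighted length of a path is by definition the sum of the weighted lengths of its subpaths through the faces and along the edges, it is additive under this splitting:
\[
\lVert X(s,t)\rVert=\sum_{i=1}^{\ell-1}\lVert X(u_i,u_{i+1})\rVert,\qquad
\lVert \mathit{SP_w}(s,t)\rVert=\sum_{i=1}^{\ell-1}\lVert \mathit{SP_w}(u_i,u_{i+1})\rVert .
\]
Writing $a_i=\lVert X(u_i,u_{i+1})\rVert$ and $b_i=\lVert \mathit{SP_w}(u_i,u_{i+1})\rVert$, and assuming for the moment that every $b_i>0$, the termwise estimate $a_i\le\bigl(\max_{1\le j\le\ell-1}\tfrac{a_j}{b_j}\bigr)b_i$ holds; summing over $i$ and dividing by $\sum_i b_i=\lVert \mathit{SP_w}(s,t)\rVert>0$ gives exactly the asserted bound.

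It then remains to handle indices $i$ with $b_i=0$, i.e.\ where the portion of $\mathit{SP_w}(s,t)$ between $u_i$ and $u_{i+1}$ runs entirely through zero-weight cells and edges. In that case those cells all have weight $0$, and since the crossing path is constructed locally inside the cells crossed by $\mathit{SP_w}(s,t)$ (Definition~\ref{def:crossingsquare}), the matching piece $X(u_i,u_{i+1})$ also stays within those zero-weight cells and edges, so $a_i=0$ as well. Pairs $(a_i,b_i)=(0,0)$ contribute nothing to either sum, so they can be discarded before applying the mediant inequality to the remaining terms; if every term is of this kind then $\lVert \mathit{SP_w}(s,t)\rVert=\lVert X(s,t)\rVert=0$ and the statement is vacuous. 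I do not expect a genuine obstacle here: the only points needing care are the additivity of weighted length along the common break points $u_i$ and this zero-weight degeneracy, both dealt with as above.
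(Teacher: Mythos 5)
Your proposal is correct and matches the paper's intent exactly: the paper states this as an Observation with no written proof, explicitly calling it ``a variant of the mediant inequality,'' and your argument (decompose both paths at the common points $u_i$, use additivity of weighted length, apply the mediant inequality, and discard the $(0,0)$ terms) is precisely the routine justification the authors leave implicit. No gaps.
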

        
        The union of $ \mathit{SP_w}(s,t) $ and $ X(s,t) $ between two consecutive points $ u_j $ and $ u_{j+1} $, for $ 1 \leq j < \ell $, induces a (simple) polygon, which degenerates into a path if and only if $ \mathit{SP_w}(u_j,u_{j+1}) $ and $ X(u_j,u_{j+1}) $ coincide. The boundary of these polygons consist of the portions of $\mathit{SP_w}(s,t)$ and $ X(s,t)$ between consecutive intersection points.
        These polygons are the basic component that we analyze to obtain our main result. Depending on the number of cells and edges intersected by $ X(u_j, u_{j+1}) $ and $ \mathit{SP_w}(u_j, u_{j+1})$ we distinguish two different types of polygons. The first type is denoted~$ P_k^1 $, for $ k \in \{0,1,2\} $, see Figure~\ref{fig:weakly1square}. The second type is denoted~$ P_k^\ell $, for $ k \in \{1,2\} $ and $ \ell\geq 2$, see Figure~\ref{fig:weakly3square}. Informally, $ \ell $ represents the number of cells whose interior is intersected by $ \mathit{SP_w}(u_j, u_{j+1})$, and $ k $ is the number of different edges of the last cell whose interior is intersected by $ \mathit{SP_w}(u_j, u_{j+1}) \cup X(u_j, u_{j+1}) $ minus $ 1 $.

    \begin{definition}
		\label{def:weakly1square}
		Let $ u_j, u_{j+1} \in S_i $ be two consecutive points where $ \mathit{SP_w}(s, t) $ and~$ X(s, t) $ coincide. A polygon induced by $ u_j $ and $ u_{j+1} $ is of \emph{type}~$ P^1_k$, for $ k\in\{0,1,2\} $, if $ \mathit{SP_w}(u_j, u_{j+1}) \cup X(u_j, u_{j+1}) $ intersects the interior of~$ k+1 $ different edges of~$ S_i $.
	\end{definition}
	
	\begin{figure}[tb]
		\centering
		\includegraphics{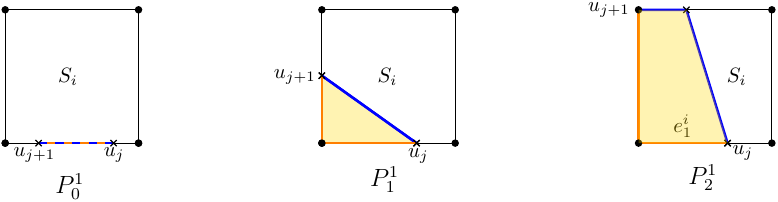}
		\captionof{figure}{Some polygons of type $ P^1_k $, and the subpaths $ \mathit{SP_w}(u_{j},u_{j+1}) $ (blue) and $ X(u_j,u_{j+1}) $ (orange) in a square mesh.}
		\label{fig:weakly1square}
	\end{figure}
	
	\begin{proposition}
	    \label{prop:onlyvertex}
	    Let $ u_j, u_{j+1} \in S_i $ be two consecutive points where $ \mathit{SP_{w}}(s,t) $ and $ X(s,t) $ coincide. $ P_0^1 $ and $ P_1^1$ are the only types of polygons that can arise in $ \mathit{SP_w}(u_j, u_{j+1}) \cup X(u_j, u_{j+1}) $ if the point $ u_j$ is a corner of a cell.
	\end{proposition}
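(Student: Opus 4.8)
The plan is to follow both $\mathit{SP_w}$ and $X$ out of the corner $u_j$ and show that they reunite inside a single cell after running along at most two of that cell's edges, which forces the induced polygon to be of type $P_0^1$ or $P_1^1$. Since $u_j$ is a corner of a cell that lies on $\mathit{SP_w}(s,t)$, and $u_j\neq t$ (as $u_j$ admits a successor $u_{j+1}$ in the sequence of coincidence points), the path $\mathit{SP_w}$ changes the cells it belongs to at $u_j$; hence $u_j=a_i$ for some $i$ (with $i=1$ if $u_j=s$). Let $S_i$ be the cell that $\mathit{SP_w}$ enters at $a_i$ and leaves at $a_{i+1}$. Because the weight of $S_i$ is constant, $\mathit{SP_w}(a_i,a_{i+1})$ is a single straight segment from the corner $a_i$ to the boundary point $a_{i+1}$, so it meets $\partial S_i$ only at these two endpoints, unless it runs along an edge of $S_i$, in which case it meets the interior of exactly that one edge and nothing else.

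Next I would read off the shape of $X$ inside $S_i$ from Definition~\ref{def:crossingsquare}, using that $a_i$ is a corner of $S_i$. Part~4 of that definition cannot apply, since it requires $a_i$ to lie in the interior of an edge. In Parts~1, 2 and~3, the portion of $X$ near $S_i$ is a subpath of $\partial S_i$ that goes from $a_i$ to $a_{i+1}$ along at most two cell edges: in Part~1 both $a_i$ and $a_{i+1}$ lie on a common edge $e^i_1$ and $X$ traverses that single edge (through $a_{i+1}$) to its far endpoint; in Part~3 the relevant edges are the (at most two) edges $e^i_1,e^i_2$ incident to $a_i$, with $a_{i+1}$ on $e^i_2$, which $X$ traverses; and in Part~2 ---the only case in which $a_i$ and $a_{i+1}$ are opposite corners of $S_i$--- we have $X_i=(a_i,v,a_{i+1})$, which uses exactly the two edges through $v$. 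In every case $X$ passes through $a_{i+1}$, so $X$ and $\mathit{SP_w}$ coincide again there (and when $\mathit{SP_w}(a_i,a_{i+1})$ lies along an edge, so does the corresponding part of $X$, and the two paths coincide along a whole segment ending at $a_{i+1}$). Hence the next coincidence point is $u_{j+1}=a_{i+1}\in S_i$, so $u_j$ and $u_{j+1}$ lie in the common cell $S_i$. By Definition~\ref{def:weakly1square} the polygon is therefore of type $P_k^1$ for some $k\in\{0,1,2\}$, and no type $P_k^\ell$ with $\ell\geq 2$ is possible, because $\mathit{SP_w}(u_j,u_{j+1})$ crosses the interior of only one cell.

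It then remains to show $k\leq 1$. If $\mathit{SP_w}(u_j,u_{j+1})$ lies along an edge of $S_i$, then Part~1 of Definition~\ref{def:crossingsquare} applies and $X(u_j,u_{j+1})$ lies along the same edge, so $\mathit{SP_w}(u_j,u_{j+1})\cup X(u_j,u_{j+1})$ meets the interior of exactly one edge of $S_i$ and $k=0$; otherwise $\mathit{SP_w}(u_j,u_{j+1})$ meets no edge interior while $X(u_j,u_{j+1})$ meets the interior of at most two edges of $S_i$, so $k\leq 1$. In all cases the induced polygon is of type $P_0^1$ or $P_1^1$, with the degenerate configuration (where the two subpaths coincide, i.e. Part~1) falling under $P_0^1$.

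I expect the main obstacle to be the case bookkeeping in the second step: matching each possible position of $a_{i+1}$ relative to the corner $a_i$ of a square with the correct case of Definition~\ref{def:crossingsquare} ---including the roles $a_i$ may play as an endpoint ``$u$'' in Parts~1 and~3, or as the shared corner ``$v$'' in Part~3, the latter actually reducing to Part~1--- and verifying in each case both that $X$ stays on $\partial S_i$ until it reaches $a_{i+1}$, and that the preceding block $X_{i-1}$ already ends at $a_i$, so that there is no detour before $S_i$. A secondary point of care is the degenerate configuration in which $\mathit{SP_w}$ leaves $u_j$ along a cell edge, where one must invoke the convention that coincidence points are the endpoints of shared segments.
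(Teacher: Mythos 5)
Your proof is correct and follows essentially the same route as the paper's: use parts 1--3 of Definition~\ref{def:crossingsquare} (part 4 being excluded since $u_j$ is a corner) to show that $X$ runs along at most two edges of $S_i$ and rejoins $\mathit{SP_w}$ exactly where it leaves $S_i$, so $u_{j+1}\in S_i$, and then count the edge interiors met by the union to rule out $k=2$. Your write-up is in fact slightly more careful than the paper's (which only notes that the endpoints of the edge containing $a_{i+1}$ lie on $X$), but the argument is the same.
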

	
	\begin{proof}
            First, note that, if $ \mathit{SP_w}(s,t) $ enters a cell $ S_i $ through one of its corners, by Definition~\ref{def:crossingsquare}, parts 1, 2 and 3, this corner, and the two endpoints of the edge containing the point where $ \mathit{SP_w}(s,t) $ leaves~$ S_i $ belong to~$ X(s,t) $. Hence,~$ \mathit{SP_w}(s,t) $ and $ X(s,t) $ intersect twice in~$ S_i $, once at~$ u_j $ and again where~$ \mathit{SP_w}(s,t) $ leaves~$ S_i $. This means that~$ u_{j+1} $ belongs to~$ S_i $.
            
            If the point $ u_{j+1} $ belongs to the same edge as $ u_j $, then $ \mathit{SP_w}(u_{j},u_{j+1}) $ and $ X(u_j, u_{j+1}) $ coincide, see Definition~\ref{def:crossingsquare}, part 1, and we have a polygon of type~$ P_0^1 $, by Definition~\ref{def:weakly1square}. If $ u_{j+1} $ belongs to an edge not containing $ u_j $ then, by Definition~\ref{def:crossingsquare}, parts 2 and 3, $ X(u_j,u_{j+1}) $ intersects the interior of $ 2 $ consecutive edges, and we have a polygon of type~$ P_1^1$.
	\end{proof}
	
	A key difference between triangular tessellations studied in~\cite{bose2023approximating} with the problem on square meshes is that in square meshes, two consecutive points $ u_j $ and $ u_{j+1}$ can belong to cells that are {\em not} adjacent. This can happen when $ \mathit{SP_w}(s,t) $ intersects the interior of two parallel edges on a cell, which is not possible in a triangular tessellation. Hence, we need to address this difficulty. We overcome this obstacle by defining more types of polygons. A property of these polygons is that the cells intersected by $ \mathit{SP_w}(u_j, u_{j+1}) $ are horizontally or vertically aligned. By Proposition~\ref{prop:onlyvertex}, we only need to consider the case where~$ u_j $ belongs to the interior of an edge.

	\begin{definition}
		\label{def:weakly3square}
		Let $ u_j $ and $ u_{j+1} $ be two consecutive points where $ \mathit{SP_w}(s, t) $ and~$ X(s, t) $ coincide. Let $ u_{j+1} $ be to the right (resp., left) of the line through~$ u_j $ perpendicular to the edge containing~$ u_j $, with respect to the $ SP_w(s,t) $, when considering $ SP_w(s,t) $ oriented from $ s $ to $ t $, see Figure~\ref{fig:weakly3square}. A polygon induced by $ u_j $ and $ u_{j+1} $ is of \emph{type}~$ P^{\ell}_k, \ \ell \geq 2, \ k \in \{1, 2\} $, if:
		\begin{itemize}
		\setlength\itemsep{0em}
		    \item $ \mathit{SP_w}(u_j, u_{j+1}) $ intersects the interior of $ \ell $ consecutive cells $ S_i, \ldots, S_m$, with $ \ell = m -i+1 $.
			\item $ X(u_j, u_{j+1}) $ contains all the vertices of the $ \ell $ cells intersected by $ \mathit{SP_w}(u_j, u_{j+1}) $ that are to the right (resp., left) of $ \overrightarrow{u_ju_{j+1}}$.
			\item $ \mathit{SP_w}(u_j, u_{j+1}) \cup X(u_j, u_{j+1}) $ intersects the interior of $k+1 $ different edges of~$ S_m$.
		\end{itemize}
	\end{definition}
		
	\begin{figure}[tb]
		\centering
		\includegraphics{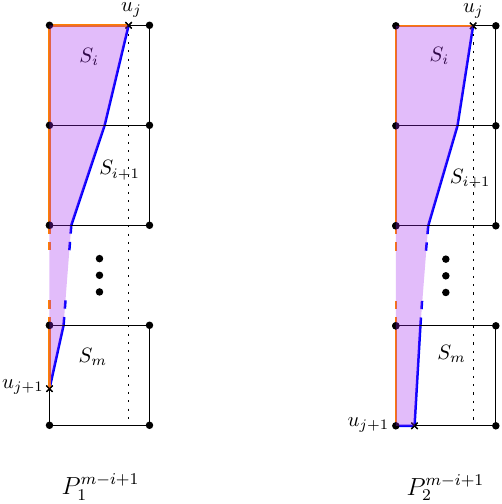}
		\captionof{figure}{Some polygons of type $ P^{m-i+1}_1 $ and $ P^{m-i+1}_2 $, and the subpaths $ \mathit{SP_w}(u_{j},u_{j+1}) $ (blue) and $ X(u_{j},u_{j+1}) $ (orange) in a square mesh.}
		\label{fig:weakly3square}
	\end{figure}
	
	\begin{proposition}
            \label{prop:unique}
	    Let $ u_j, u_{j+1} $ be two consecutive points where $ \mathit{SP_{w}}(s,t) $ and $ X(s,t) $ coincide. $ P_k^1 $, for $ k \in \{0,1,2\} $, and $ P_k^\ell $, for $ k \in \{1,2\} $ and $ \ell\geq 2$, are the only types of polygons that can arise in $ \mathit{SP_w}(u_j, u_{j+1}) \cup X(u_j, u_{j+1}) $.
	\end{proposition}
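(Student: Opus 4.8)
The plan is to reduce the statement to a cell-by-cell walk along $\mathit{SP_w}(s,t)$ controlled by Definition~\ref{def:crossingsquare}, distinguishing cases by the position of $u_j$. If $u_j$ is a corner of a cell, then Proposition~\ref{prop:onlyvertex} already shows that the polygon induced by $u_j,u_{j+1}$ is of type $P_0^1$ or $P_1^1$, both of which occur in the list, so I may assume $u_j$ lies in the relative interior of an edge $e$.

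First I would dispose of the degenerate case: if $\mathit{SP_w}(u_j,u_{j+1})$ and $X(u_j,u_{j+1})$ coincide, then both run along $e$, the induced ``polygon'' is a subsegment of $e$, it meets the interior of exactly one edge, and so it is of type $P_0^1$ by Definition~\ref{def:weakly1square}. Otherwise the two paths separate immediately after their common point $u_j$; since $X(s,t)$ is a grid path it runs \emph{along} $e$ through $u_j$, and hence $\mathit{SP_w}(s,t)$ leaves $e$ at $u_j$ into the interior of some cell $S_i$.

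Now I would trace $\mathit{SP_w}(s,t)$ forward through the cells $S_i,S_{i+1},\dots$ whose interior it meets, reading off $X(s,t)$ from Definition~\ref{def:crossingsquare} at each step, and use the following dichotomy at a cell $S_p$ entered through the interior of an edge. Either $\mathit{SP_w}(s,t)$ leaves $S_p$ through the parallel edge (part~4 of Definition~\ref{def:crossingsquare}): then $X(s,t)$ crosses $S_p$ along a single side, which lies on the opposite side of the line joining the two crossing points from the chord traced by $\mathit{SP_w}(s,t)$, so $X(s,t)$ does not meet $\mathit{SP_w}(s,t)$ inside $S_p$ and the walk continues. Or $\mathit{SP_w}(s,t)$ leaves $S_p$ through an edge adjacent to its entry edge, or at a corner (parts~1--3 of Definition~\ref{def:crossingsquare}): then $X(s,t)$ contains both endpoints of that exit edge, or that exit corner, so $X(s,t)$ meets $\mathit{SP_w}(s,t)$ again at a point of $S_p$, which must be $u_{j+1}$. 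Thus the walk ends at the first cell $S_m$ that $\mathit{SP_w}(s,t)$ leaves non-parallel, $u_{j+1}\in S_m$, and $\ell:=m-i+1$ is the number of cells whose interior $\mathit{SP_w}(u_j,u_{j+1})$ meets.

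It then remains to identify the type. If $\ell=1$ the polygon lies in $S_i$, and a short case analysis on the exit edge of $\mathit{SP_w}(s,t)$ through $S_i$, using Definition~\ref{def:crossingsquare}, shows that $\mathit{SP_w}(u_j,u_{j+1})\cup X(u_j,u_{j+1})$ meets at most three edges of $S_i$, so the polygon is among $P_0^1,P_1^1,P_2^1$. If $\ell\ge 2$ I would verify the three conditions in Definition~\ref{def:weakly3square}: the cells $S_i,\dots,S_m$ form a straight run, since each consecutive pair is separated by a pair of parallel edges; $X(u_j,u_{j+1})$ consists of exactly one side of each of them, all on the same side of $\overrightarrow{u_ju_{j+1}}$; and in $S_m$ the union $\mathit{SP_w}(u_j,u_{j+1})\cup X(u_j,u_{j+1})$ meets $k+1$ edges with $k\in\{1,2\}$. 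I expect the $\ell\ge 2$ case to be the main obstacle: proving that $X(u_j,u_{j+1})$ stays on a single side of $\overrightarrow{u_ju_{j+1}}$ across the whole run requires controlling how the ``left/right'' label attached to the endpoints in part~4 of Definition~\ref{def:crossingsquare} evolves as the direction of $\mathit{SP_w}(s,t)$ varies over the aligned cells, and the last-cell edge count has to be checked separately in each of the few possible exit configurations.
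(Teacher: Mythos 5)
Your proposal is correct and follows essentially the same route as the paper's proof: split on whether $u_j$ is a corner (handled by Proposition~\ref{prop:onlyvertex}) or interior to an edge, and then on whether $\mathit{SP_w}$ leaves the cell through the same edge, an adjacent edge/corner, or the parallel edge, with the parallel case producing the aligned run of $\ell>1$ cells and types $P_1^\ell$ or $P_2^\ell$. Your forward-walk argument for the multi-cell case, including the check that the left/right label in part~4 of Definition~\ref{def:crossingsquare} is preserved along the run, just makes explicit what the paper leaves implicit.
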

	
	\begin{proof}
	    Suppose $ u_j $ is a corner of $ \mathcal{S} $, then by Proposition~\ref{prop:onlyvertex} we have polygons of type~$ P_0^1$ and~$ P_1^1 $.
	    Now, suppose $ u_j$ is a point on the interior of an edge $ e $. Let~$ b $ be the next point where~$ \mathit{SP_w}(s, t) $ changes the cell(s) it belongs to. Then, if~$ b \in e $, by Definition~\ref{def:crossingsquare}, part 1, we attain a polygon of type~$ P_0^1$. If $ b $ belongs to an edge adjacent to $ e $, by Definition~\ref{def:crossingsquare}, part 3, we attain a polygon of type~$ P_1^1$. This follows from the same reasoning as in the proof of Proposition~\ref{prop:onlyvertex}. Otherwise, if~$ b $ belongs to the edge parallel to $ e $ in the same cell, there is a set of $\ell > 1$ cells, horizontally or vertically aligned with the cell containing $ u_j$ and $ b$, whose interior is intersected by~$ \mathit{SP_w}(u_j,u_{j+1})$. Note also that $ X(u_j,u_{j+1}) $ does not contain the four vertices of a cell in consecutive order. Hence, by Definition~\ref{def:crossingsquare}, part 4, we obtain a polygon of type~$ P_1^\ell $, if the last segment of $ \mathit{SP_w}(u_j,u_{j+1})$ properly intersects an edge of a cell, otherwise we obtain a polygon of type~$ P_2^{\ell}$.
	\end{proof}
	
	\subsubsection{Bounding the ratio for one polygon}
	The aforementioned polygons are an important tool in our proof since they are the only possible ways that $ \mathit{SP_w}(u_j,u_{j+1}) $ and $ X(u_j,u_{j+1}) $ interact. Therefore, it is enough to upper-bound $ \frac{\lVert X(u_j,u_{j+1})\rVert}{\lVert \mathit{SP_w}(u_j,u_{j+1})\rVert} $ when~$ X(u_j,u_{j+1})$ and $\mathit{SP_w}(u_j,u_{j+1})$ bound polygons of type~$ P_k^1$, $ k\in\{0,1,2\}$, and $ P_k^{\ell} $, $ k\in\{1,2\}, \ell \geq 2$.

        \begin{observation}
	    \label{obs:reduce0}
	    Let $ u_j, u_{j+1} $ be two consecutive points where $ \mathit{SP_{w}}(s,t) $ and $ X(s,t) $ coincide. In a polygon of type~$ P_0^1 $, the subpaths $ X(u_j,u_{j+1}) $ and $ \mathit{SP_w}(u_j, u_{j+1}) $ coincide. Hence, $ \frac{\lVert X(u_j, u_{j+1})\rVert}{\lVert \mathit{SP_w}(u_j, u_{j+1}) \rVert} = 1 $.
	\end{observation}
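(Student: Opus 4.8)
The plan is to show that a polygon of type $P_0^1$ is always degenerate, so that there is nothing to bound. First I would recall, from Definition~\ref{def:weakly1square}, that $u_j$ and $u_{j+1}$ both lie in $S_i$ and that $\mathit{SP_w}(u_j,u_{j+1}) \cup X(u_j,u_{j+1})$ meets the interior of exactly one edge of $S_i$. Combining this with the case analysis already carried out in the proofs of Propositions~\ref{prop:onlyvertex} and~\ref{prop:unique}, the only way this single-edge situation can occur is that $u_j$ and $u_{j+1}$ lie on a common edge $e$ of $S_i$ and that $\mathit{SP_w}(s,t)$ enters and leaves $S_i$ through $e$; in the notation of Definition~\ref{def:crossingsquare} this is precisely part~1.

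Next I would argue that both subpaths equal the straight segment $[u_j,u_{j+1}] \subseteq e$. For $X(u_j,u_{j+1})$ this is immediate from part~1 of Definition~\ref{def:crossingsquare}: when $\mathit{SP_w}(s,t)$ enters and leaves $S_i$ through $e$, the crossing path merely advances along $e$ between the corresponding corners (or endpoints), so $X(u_j,u_{j+1})$ is a sub-segment of $e$. For $\mathit{SP_w}(u_j,u_{j+1})$, note that $u_j$ and $u_{j+1}$ lie on $\mathit{SP_w}(s,t)$ (they are coincidence points), hence $\mathit{SP_w}(u_j,u_{j+1})$ is itself a weighted shortest path between them; since both points lie on $e$, whose cost per unit length equals the minimum of the weights of the two cells incident to $e$, the straight segment along $e$ is a shortest path between them (any detour into an incident cell is both strictly longer geometrically and no cheaper per unit length), so replacing $\mathit{SP_w}(u_j,u_{j+1})$ by that segment would give an $s$--$t$ path of the same weighted length; by the assumed uniqueness of $\mathit{SP_w}(s,t)$, the subpath must already be that segment.

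Therefore $X(u_j,u_{j+1})$ and $\mathit{SP_w}(u_j,u_{j+1})$ are the same segment of $e$, so they coincide and $\frac{\lVert X(u_j,u_{j+1})\rVert}{\lVert \mathit{SP_w}(u_j,u_{j+1})\rVert} = 1$. I expect the only mildly delicate point to be the first step---reading off from Propositions~\ref{prop:onlyvertex} and~\ref{prop:unique} that a ``$P_0^1$ polygon'' is equivalent to ``$u_j$ and $u_{j+1}$ on a common edge'', together with the short geodesic argument that $\mathit{SP_w}(s,t)$ cannot leave $e$; everything else is a direct reading of the definitions.
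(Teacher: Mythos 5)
Your argument is correct and is essentially the reasoning the paper leaves implicit: Observation~\ref{obs:reduce0} is stated without proof precisely because, as you verify, a polygon of type~$P_0^1$ forces both $X(u_j,u_{j+1})$ and $\mathit{SP_w}(u_j,u_{j+1})$ onto the single edge whose interior their union meets, so they degenerate to the same segment. One small caution: your parenthetical claim that the segment along $e$ is a globally shortest path between $u_j$ and $u_{j+1}$ is stronger than what holds in general (a detour through a distant cheap cell could beat it); what makes your step valid is that $u_j$ and $u_{j+1}$ are consecutive cell-change points of $\mathit{SP_w}(s,t)$, so its subpath between them is confined to the closed cell $S_i$, and the only competitor to exclude is a bulge into the interior of $S_i$ --- which your per-unit-cost comparison (interior cost $\omega_i \geq \min\{\omega_i,\omega_j\}$ and strictly greater Euclidean length) does exclude.
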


        Observation~\ref{obs:reduce0} implies that polygons of type~$ P_0^1 $ do not need to be considered when determining worst-case upper and lower bounds. Let $ p $ and $ q $ be two points on different edges of the same cell. Observation~\ref{obs:lengthsquares} gives the distance from $ p $ to $ q $. The result can be shown using the Pythagorean Theorem. 
	
	\begin{observation}
		\label{obs:lengthsquares}
		Let $ S_i $ be a square cell of side length $ 1 $, and let $ (v_1, v_2, v_3, v_4) $ be the four consecutive vertices of $ S_i $, in clockwise order. Let $ p \in [v_1, v_2] $ and $ q $ be a point on the boundary of $ S_i $. Then,
		\begin{enumerate}
			\setlength\itemsep{0em}
			\item If $ q \in [v_3, v_4] $, $ |pq| = \sqrt{a^2+b^2-2ab+1} $, where $ a=|pv_2|, b=|v_3q| $, see Figure~\ref{fig:46}.
			\item If $ q \in [v_2, v_3] $, $ |pq|=\sqrt{a^2+b^2} $, where $ a=|pv_2|, b=|v_2q| $, see Figure~\ref{fig:47}.
		\end{enumerate}
	\end{observation}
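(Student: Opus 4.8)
The plan is to set up Cartesian coordinates on the cell $S_i$ and reduce both identities to a single application of the Pythagorean Theorem (equivalently, the Euclidean distance formula). Since $(v_1,v_2,v_3,v_4)$ are listed in clockwise order, I would place $v_1=(0,1)$, $v_2=(1,1)$, $v_3=(1,0)$, $v_4=(0,0)$, so that $[v_1,v_2]$ is the top side, $[v_2,v_3]$ the right side, $[v_3,v_4]$ the bottom side, and $[v_4,v_1]$ the left side. This fixes the configuration shown in the two figures accompanying the statement and makes the roles of $a$ and $b$ unambiguous: in case~1, $[v_3,v_4]$ is the side opposite $[v_1,v_2]$, while in case~2, $[v_2,v_3]$ is a side adjacent to $[v_1,v_2]$, sharing the corner $v_2$.

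Next I would express $p$ and $q$ in these coordinates in terms of the given lengths. Since $p\in[v_1,v_2]$ with $a=|pv_2|$, we have $p=(1-a,1)$. In the first case $q\in[v_3,v_4]$ with $b=|v_3q|$, measured from $v_3=(1,0)$ towards $v_4=(0,0)$, so $q=(1-b,0)$; in the second case $q\in[v_2,v_3]$ with $b=|v_2q|$, measured from $v_2=(1,1)$ towards $v_3=(1,0)$, so $q=(1,1-b)$.

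Then I would simply compute $|pq|^2$. In case~1, $|pq|^2=\big((1-a)-(1-b)\big)^2+1^2=(b-a)^2+1=a^2+b^2-2ab+1$, which gives the first formula after taking square roots. In case~2, $|pq|^2=\big((1-a)-1\big)^2+\big(1-(1-b)\big)^2=a^2+b^2$, which gives the second. Both amount to reading off the horizontal and vertical separations of $p$ and $q$ and invoking the Pythagorean Theorem on the axis-aligned right triangle they span; in case~1 one leg has the fixed length $1$ (the distance between the two parallel sides) and the other has length $|b-a|$, while in case~2 the two legs have lengths $a$ and $b$.

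There is essentially no hard step here: the only point requiring care is matching the clockwise labelling of the vertices with the orientation in the figures, so that the side containing $q$ and the measuring endpoints ($v_2$ in both cases, and $v_3$ in case~1) are identified exactly as stated; once the coordinates are pinned down, the two displayed expressions for $|pq|$ follow immediately from the distance formula, with no case analysis beyond the two configurations already separated in the statement.
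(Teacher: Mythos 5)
Your proof is correct and takes the same route the paper intends: the paper simply remarks that the observation "can be shown using the Pythagorean Theorem," and your coordinate computation is exactly that argument carried out explicitly, with the legs $|b-a|$ and $1$ in case~1 and $a$ and $b$ in case~2. No discrepancy to report.
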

	
	\begin{figure}[tb]
		\centering
   		\begin{subfigure}[b]{0.44\textwidth}
   			\centering
	    	\includegraphics{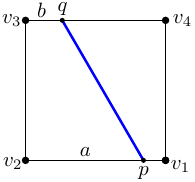}
	    	\caption{ Point $ q $ belongs to edge $ [v_3,v_4] $.}
	    	\label{fig:46}
    	\end{subfigure}
    	\qquad\quad
   		\begin{subfigure}[b]{0.44\textwidth}
   			\centering
	    	\includegraphics{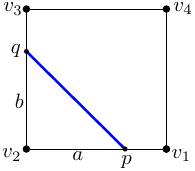}
	    	\caption{Point $ q $ belongs to edge $ [v_2,v_3] $.}
	    	\label{fig:47}
    	\end{subfigure}
    	\caption{The blue segment represents the subpath of $ \mathit{SP_w}(s,t) $ between two points $ p $ and $ q $ in a square cell.}
	\end{figure}
 
        Moreover, Observation~\ref{obs:reduce3} implies that a worst-case upper bound on the ratio $ \frac{\lVert X(u_j, u_{j+1})\rVert}{\lVert \mathit{SP_w}(u_j,u_{j+1}) \rVert} $ is {\em not} obtained when the paths $ \mathit{SP_w}(u_j,u_{j+1}) $ and $ X(u_j, u_{j+1}) $ bound polygons of type~$ P_2^\ell, \ \ell \geq 1 $.
	
	\begin{observation}
	    \label{obs:reduce3}
	    Let $ u_j, u_{j+1} $ be two consecutive points where $ \mathit{SP_{w}}(s,t) $ and $ X(s,t) $ coincide. Let $ u' $ be the last corner of a cell in $ \mathcal{S} $ intersected by $ X(u_j, u_{j+1}) $. A polygon of type~$ P_2^{\ell} $, for $ \ell \geq 1$, can be seen as a particular case of a polygon of type~$ P_1^{\ell+1} $, where the distance $ |u'u_{j+1}| $ tends to $ 0 $, see Figure~\ref{fig:weakly3square}. Thus, the ratio~$ \frac{\lVert X(u_j, u_{j+1})\rVert}{\lVert \mathit{SP_w}(u_j, u_{j+1}) \rVert} $ when $ \mathit{SP_w}(u_j,u_{j+1}) $ and $ X(u_j, u_{j+1}) $ bound a polygon of type~$ P_2^{\ell} $ is at most as large as when intersecting a polygon of type~$ P_1^{\ell+1} $.
	\end{observation}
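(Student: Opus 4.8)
The plan is to realise every polygon of type~$P_2^{\ell}$ as a limit of a one-parameter family of polygons of type~$P_1^{\ell+1}$, and then to transfer the bound on the ratio by continuity. First I would record the structure of a polygon of type~$P_2^{\ell}$ induced by two consecutive coincidence points $u_j$ and~$u_{j+1}$: by Proposition~\ref{prop:unique}, together with Definitions~\ref{def:weakly1square} and~\ref{def:weakly3square}, $\mathit{SP_w}(u_j,u_{j+1})$ crosses the interiors of $\ell$ aligned cells $S_i,\dots,S_m$; the crossing path $X(u_j,u_{j+1})$ runs through the corners of those cells lying on the side of~$\overrightarrow{u_ju_{j+1}}$ prescribed in Definition~\ref{def:weakly3square}; and $\mathit{SP_w}(u_j,u_{j+1})\cup X(u_j,u_{j+1})$ meets the interiors of $k+1=3$ edges of the last cell~$S_m$. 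In this case the last segment of $\mathit{SP_w}(u_j,u_{j+1})$ does not properly intersect an edge, so its endpoint $u_{j+1}$ lies in no edge interior, hence it is a corner of $S_m$, and it coincides with $u'$, the last corner of a cell intersected by $X(u_j,u_{j+1})$. (The case $\ell=1$, where the $P_2^{1}$ polygon comes from Definition~\ref{def:weakly1square}, is handled in the same way.)

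Next I would build the $P_1^{\ell+1}$ family. Let $S_{m+1}$ be a cell incident to $u'$ that shares an edge $e'$ with $S_m$ and into which $\mathit{SP_w}(u_j,u_{j+1})$ can be prolonged across~$e'$. For a small $\delta>0$, I keep $u_j$ and the cells $S_i,\dots,S_m$ fixed, replace the endpoint by the point $u_{j+1}(\delta)$ in the relative interior of an edge of $S_{m+1}$ incident to $u'$ at distance $\delta$ from $u'$, and take the polygonal path from $u_j$ to $u_{j+1}(\delta)$ that agrees with $\mathit{SP_w}(u_j,u_{j+1})$ until it reaches $S_m$ and then proceeds straight into $S_{m+1}$ and on to $u_{j+1}(\delta)$; this path can be realised by a shortest path between $u_j$ and $u_{j+1}(\delta)$ for a suitable non-negative weight assignment, just as the original $P_2^{\ell}$ configuration is, since the weights are unconstrained. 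By Definition~\ref{def:crossingsquare}, parts~3 and~4, the crossing path of this new configuration coincides with $X(u_j,u_{j+1})$ up to $u'$ and then continues along edges incident to $u'$; truncated at $u_{j+1}(\delta)$ it yields a path whose last corner is still $u'$. I would then check that, for all sufficiently small $\delta>0$, the resulting polygon satisfies the three conditions of Definition~\ref{def:weakly3square} with $m$ replaced by $m+1$ and $k=1$, so it is of type~$P_1^{\ell+1}$ (see Figure~\ref{fig:weakly3square}); and that, as $\delta\to0^{+}$, $u_{j+1}(\delta)\to u'$, the portion of $\mathit{SP_w}$ inside $S_{m+1}$ shrinks to a point, $|u'u_{j+1}(\delta)|\to0$, and this family degenerates precisely to the given $P_2^{\ell}$ polygon.

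Finally I would transfer the bound. Along this family, both $\lVert X(u_j,u_{j+1}(\delta))\rVert$ and $\lVert \mathit{SP_w}(u_j,u_{j+1}(\delta))\rVert$ are continuous in $\delta$ on some interval $[0,\delta_0]$ --- each is a finite sum of weighted Euclidean lengths of segments whose endpoints move continuously with $\delta$ --- and the denominator stays bounded away from~$0$ because $u_j\neq u'$. Hence $\frac{\lVert X(u_j,u_{j+1}(\delta))\rVert}{\lVert \mathit{SP_w}(u_j,u_{j+1}(\delta))\rVert}$ is continuous at $\delta=0$, so its value for the $P_2^{\ell}$ polygon equals $\lim_{\delta\to0^{+}}$ of the ratios attained by polygons of type~$P_1^{\ell+1}$, and is therefore no larger than the worst-case value of this ratio over polygons of type~$P_1^{\ell+1}$. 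This gives the claim, and in particular shows that polygons of type~$P_2^{\ell}$ can be disregarded when establishing the worst-case upper bound on $\frac{\lVert X(u_j,u_{j+1})\rVert}{\lVert \mathit{SP_w}(u_j,u_{j+1})\rVert}$.

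The hard part will be the construction of the second paragraph: making precise how $\mathit{SP_w}$ and $X$ are prolonged into $S_{m+1}$ so that the new configuration genuinely fulfils Definition~\ref{def:weakly3square} --- in particular that $X$ picks up exactly the one-sided corners of all $\ell+1$ cells and that the prolonged path is a shortest path for some non-negative weight assignment --- and dealing with the handful of sub-cases determined by which corner of $S_m$ plays the role of $u'$ and on which side of~$\overrightarrow{u_ju_{j+1}}$ the crossing path runs. Once the family is set up correctly, the continuity argument of the third paragraph is routine.
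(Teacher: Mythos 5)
Your proposal is correct and takes essentially the same approach as the paper: the paper states this as an Observation with no separate proof, its justification being exactly the degenerate-limit viewpoint (a $P_2^{\ell}$ polygon is the limit of $P_1^{\ell+1}$ polygons as $|u'u_{j+1}|\to 0$), and your one-parameter family $u_{j+1}(\delta)$ together with the continuity of the weighted lengths is a faithful formalization of that idea.
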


        Before obtaining an upper bound on the ratio $ \frac{\lVert \mathit{X}(s, t)\rVert}{\lVert \mathit{SP_w}(s, t) \rVert} $ when $ X(s,t) $ and $ \mathit{SP_w}(s,t) $ bound a polygon of type~$ P_1^1$ or~$ P_1^\ell$, we first introduce the notion of $ P_1^1 $- and $ P_1^\ell$-triple of cells, see Figures~\ref{fig:newconfigsquare3} and \ref{fig:newconfigsquare4}, respectively.
	
	\begin{definition}
	    \label{def:Ptriples}
	    A \emph{$ P_1^\ell$-triple}, for $ \ell \geq 1 $, from a vertex $ s $ to a vertex $ t $ is defined as a set of $ \ell+4$ consecutive cells $ S_1, \ldots, S_{\ell+4}$ intersected by $ \mathit{SP_w}(s, t) $ with the following properties:
	    \begin{itemize}
                \setlength\itemsep{0em}
	        \item $ s $ is the vertex common to $ S_1$ and $ S_2$, and not adjacent to $ S_3$.
	        \item $ t $ is the vertex common to $ S_{\ell+3}$ and $ S_{\ell+4}$, and not adjacent to $ S_{\ell+2}$.
	        \item The union of $ \mathit{SP_w}(s, t) $ and $ X(s,t) $ determines two polygons of type~$ P_1^1$, and one polygon of type~$ P_1^\ell$ in between. 
	    \end{itemize}
	\end{definition}
	
	$ P_1^1$- and $ P_1^\ell$-triples are useful when $ \mathit{SP_w}(s, t) $ bounds several instances of a polygon of type $ P_1^1$ or $ P_1^\ell$, respectively, because they allow us to lessen the amount of cases we need to analyze. In the following, given that the ratio $ \frac{\lVert \mathit{X}(s, t)\rVert}{\lVert \mathit{SP_w}(s, t) \rVert} $ is maximized when $ \mathit{SP_w}(s, t) $ and $ X(s,t) $ bound a polygon $ P' $ of type~$ P_1^\ell, \ell \geq 1$, we create a specific instance of a $ P_1^\ell$-triple with the same properties as the former instance containing $ P'$, and upper-bound the ratio $ \frac{\lVert \mathit{SGP_w}(s, t)\rVert}{\lVert \mathit{SP_w}(s, t) \rVert} $ on that triple. The reason for creating the $ P_1^\ell$-triples is that they allow us to perform the optimizations over a simplified problem.
	
	\begin{figure}[tb]
        \centering
		\begin{subfigure}[t]{0.42\textwidth}
   		    \centering
	    	\includegraphics{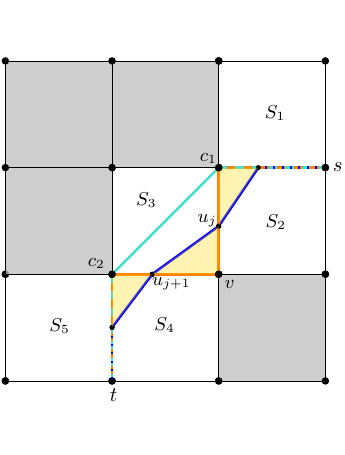}
	    	\caption{$ P_1^1 $-triple from $ s $ to $ t $ is represented in white. Shortcut path $ \Pi_3^1(s,t) $ is depicted in cyan. Observe that $ \mathit{SP_{w}}(s,t)$, $ X(s,t) $ and $ \Pi_3^1(s,t) $ coincide when~$ \mathit{SP_{w}}(s,t)$ coincides with the edges of the cells.}
	    	\label{fig:newconfigsquare3}
    	\end{subfigure}
	    \qquad\quad
		\begin{subfigure}[t]{0.42\textwidth}
		    \centering
	    	\includegraphics{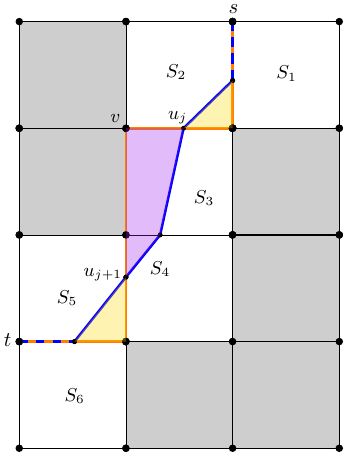}
	    	\caption{White cells represent the $ P_1^2 $-triple from $ s $ to $ t $.}
	    	\label{fig:newconfigsquare4}
        \end{subfigure}
    	\caption{$ \mathit{SP_w}(s, t) $, and $ X(s,t) $ are depicted in blue and orange, respectively.}
	\end{figure}
	
	\begin{lemma}
	    \label{lem:Ptriple}
	    Let $ v $ be the vertex common to $ S_2 $ and $ S_3 $, and not adjacent to $ S_1$. For any polygon $ P'$ of type~$ P_1^\ell, \ \ell \geq 1 $, a $ P_1^\ell$-triple can be defined such that $ v $ is the corner where the crossing path makes a right angle turn in $P'$, and the weights $ \omega_2, \ldots, \omega_{\ell+3} $ of the cells in the $ P_1^\ell$-triple remain the same as in $ P' $. We say that this $ P_1^\ell$-triple \emph{corresponds} to the polygon $ P'$.
	\end{lemma}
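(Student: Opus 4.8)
The plan is to prove the lemma constructively: starting from the polygon $P'$, I will complete it with four auxiliary cells into a weighted mesh whose shortest path has exactly the structure demanded by Definition~\ref{def:Ptriples}, while leaving the weights of the cells that $P'$ interacts with untouched.

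First I would fix notation for $P'$. Let $u_j$ and $u_{j+1}$ be the consecutive coincidence points of $\mathit{SP_w}(s,t)$ and $X(s,t)$ that induce $P'$, and let $S_i,\dots,S_m$, with $\ell=m-i+1$, be the cells whose interior is crossed by $\mathit{SP_w}(u_j,u_{j+1})$ (Definition~\ref{def:weakly3square}). By the proof of Proposition~\ref{prop:unique}: if $\ell\ge 2$ then $u_j$ lies in the interior of the edge $e$ shared by $S_i$ and its predecessor, $\mathit{SP_w}(u_j,u_{j+1})$ leaves each of $S_i,\dots,S_{m-1}$ through the edge parallel to its entry edge, and meets a non-parallel edge of $S_m$; if $\ell=1$ then $u_j$ and $u_{j+1}$ lie on two adjacent edges of $S_i$ and $e$ denotes the one containing $u_j$. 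In both cases $X(u_j,u_{j+1})$ makes a single right-angle turn, at the endpoint $v$ of $e$ on the side of $\overrightarrow{u_ju_{j+1}}$ prescribed by Definition~\ref{def:weakly3square}; note that $v$ is a corner of both $S_i$ and the cell lying across $e$.

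Next I would build the triple. Relabel the $\ell+2$ cells involved in $P'$ --- the crossed cells $S_i,\dots,S_m$, the cell across the entry edge $e$ of $S_i$, and the cell across the exit edge of $S_m$ --- as $S_2,S_3,\dots,S_{\ell+2},S_{\ell+3}$, keeping their weights, so that $\omega_2,\dots,\omega_{\ell+3}$ are exactly the weights of $P'$, $e=S_2\cap S_3$, and $v$ is the vertex common to $S_2$ and $S_3$. Attach a new square cell $S_1$ meeting $S_2$ only at the corner $s$ of $S_2$ diagonally opposite $v$, and (by the analogous rule at the other end) a cell $S_{\ell+4}$ meeting $S_{\ell+3}$ only at a corner $t$. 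Finally assign weight $\infty$ to $S_1$, to $S_{\ell+4}$, and to every cell outside $\bigcup_{r=1}^{\ell+4}S_r$. Then $s$ is common to $S_1,S_2$ and is not a corner of $S_3$ (being a corner of $S_2$ not on $e$), so $s$ is not adjacent to $S_3$; since $v\neq s$, $v$ is not adjacent to $S_1$; and symmetrically for $t$ and $S_{\ell+2}$. Thus the first two items of Definition~\ref{def:Ptriples} hold by construction, and everything reduces to the third.

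The verification is where I expect the real difficulty. One must show that $\mathit{SP_w}(s,t)$ in this mesh passes through $S_2,S_3,\dots,S_{\ell+3}$ in this order, crossing $e$ at an interior point and leaving $S_{\ell+2}$ through a side edge, so that, together with its crossing path, it bounds two polygons of type $P_1^1$ (in $S_2$ and in $S_{\ell+3}$) and one polygon of type $P_1^\ell$ (in $S_3,\dots,S_{\ell+2}$) whose crossing-path turn, by the choice of $s$ as the corner diagonally opposite $v$, is at $v$. I would obtain this from subpath optimality of $\mathit{SP_w}(s,t)$ together with Snell's law of refraction at the interior edges, which pins the middle piece down once it is known to enter $S_3$ through $e$; the pieces in $S_2$ and $S_{\ell+3}$ then join a corner to an interior edge point, which by Definition~\ref{def:crossingsquare}, parts 2--3, produces the end polygons of type $P_1^1$. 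The main obstacle is ruling out degeneracies and shortcuts: that $\mathit{SP_w}(s,t)$ does not enter $S_1$ or $S_{\ell+4}$, does not bypass some $S_r$ through a corner shared with a non-consecutive cell, and does not run through $s$ or $t$ in a way that collapses an end polygon to type $P_0^1$ or displaces the turn of the crossing path. This is handled by using that the weights $\omega_2,\dots,\omega_{\ell+3}$ were inherited from an actual subpath of a shortest path (so that crossing each $S_r$ is never globally too expensive), together with the $\infty$ weights outside and a separate check of the symmetric sub-cases ($\ell=1$ versus $\ell\ge2$, and the two possible turn directions of $X$).
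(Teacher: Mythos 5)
Your construction diverges from the paper's at the one point where the lemma actually has content, and the divergence creates a real gap. You freeze the two end cells at weight $\infty$ and then hope to \emph{verify} that the shortest path of the resulting instance happens to have the required structure. The paper instead treats $\omega_1$ and $\omega_{\ell+4}$ as free parameters and \emph{solves} for them using the system of equations given by Snell's law, precisely so that the original subpath $\mathit{SP_w}(u_j,u_{j+1})$ extends to a genuine shortest path of the new instance (entering $S_3$ at the same point, with the same direction, after an initial leg from the corner $s$). That degree of freedom is what guarantees the conclusion the paper needs and states: the triple's middle polygon is \emph{the same polygon} $P'$, with the same geometry, not merely a polygon of the same combinatorial type. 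This is what makes the triple ``correspond'' to $P'$ and lets the later lemmas transfer the worst-case ratio from an arbitrary polygon to a triple.

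With $s$ pinned at the corner of $S_2$ diagonally opposite $v$ and $\omega_1=\infty$, the direction in which the new shortest path arrives at the edge $e=S_2\cap S_3$ is dictated by the segment from $s$ to the (globally optimized) crossing point, and Snell's law at $e$ then forces a refracted direction in $S_3$ that generically differs from the original one at $u_j$. Nothing in your argument controls this: the new path may cross $e$ elsewhere, may leave $S_3$ through an adjacent edge instead of the parallel one (changing $\ell$), may produce a polygon of type $P_2^{\ell'}$, or may turn at a corner other than $v$. Your appeal to ``the weights were inherited from an actual subpath of a shortest path'' does not help, because that fact constrains the original path's behaviour from its own entry point and direction, not the behaviour of a shortest path launched from a different point $s$. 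So the third bullet of Definition~\ref{def:Ptriples} --- the only nontrivial one, as you yourself note --- is not established, and the step cannot be repaired without reintroducing the freedom to tune $\omega_1$ and $\omega_{\ell+4}$ (or some equivalent device) as the paper does.
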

	
	\begin{proof}
	    Let $ u_j, u_{j+1}$ be the points where $ \mathit{SP_w}(s, t) $ respectively enters and leaves~$ P'$, and let $ u_j \in S_{j'-1} \cap S_{j'} $ and $ u_{j+1} \in S_{j'+\ell-1} \cap S_{j'+\ell} $, for some $ j' \leq n-\ell$. Consider a $ P_1^\ell$-triple intersected by a path $ \mathit{SP_w}(s', t')$ where the weights of the cells~$ S_2, \ldots, S_{\ell+3} $ are, respectively, the same as the weights of cells~$ S_{j'-1}, \ldots, S_{j'+\ell}$ in the former instance. The weights of the cells $ S_1$ and $ S_{\ell+4}$ are obtained by solving the system of equations given by Snell's law of refraction. By construction, this $ P_1^\ell$-triple is a valid instance of $ \mathit{SP_w}(s', t')$, and it intersects the cells~$ S_{3}, \ldots, S_{\ell+2}$ forming the same polygon $ P'$ as in the former instance.
	\end{proof}
		
	Lemma~\ref{lem:Ptriple} allows us to assume, from now on, that the maximum value the ratio $ \frac{\lVert \mathit{SGP_w}(s,t) \rVert}{\lVert \mathit{SP_w}(s,t) \rVert} $ can take is given when $ \mathit{SP_w}(s,t)$ intersects a $ P_1^1$-triple or a $ P_1^\ell$-triple.
	
	Now, using the previous observations and the following lemma, we prove that we only need to analyze polygons of type~$ P^1_1 $. Recall that $ P_1^1$ is the type of polygon where $ \mathit{SP_{w}}(u_j,u_{j+1}) \cup X(u_j,u_{j+1}) $ intersects the interior of 2 different edges of the cell containing $ u_j $ and $ u_{j+1} $, see Figure~\ref{fig:weakly1square}.
    
		\begin{lemma}
			\label{lem:16}
			 The ratio $ \frac{\lVert \mathit{SGP_{w}}(s,t)\rVert}{\lVert \mathit{SP_{w}}(s,t)\rVert} $ in a $ P^\ell_1 $-triple is upper-bounded by the ratio~$\frac{\lVert X(u_j, u_{j+1})\rVert}{\lVert \mathit{SP_w}(u_j, u_{j+1}) \rVert}$ when $ X(u_j,u_{j+1}) $ and $ \mathit{SP_{w}}(u_j,u_{j+1}) $ bound a polygon of type~$ P^1_1 $, for any pair $ (u_j, u_{j+1}) $ of consecutive points where $ \mathit{SP_{w}}(s,t) $ and $ X(s,t) $ coincide.
		\end{lemma}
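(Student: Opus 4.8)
The plan is to bound $\frac{\lVert \mathit{SGP_w}(s,t)\rVert}{\lVert \mathit{SP_w}(s,t)\rVert}$ on a $P_1^\ell$-triple by exhibiting one convenient grid path $\Pi$ from $s$ to $t$ inside the triple; since $\lVert\mathit{SGP_w}(s,t)\rVert\le\lVert\Pi(s,t)\rVert$, it is enough to show that $\frac{\lVert\Pi(s,t)\rVert}{\lVert\mathit{SP_w}(s,t)\rVert}$ never exceeds the worst-case value of $\frac{\lVert X(u_j,u_{j+1})\rVert}{\lVert\mathit{SP_w}(u_j,u_{j+1})\rVert}$ over polygons of type~$P_1^1$. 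For $\Pi$ I would take the shortcut path illustrated in Figure~\ref{fig:newconfigsquare3} (and its analogue when $\ell\ge 2$): it coincides with the crossing path $X(s,t)$ on the two extremal $P_1^1$ polygons of the triple, while on the central $P_1^\ell$ polygon it replaces the long detour of $X$ around the vertices of the $\ell$ aligned cells by a direct traversal of that corridor, still agreeing with $\mathit{SP_w}(s,t)$ wherever the latter runs along a cell edge.

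Next I would list the points where $\Pi$ and $\mathit{SP_w}(s,t)$ meet and apply the mediant inequality of Observation~\ref{thm:1} to $\Pi$ in place of $X$: it then suffices to bound $\frac{\lVert\Pi(u_j,u_{j+1})\rVert}{\lVert\mathit{SP_w}(u_j,u_{j+1})\rVert}$ separately on the three pieces into which the triple splits. On the two extremal pieces $\Pi$ is, by construction, a crossing path bounding a polygon of type~$P_1^1$, so these contribute exactly the quantity in the statement; hence everything reduces to the middle, corridor, piece. Here I would invoke Lemma~\ref{lem:Ptriple} to fix the weights of the corridor cells (and of $S_1$ and $S_{\ell+4}$, through Snell's law) so that $\mathit{SP_w}$ is a true shortest path refracting at every edge it crosses, Observation~\ref{obs:reduce3} to dispose of the $P_2^\ell$ sub-case by viewing it as a degenerate $P_1^{\ell+1}$, and Observation~\ref{obs:reduce0} to remove $P_0^1$.

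The crux is the corridor estimate: I must show that on the central $P_1^\ell$ polygon the shortcut $\Pi$ is not weighted-longer than $\mathit{SP_w}$, i.e. the ratio over that piece is at most $1$ and therefore below the $P_1^1$ bound. Geometrically, $\mathit{SP_w}(u_j,u_{j+1})$ has to cross all $\ell$ aligned cells and so has length at least the extent of the corridor in the direction shared by those cells, while $\Pi$ runs along one side of the corridor; the delicate part is the weighting, since $\Pi$, $X$ and $\mathit{SP_w}$ may acquire different effective weights on the edges and cells they use, and both orientations of the crossing path's right-angle turn (to the right or to the left of $\overrightarrow{u_ju_{j+1}}$) must be treated. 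I would carry this out by a cell-by-cell comparison along the corridor: in each cell $\mathit{SP_w}$ enters and leaves through parallel edges, so its weighted contribution there is at least that cell's weight times the unit width, and Snell's law (via Lemma~\ref{lem:Ptriple}) constrains how the weights of consecutive corridor cells compare; summing these cell-wise inequalities over the corridor yields the ratio $\le 1$ and completes the reduction to polygons of type~$P_1^1$.
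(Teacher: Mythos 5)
Your overall strategy (choose a convenient grid path $\Pi$, apply the mediant inequality of Observation~\ref{thm:1} to the pieces into which it splits the triple, and reduce everything to polygons of type~$P_1^1$) is the same as the paper's, but the step you yourself identify as the crux --- that the ratio on the central $P_1^\ell$ piece is at most $1$ --- is false, and the cell-by-cell comparison you sketch cannot establish it. Any grid path that meets $\mathit{SP_w}$ at the two points bounding the corridor piece must, since those points lie in the interior of cell edges in general, first travel vertically along an edge to reach a corner before it can run along one side of the corridor, and again at the other end. Your estimate only compares the horizontal run (at most $\omega_{\ell'}$ per cell) against the width-$1$ crossing of each cell (at least $\omega_{\ell'}$ per cell) and ignores these vertical segments entirely. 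Concretely, take all corridor weights equal and let $\mathit{SP_w}$ enter the corridor at height $h>0$ above the corner and leave through the bottom side, with horizontal extent $L$: the grid path over that piece has weighted length proportional to $h+L$ against $\sqrt{h^2+L^2}$ for $\mathit{SP_w}$, a ratio strictly greater than $1$ (approaching $\sqrt{2}$ when $h=L$). Snell's law does not rescue this: it relates weights to refraction angles and yields no ordering of consecutive corridor weights that would make the per-cell ratio collapse to $1$.

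The paper's proof deals with precisely this obstruction in a different way. It routes $\Pi(s,t)$ so that it changes sides of the corridor through the \emph{minimum-weight} corridor cell $S_i$ (Figure~\ref{fig:splitpath}), which splits the $P_1^\ell$ polygon into two sub-corridors whose unavoidable vertical segments are incident to $S_i$ and hence charged at the minimum weight $\omega_i$. It then redistributes the weighted length of each sub-corridor cell by cell, writing it as $\sum_{\ell'}\bigl(a_{\ell'}\omega_{\ell'}+c_{\ell'}\omega_i\bigr)$, where $a_{\ell'}$ and $c_{\ell'}$ are the projections of the $\mathit{SP_w}$-segment of length $b_{\ell'}=\sqrt{a_{\ell'}^2+c_{\ell'}^2}$ in cell $S_{\ell'}$; the mediant inequality and $\omega_i\leq\omega_{\ell'}$ then give the bound $\max_{\ell'}\frac{a_{\ell'}+c_{\ell'}}{b_{\ell'}}$, which is exactly the $P_1^1$ ratio form asserted in the statement --- not $1$. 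To repair your argument you would need this charging scheme (or an equivalent one): the conclusion of the lemma is that the corridor behaves no worse than a $P_1^1$ polygon, not that it contributes nothing.
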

		
		\begin{proof}
		Let $ (S_1, \ldots, S_{\ell+4}) $ be the ordered sequence of $ \ell+4 $ consecutive square cells intersected by $ \mathit{SP_w}(s,t) $ in a~$ P_1^\ell $-triple. Let $ (v^k_1, v^k_2, v^k_3, v^k_4), 2 \leq k \leq \ell+2$, be the four corners of cell $ S_k $, in clockwise order, where $ v_1^3 $ is the vertex common to $ S_2 $ and $ S_3 $, and not adjacent to $ S_1$, and $ v_m^{k} $ is on the same horizontal/vertical line as~$ v_m^{k'} $, for $ k' > k $ and $ m \in \{1,2,3,4\}$, see Figure~\ref{fig:48}.

        \begin{figure}[tb]
			\centering
			\includegraphics[width=\textwidth]{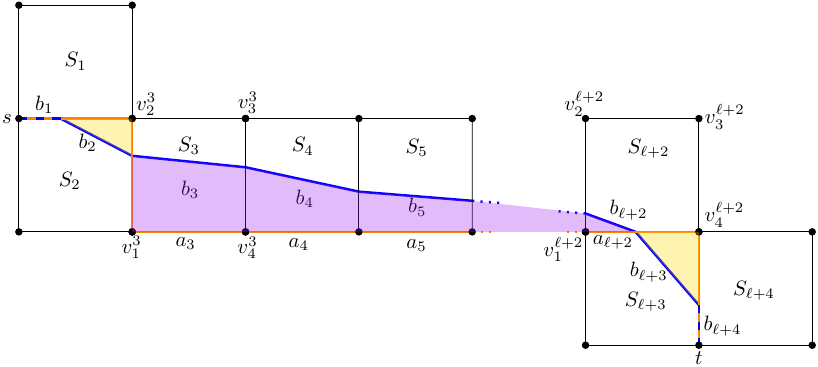}
			\captionof{figure}{$ P^{\ell}_1 $-triple. $ \mathit{SP_w}(s,t) $ and $ X(s,t) $ are depicted in blue and orange, respectively.}
			\label{fig:48}
		\end{figure}

        We now define a grid path that allows us to upper bound the worst-case ratio $ \frac{\lVert \mathit{SGP_{w}}(s,t)\rVert}{\lVert \mathit{SP_{w}}(s,t)\rVert} $ in a $ P^\ell_1 $-triple. This path intersects the edge adjacent to the cell with minimum weight. Thus, we split the polygon of type $ P^\ell_1 $ in two other polygons to demonstrate that the worst-case upper bound is obtained when $ \mathit{SGP_w}(s, t) $ intersects the interior of one cell.
			
		Let $ S_{i} $ be the cell with minimum weight among the cells $ S_3, \ldots, S_{\ell+2} $. Consider the grid path $ \mathit{\Pi}(s,t) = (s=v_2^2,v_2^3, \ldots, v_2^{i}, v_1^{i}) \cup X(v_4^{i}, t) $, see violet path in Figure~\ref{fig:splitpath}. We know that $ A = \frac{\lVert \mathit{SGP_w}(s, t)\rVert}{\lVert \mathit{SP_w}(s, t) \rVert} \leq \frac{\lVert \mathit{\Pi}(s, t)\rVert}{\lVert \mathit{SP_w}(s, t) \rVert} $. In addition, the union of $ \mathit{SP_w}(s, t) $ and $ \mathit{\Pi}(s, t)$ induces two polygons of type~$ P_0^1 $, one or two (if $i=3$ or $i=\ell+2$) polygons of type~$ P_1^1 $, one polygon of type~$ P_1^{i-2} $, and one polygon of type~$ P_1^{\ell-i+3} $.

        \begin{figure}[tb]
			\centering
			\includegraphics[width=\textwidth]{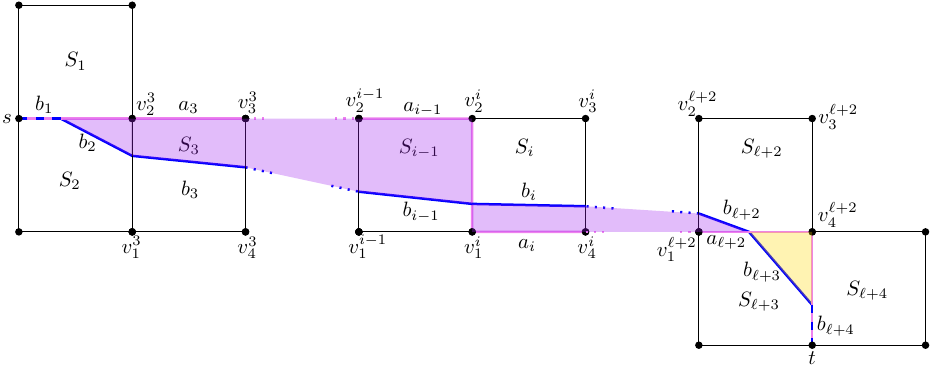}
			\captionof{figure}{Path $ \mathit{\Pi}(s,t) $ depicted in violet. Note the polygons of type $ P_1^{i-2}$ and $ P_1^{\ell-i+3}$, respectively to the left and to the right of the edge $ (v_1^i, v_2^i) $.}
			\label{fig:splitpath}
		\end{figure}
  
		By the mediant inequality, see Observation~\ref{thm:1}, we just need to consider the maximum among the ratios $ \frac{\lVert \mathit{\Pi}(u_j, u_{j+1})\rVert}{\lVert \mathit{SP_w}(u_j, u_{j+1}) \rVert} $, for $ u_j, u_{j+1} $ being two consecutive points where~$ \mathit{SP_w}(s, t) $ and~$ \mathit{\Pi}(s, t)$ coincide. By Observation~\ref{obs:reduce0}, the ratio is not maximized when~$ \mathit{SP_w}(s, t) $ bounds the polygons of type~$ P_0^1$. It will turn out that $P_1^1$ is the only relevant type of polygon. So, let us upper-bound the ratio when $ \mathit{SP_w}(s, t) $ bounds the polygon of type~$ P_1^{\ell-i+3}$. Note that the analysis is similar when $ \mathit{SP_w}(s, t) $ bounds the polygon of type~$ P_1^{i-2}$.
  
        Let $ u_j \in (v^i_1, v^i_2) $, and $ u_{j+1} \in (v^{\ell+2}_1, v^{\ell+2}_4] $ in the polygon of type~$ P^{\ell-i+3}_1 $. The weighted length of each segment of $ \mathit{\Pi}(u_j,u_{j+1}) $ bounding cell $ S_{\ell'}, \ i \leq \ell' \leq \ell+1 $, is $\min\{\omega_{\ell'}, \omega_{\ell''}\} \leq \omega_{\ell'} $, where $ \omega_{\ell''} $ is the weight of the cell sharing the edge of $ \mathit{\Pi}(u_j,u_{j+1}) $ with $ S_{\ell'} $. Thus,
		\[
			A \leq \frac{a\omega_i+\omega_i+\omega_{i+1}+\ldots+\omega_{\ell+1}+a_{\ell+2}\min\{\omega_{\ell+2}, \omega_{\ell+3}\}}{b_i\omega_i+b_{i+1}\omega_{i+1}+\ldots+b_{\ell+2}\omega_{\ell+2}},
		\]		
		where $ a = |u_jv^i_1| $, $ a_{\ell+2} = |v^{\ell+2}_1u_{j+1}| $, and $ b_{\ell'}, \ i \leq \ell' \leq \ell+2 $, is the length of the subpath of $ \mathit{SP_w}(u_j, u_{j+1}) $ traversing cell $ S_{\ell'} $. We can also write the ratio $ A $ as:

       \begin{equation*}
		    A \leq \frac{\sum_{\ell'=i}^{\ell+2}(a_{\ell'}\omega_{\ell'}+c_{\ell'}\omega_i)}{\sum_{\ell'=i}^{\ell+2}b_{\ell'}\omega_{\ell'}},
		\end{equation*}
  
        where $ a_{\ell'} $ and $ c_{\ell'} $ are, respectively, the lengths of the projection of $ b_{\ell'} $ on~$ [v_1^{\ell'}, v_4^{\ell'}] $ and $ [v_1^{\ell'}, v_2^{\ell'}] $. By the mediant inequality, we have that
			
		\begin{equation*}
		    A \leq \max_{i\leq\ell'\leq\ell+2}\frac{a_{\ell'}\omega_{\ell'}+c_{\ell'}\omega_i}{b_{\ell'}\omega_{\ell'}} \leq \max_{i\leq\ell'\leq\ell+2}\frac{a_{\ell'}\omega_{\ell'}+c_{\ell'}\omega_{\ell'}}{b_{\ell'}\omega_{\ell'}} = \max_{i\leq\ell'\leq\ell+2}\frac{a_{\ell'}+c_{\ell'}}{b_{\ell'}},
		\end{equation*}
		where the last inequality is obtained by taking into account that $ \omega_i $ is the minimum weight of the cells $ S_3, \ldots, S_{\ell+2} $. Then, the maximum value $ A $ is the ratio $ \frac{\lVert X(u_{j'}, u_{j'+1})\rVert}{\lVert \mathit{SP_w}(u_{j'}, u_{j'+1}) \rVert} $ when $ \mathit{SP_w}(u_{j'},u_{j'+1}) $ and $ X(u_{j'},u_{j'+1}) $ bound a polygon of type~$ P^{1}_1 $.
	\end{proof}
		
		Hence, according to Observations~\ref{obs:reduce0} and \ref{obs:reduce3} and Lemma~\ref{lem:16}, we upper-bound the ratio $ \frac{\lVert X(u_j, u_{j+1})\rVert}{\lVert \mathit{SP_w}(u_j, u_{j+1}) \rVert} $ for $X(u_j, u_{j+1})$ and $ \mathit{SP_w}(u_j, u_{j+1})$ by bounding a polygon of type $ P^1_1 $. However, it might happen that the ratio $ \frac{\lVert \mathit{X}(s, t)\rVert}{\lVert \mathit{SP_w}(s, t) \rVert} $ is not a tight upper bound for the ratio $ \frac{\lVert \mathit{SGP_w}(s, t)\rVert}{\lVert \mathit{SP_w}(s, t) \rVert} $. See, for instance, Figure~\ref{fig:badratiosquare2}, where $ \frac{\lVert \mathit{X}(s, t)\rVert}{\lVert \mathit{SP_w}(s, t) \rVert} = \sqrt{2} $, but $ \frac{\lVert \mathit{SGP_w}(s, t)\rVert}{\lVert \mathit{SP_w}(s, t) \rVert} =1 $. Note that $ \mathit{SGP_w}(s, t) $ is the straight-line segment between $ s $ and $ t $.
		
		\begin{figure}[tb]
			\centering
			\includegraphics[scale=1]{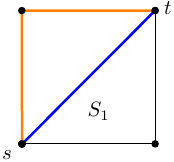}
			\captionof{figure}{The ratio between the weight of the crossing path $ X(s,t) $, depicted in orange, and the weight of $ \mathit{SP_w}(s,t) $, in blue, is $ \sqrt{2} $. Recall that the weight of the cells adjacent to $ S_1 $, which are not depicted, is infinity.}
			\label{fig:badratiosquare2}
		\end{figure}
		
		In order to overcome this difficulty we define a class of shortcut paths for square mesh.

		\begin{definition}
			\label{def:shortcut2square}
			Let $ (v^i_1, v^i_2, v^i_3, v^i_4) $ be the sequence of vertices of a cell $ S_i $, in clockwise or counterclockwise order. Let $ \mathit{SP_w}(s, t) $ enter cell $ S_i $ through the edge~$ [v_j^i, v_{\text{mod (j,4)}+1}^i] $, for some $ j\in\{1,\ldots,4\}$. If $ X(s,t) $ contains the subpath $ (v_j^i, v^i_{\text{mod (j,4)}+1}, v_{\text{mod(j+1, 4)}+1}^i) $, the \emph{shortcut path} $ \Pi_i^1(s,t) $ is defined as the grid path $ X(s,v^i_j)\cup (v^i_j, v^i_{\text{mod(j+1,4)}+1}) \cup X(v^i_{\text{mod(j+1, 4)}+1}, t) $, see cyan path in Figure~\ref{fig:shortcut1square}. Symmetrically, if $ X(s,t) $ contains the subpath $ (v_{\text{mod(j, 4)}+1}^i, v^i_j, v_{\text{mod(j+2, 4)}+1}^i) $, $ \Pi_i^1(s,t) $ is defined as the grid path $ X(s,v^i_{\text{mod(j, 4)}+1})\cup (v^i_{\text{mod(j, 4)}+1}, v^i_{\text{mod(j+2, 4)}+1}) \cup X(v^i_{\text{mod(j+2, 4)}+1}, t) $.
		\end{definition}

    \begin{figure}[tb]
		\centering
		\includegraphics{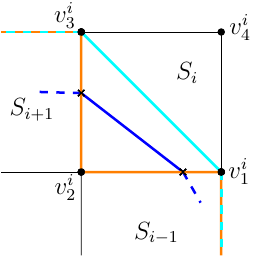}
	\caption{Polygon of type~$ P^1_1 $ intersecting cell~$ S_i $. Subpaths of the shortcut path $ \Pi_i^1(s,t) $ (cyan), the crossing path $ X(s,t) $ (orange), and $ \mathit{SP_w}(s,t) $ (blue).}
    	\label{fig:shortcut1square}
	\end{figure}		
		
	 So, now, we use the shortcut path specified in Definition~\ref{def:shortcut2square} to relate the weights of the square cells intersected by $\mathit{SP_w}(u_j,u_{j+1})$ in a $ P_1^1$-triple. 

		\begin{lemma}
			\label{lem:equallength2square}
			Let $ \frac{\lVert \mathit{SGP_w}(s',t') \rVert}{\lVert \mathit{SP_w}(s',t') \rVert} $ be the maximum ratio attained when $ \mathit{SP_w}(s',t') $ bounds a polygon of type~$ P_1^1 $. Consider the corresponding $ P_1^1$-triple between~$ s $ and $ t $. Let $ S_3 $ be the cell where the maximum ratio $ \frac{\lVert \mathit{X}(u_j,u_{j+1}) \rVert}{\lVert \mathit{SP_w}(u_j,u_{j+1}) \rVert} $ is attained, where $ u_j, \ u_{j+1} \in S_3 $ are two consecutive points where $ \mathit{SP_{w}}(s,t) $ and $ X(s,t) $ coincide. Then, $ \lVert X(s,t) \rVert = \lVert \Pi_3^1(s,t) \rVert $.
		\end{lemma}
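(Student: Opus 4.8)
The plan is to compare the two grid paths $X(s,t)$ and $\Pi_3^1(s,t)$ segment by segment and then use an extremal argument to pin down the weights.

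First I would isolate where the two paths differ. By Definition~\ref{def:shortcut2square}, $\Pi_3^1(s,t)$ is obtained from $X(s,t)$ by keeping $X(s,v_j^3)$ and $X(v_{j+2}^3,t)$ unchanged (indices of the corners of $S_3$ taken $\bmod\,4$) and replacing the two-edge subpath $(v_j^3,v_{j+1}^3,v_{j+2}^3)$ of $X(s,t)$ -- which consists of the edge through which $\mathit{SP_w}$ enters $S_3$ followed by the edge through which it leaves $S_3$ -- by the diagonal segment $(v_j^3,v_{j+2}^3)$ of $S_3$. Since $v_j^3$ and $v_{j+2}^3$ are opposite corners of a unit square, this diagonal has length $\sqrt2$ and traverses the interior of $S_3$, contributing $\sqrt2\,\omega_3$; the entry edge is shared with $S_2$ and contributes $\min\{\omega_2,\omega_3\}$, and the exit edge is shared with $S_4$ and contributes $\min\{\omega_3,\omega_4\}$. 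Everything else on the two paths coincides, so
\[ \lVert X(s,t)\rVert-\lVert\Pi_3^1(s,t)\rVert \;=\; \min\{\omega_2,\omega_3\}+\min\{\omega_3,\omega_4\}-\sqrt2\,\omega_3 \;=:\;\Delta, \]
and the statement is exactly the claim that $\Delta=0$.

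Next, since $X(s,t)$ and $\Pi_3^1(s,t)$ are both grid paths from $s$ to $t$, we have $\lVert\mathit{SGP_w}(s,t)\rVert\le\min\{\lVert X(s,t)\rVert,\lVert\Pi_3^1(s,t)\rVert\}$, so the quantity being maximized over $P_1^1$-triples is at most $\frac{\min\{\lVert X(s,t)\rVert,\lVert\Pi_3^1(s,t)\rVert\}}{\lVert\mathit{SP_w}(s,t)\rVert}$, and it suffices to show that at the extremal configuration the two grid-path bounds agree. One may assume $\omega_2,\omega_4\le\omega_3$ (capping a larger weight at $\omega_3$ leaves $\Delta$ and all relevant lengths of the grid paths inside $S_3$ unchanged), so $\Delta=\omega_2+\omega_4-\sqrt2\,\omega_3$. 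If $\Delta>0$, then $\lVert\Pi_3^1(s,t)\rVert<\lVert X(s,t)\rVert$ and the contribution of $\Pi_3^1(s,t)$ inside $S_3$ depends only on $\omega_3$; I would perturb $\omega_2$ downward and re-solve for $\omega_1$ via Snell's law, so that by Lemma~\ref{lem:Ptriple} the $P_1^1$-triple stays valid and the polygon realized in $S_3$, hence its polygon-wise ratio, is unchanged, and then argue that this perturbation decreases $\lVert\mathit{SP_w}(s,t)\rVert$ strictly while decreasing $\min\{\lVert X(s,t)\rVert,\lVert\Pi_3^1(s,t)\rVert\}$ by strictly less, contradicting maximality. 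The case $\Delta<0$ is symmetric, increasing $\omega_2$ toward $\omega_3$. Hence $\Delta=0$.

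The hard part is this last perturbation step. One must use Snell's law of refraction at the boundaries of $S_1,S_2,S_3$ to verify that changing $\omega_2$, with the compensating change of $\omega_1$, keeps $\mathit{SP_w}(s,t)$ a genuine shortest path with the same geometry inside $S_3$ (so that the polygon-wise ratio attained there, which is the global maximum, is preserved), and then carry out the bookkeeping that tracks $\lVert\mathit{SP_w}(s,t)\rVert$, $\lVert X(s,t)\rVert$ and $\lVert\Pi_3^1(s,t)\rVert$ simultaneously in order to certify the claimed inequality on the ratio. Everything else reduces to length computations on unit squares.
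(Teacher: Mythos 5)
Your setup is right and matches the paper's: you correctly isolate the difference of the two grid paths to the single cell $S_3$, so that the claim reduces to $\min\{\omega_2,\omega_3\}+\min\{\omega_3,\omega_4\}=\sqrt{2}\,\omega_3$, and you correctly choose a proof by contradiction in which a strict inequality between $\lVert X(s,t)\rVert$ and $\lVert \Pi_3^1(s,t)\rVert$ is used to perturb the weights and strictly increase the ratio, contradicting extremality. The gap is in the perturbation you chose. You perturb $\omega_2$ and ``re-solve for $\omega_1$ via Snell's law'' so that ``the polygon realized in $S_3$ \ldots is unchanged.'' That cannot work: the refraction condition at the edge $S_2\cap S_3$ involves only $\omega_2$ and $\omega_3$, and the segment of $\mathit{SP_w}(s,t)$ inside $S_2$ is pinned at the corner $s$, so its incidence angle at $u_j$ is determined by $s$ and $u_j$ alone. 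Changing $\omega_2$ while keeping $\omega_3$ fixed therefore forces either $u_j$ or the refracted direction in $S_3$ to move; $\omega_1$ never enters this condition because the path meets $S_1$ only at the corner $s$, so no compensating choice of $\omega_1$ restores it. The step you flag as ``the hard part'' is thus not merely unfinished --- as designed, it fails. Once the geometry in $S_3$ is allowed to move, your bookkeeping claim (denominator drops by strictly more than the numerator) is no longer a local computation: the crossing path is defined from the shortest path, the sensitivity of $\lVert X\rVert$ to $\omega_2$ is up to $2\varepsilon$ (two unit edges) versus $|su_j|\le\sqrt2$ for the denominator, so the sign of the effect genuinely depends on which minima are active, and your ``WLOG $\omega_2,\omega_4\le\omega_3$'' capping itself alters $\lVert\mathit{SP_w}\rVert$ and so is not a harmless normalization of the extremal instance.

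The paper avoids all of this by perturbing $\omega_3$ instead: after setting the weights of all cells not touched by $X(s,t)$ to infinity (to freeze the combinatorial structure), it writes the ratio $\frac{\lVert GP_w\rVert}{\lVert\mathit{SP_w}\rVert}$ for the shorter of the two grid paths and invokes that this ratio is a \emph{strictly monotone} (pseudolinear) function of each single cell weight, so one of the two directions of perturbing $\omega_3$ strictly increases it; no claim that the polygon in $S_3$ is preserved is needed, and a sufficiently small perturbation keeps the identity of the shorter grid path. If you want to salvage your route, you would have to either perturb $\omega_3$ as the paper does, or carry out an honest envelope/sensitivity analysis in $\omega_2$ that tracks the re-optimized shortest path and all candidate grid paths simultaneously --- at which point you are essentially reproving the monotonicity fact the paper cites.
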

		
		\begin{proof}
			We prove the result by contradiction, arguing that if there is at least one grid path $\mathit{GP_w}(s, t)$ among $ \{X(s, t), \Pi_3^1(s,t)\} $, whose weighted length is strictly less than the weighted length of the other grid path, then this instance cannot maximize $ \frac{\lVert \mathit{SGP_w}(s',t')\rVert}{\lVert \mathit{SP_w}(s',t')\rVert} $. We will show this by finding another assignment of weights $ w' $ for the cells $ S_1, \ldots, S_5 $, such that $ \frac{\lVert\mathit{GP_{w'}}(s,t)\rVert}{\lVert \mathit{SP_{w'}}(s,t)\rVert} > \frac{\lVert\mathit{GP_w}(s,t)\rVert}{\lVert \mathit{SP_w}(s,t)\rVert} $, proving that the given instance does not provide the maximum ratio.
		    
		    We first set to infinity the weight of all the cells that are not traversed by~$ X(s,t) $. This way, we ensure that when modifying the weights of some cells, the combinatorial structure of the shortest path is preserved. Let $ u_j, \ u_{j+1} $ be two consecutive points where $ X(s,t) $ and $ \mathit{SP_w}(s,t) $ coincide, see Figure~\ref{fig:newconfigsquare3}. The weight of the crossing path $ X(s,t) $ along the edges of $ S_3 $ is $ \min\{\omega_{2}, \omega_3\} + \min\{\omega_3, \omega_{4}\} $, and the weight of the shortcut path $ \Pi_3^1(s,t) $ through $ S_3 $ is $ \sqrt{2}\omega_{3} $. Recall that the edges of the cells have length $ 1 $, hence the length between two non-adjacent vertices of the same cell is $ \sqrt{2}$. Let $ c_1 $ be the common corner between $ S_{1} $ and $ S_{3} $, let $ c_2 $ be the common corner between $ S_{3} $ and $ S_{5} $.
		    
			\begin{itemize}
				\setlength\itemsep{0em}
				 \item If $ \mathit{GP_w}(s,t) = \Pi_3^1(s,t) $, then $ \lVert \Pi_3^1(s,t) \rVert < \lVert X(s,t) \rVert $, and we have that
				 \begin{equation*}
		            \frac{\lVert \mathit{GP_w}(s,t) \rVert}{\lVert \mathit{SP_w}(s,t) \rVert} = \frac{\lVert \Pi_3^1(s,c_1) \rVert + \sqrt{2}\omega_{3} + \lVert \Pi_3^1(c_2,t) \rVert}{\lVert \mathit{SP_w}(s,u_{j}) \rVert +  |u_{j}u_{j+1}|\omega_3 + \lVert \mathit{SP_w}(u_{j+1},t) \rVert}.
		        \end{equation*}
		        
				\item If $\mathit{GP_w}(s,t) = X(s,t) $, then $ \lVert X(s,t) \rVert < \lVert \Pi_3^1(s,t) \rVert $. Thus, we have
		        \begin{equation*}
		            \frac{\lVert \mathit{GP_w}(s,t) \rVert}{\lVert \mathit{SP_w}(s,t) \rVert} = \frac{\lVert X(s,c_1) \rVert + \min\{\omega_{2}, \omega_3\} + \min\{\omega_{3}, \omega_4\} + \lVert X(c_2,t) \rVert}{\lVert \mathit{SP_w}(s,u_{j}) \rVert + |u_ju_{j+1}|\omega_3 + \lVert \mathit{SP_w}(u_{j+1},t) \rVert}.
		        \end{equation*}
		        
	        \end{itemize}
		        
		    The ratio $ \frac{\lVert\mathit{GP_w}(s,t)\rVert}{\lVert \mathit{SP_w}(s,t)\rVert} $ is a strictly monotonic function for every $ \omega_k $~\cite{chew1984pseudolinearity, RAPCSAK1991353}. So, if this function is decreasing in the direction of~$ \omega_3 $, we can decrease the weight~$ \omega_3 $. Otherwise, we can increase the weight~$ \omega_3 $. In both cases, we can increase the ratio.
		        
		    Thus, we found another weight assignment $ w' $ such that $ \frac{\lVert\mathit{GP_{w'}}(s,t)\rVert}{\lVert \mathit{SP_{w'}}(s,t)\rVert} > \frac{\lVert\mathit{GP_w}(s,t)\rVert}{\lVert \mathit{SP_w}(s,t)\rVert} $. In addition, the change in $ \omega_3$ can be as small as needed so that the weight of~$ \mathit{GP_{w'}}(s,t) $ is not larger than that of the other grid path in $ \{X(s, t), \Pi_3^1(s,t)\} $ with the new weight assignment $ w'$, and hence, the shortest path among the two grid paths in the set does not change.
		\end{proof}
		
		Lemma~\ref{lem:equallength2square} implies that we just need to upper-bound the ratio~$ \frac{\lVert X(u_j, u_{j+1})\rVert}{\lVert \mathit{SP_w}(u_j, u_{j+1}) \rVert} $ when the shortest path intersects a~$ P_1^1$-triple and the weight of the grid paths intersecting this triple is the same. Lemma~\ref{lem:19} presents an upper bound on this ratio. Since the exact shape of $\mathit{SP_w}(s,t)$ is unknown, when computing the upper bound in Lemma~\ref{lem:19}, we will maximize the upper bound for any possible position of the points $ u_j$ and $ u_{j+1}$ on the edges of $ \mathcal{S} $.
		
		\begin{lemma}
			\label{lem:19}
			Let $ u_j, \ u_{j+1} \in S_3 $ be two consecutive points where a shortest path~$ \mathit{SP_w}(s,t) $ and the crossing path $ X(s,t) $ coincide in a $ P_1^1$-triple. If $ u_j $ and $ u_{j+1} $ induce a polygon of type~$ P_1^1$, and $ \lVert X(s,t)\rVert = \lVert \Pi_3^1(s,t)\rVert $, then $ \frac{\lVert X(u_j, u_{j+1})\rVert}{\lVert \mathit{SP_w}(u_j, u_{j+1}) \rVert} \leq \frac{2}{\sqrt{2+\sqrt{2}}} \approx 1.08 $.
		\end{lemma}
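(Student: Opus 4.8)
The plan is to turn the ratio $\lVert X(u_j,u_{j+1})\rVert/\lVert\mathit{SP_w}(u_j,u_{j+1})\rVert$ into an explicit function of two position parameters and two weight parameters, to collapse the weights into a single scalar constraint using the hypothesis $\lVert X(s,t)\rVert=\lVert\Pi_3^1(s,t)\rVert$, and then to finish with the Cauchy--Schwarz inequality followed by a one-variable optimisation. Fix coordinates so that $S_3=[0,1]^2$, and let $v$ be the corner of $S_3$ at which $X$ makes its right-angle turn (by Lemma~\ref{lem:Ptriple}, the vertex common to $S_2$ and $S_3$). The edge of $S_3$ through which $\mathit{SP_w}$ enters is the one shared with $S_2$, the edge through which it leaves is the one shared with $S_4$, and both are incident to $v$. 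Put $\lambda_1=|u_jv|$ and $\lambda_2=|vu_{j+1}|$. Since the polygon has type $P_1^1$, the subpath $\mathit{SP_w}(u_j,u_{j+1})$ meets the interior of $S_3$ only, so it is the straight chord $[u_j,u_{j+1}]$, of length $\sqrt{\lambda_1^2+\lambda_2^2}$ by Observation~\ref{obs:lengthsquares}(2); and $X(u_j,u_{j+1})$ is the two-segment path $u_j\to v\to u_{j+1}$ running along those two edges. Writing $\omega_3$ for the weight of $S_3$ and $\omega_2,\omega_4$ for the weights of the two cells of the $P_1^1$-triple adjacent to $S_3$ along those edges,
\[
\frac{\lVert X(u_j,u_{j+1})\rVert}{\lVert\mathit{SP_w}(u_j,u_{j+1})\rVert}=\frac{\min\{\omega_2,\omega_3\}\,\lambda_1+\min\{\omega_3,\omega_4\}\,\lambda_2}{\omega_3\,\sqrt{\lambda_1^2+\lambda_2^2}}.
\]

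Next I would invoke the hypothesis. By Definition~\ref{def:shortcut2square} the paths $X(s,t)$ and $\Pi_3^1(s,t)$ coincide outside $S_3$; inside $S_3$ the crossing path runs along two unit edges, of weighted length $\min\{\omega_2,\omega_3\}+\min\{\omega_3,\omega_4\}$, while the shortcut path runs along the diagonal of length $\sqrt2$ in the interior of $S_3$, of weighted length $\sqrt2\,\omega_3$. Hence $\lVert X(s,t)\rVert=\lVert\Pi_3^1(s,t)\rVert$ is exactly the scalar identity $\min\{\omega_2,\omega_3\}+\min\{\omega_3,\omega_4\}=\sqrt2\,\omega_3$. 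Assuming $\omega_3>0$ (if $\omega_3=0$ both weighted lengths in the ratio vanish and this polygon can be ignored), set $\alpha=\min\{\omega_2,\omega_3\}/\omega_3$ and $\beta=\min\{\omega_3,\omega_4\}/\omega_3$. Then $\alpha,\beta\le1$, and the identity reads $\alpha+\beta=\sqrt2$, which together with $\alpha,\beta\le1$ forces $\alpha,\beta\in[\sqrt2-1,1]$.

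By the Cauchy--Schwarz inequality applied to the vectors $(\alpha,\beta)$ and $(\lambda_1,\lambda_2)$,
\[
\frac{\lVert X(u_j,u_{j+1})\rVert}{\lVert\mathit{SP_w}(u_j,u_{j+1})\rVert}=\frac{\alpha\lambda_1+\beta\lambda_2}{\sqrt{\lambda_1^2+\lambda_2^2}}\ \le\ \sqrt{\alpha^2+\beta^2},
\]
uniformly in the (unknown) positions of $u_j,u_{j+1}$, so in particular regardless of whether one of them is a corner of $S_3$. Finally $\alpha^2+\beta^2=(\alpha+\beta)^2-2\alpha\beta=2-2\alpha\beta$, and on $\alpha\in[\sqrt2-1,1]$ the product $\alpha\beta=\alpha(\sqrt2-\alpha)$ is minimised at the endpoints, where it equals $\sqrt2-1$; therefore $\alpha^2+\beta^2\le2-2(\sqrt2-1)=4-2\sqrt2$, and
\[
\frac{\lVert X(u_j,u_{j+1})\rVert}{\lVert\mathit{SP_w}(u_j,u_{j+1})\rVert}\le\sqrt{4-2\sqrt2}=\frac{2}{\sqrt{2+\sqrt2}},
\]
since $\bigl(2/\sqrt{2+\sqrt2}\bigr)^2=4/(2+\sqrt2)=2(2-\sqrt2)=4-2\sqrt2$.

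The optimisation is thus a one-line Cauchy--Schwarz argument, and I expect the main obstacle to be the bookkeeping that precedes it. One has to verify carefully, from Definition~\ref{def:crossingsquare} and the combinatorial structure of a $P_1^1$-triple, that $X(u_j,u_{j+1})$ really is the L-shaped two-segment path around the single corner $v$ (and not a longer arc of $X(s,t)$), that $\mathit{SP_w}(u_j,u_{j+1})$ is the straight chord of $S_3$, and that the scalar constraint extracted from $\lVert X(s,t)\rVert=\lVert\Pi_3^1(s,t)\rVert$ genuinely involves $\omega_2,\omega_3,\omega_4$ with unit edge-lengths for $X$ and a $\sqrt2$ diagonal for $\Pi_3^1$. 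The degenerate sub-cases where $u_j$ or $u_{j+1}$ is a corner of $S_3$ (so one of $\lambda_1,\lambda_2$ equals $1$) should be flagged, but the Cauchy--Schwarz bound covers them without modification, while the case of $u_j,u_{j+1}$ on a common edge is of type $P_0^1$ and is already excluded by Observation~\ref{obs:reduce0}. If one wishes to avoid Cauchy--Schwarz, the same bound follows by writing $\lambda_1=\cos\theta$, $\lambda_2=\sin\theta$ and maximising $\alpha\cos\theta+\beta\sin\theta$ over $\theta$ and then over $\alpha$.
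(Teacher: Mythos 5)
Your proposal is correct and follows the same skeleton as the paper's proof: the same parametrization of the ratio by the two leg lengths and the weights $\min\{\omega_2,\omega_3\}$, $\min\{\omega_3,\omega_4\}$, and the same extraction of the scalar constraint $\min\{\omega_2,\omega_3\}+\min\{\omega_3,\omega_4\}=\sqrt{2}\,\omega_3$ from the hypothesis $\lVert X(s,t)\rVert=\lVert\Pi_3^1(s,t)\rVert$ (the paper likewise notes that the remaining case $\omega_3\leq\omega_2,\omega_4$ forces $\omega_3=0$ and is trivial). Where you diverge is the finishing optimization: the paper splits into three cases according to the ordering of $\omega_2,\omega_4$ relative to $\omega_3$, reduces each to a ratio of the form $\frac{a(\sqrt2-1)+b}{\sqrt{a^2+b^2}}$, and maximizes over $a,b\in(0,1]$; you instead normalize to $\alpha+\beta=\sqrt2$ with $\alpha,\beta\leq 1$, apply Cauchy--Schwarz to get the position-free bound $\sqrt{\alpha^2+\beta^2}$, and minimize $\alpha\beta$ on $[\sqrt2-1,1]$. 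Your route is cleaner --- it absorbs the paper's case analysis into the single observation $\alpha,\beta\in[\sqrt2-1,1]$ and decouples the weight optimization from the position optimization --- and it yields the same tight constant since $(\lambda_1,\lambda_2)$ proportional to $(\alpha,\beta)$ is always realizable within $(0,1]^2$. Both arguments are complete; yours is the one I would keep.
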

		
		\begin{proof}
			 Let $ (v^3_1, v^3_2, v^3_3, v^3_4) $ be the sequence of vertices on the boundary of cell $ S_3 $ in clockwise order. Since $ u_j$ and $ u_{j+1}$ are two consecutive points where $ \mathit{SP_w}(s,t) $ and $ X(s,t) $ coincide, and they induce a polygon of type~$ P_1^1$, $ \mathit{SP_w}(s,t) $ enters $ S_3 $ from cell $ S_{2} $ through $u_j$, and $ \mathit{SP_w}(s,t) $ leaves $ S_3 $ and enters cell $ S_{4} $ through $ u_{j+1}$. Suppose, without loss of generality, that $ u_j \in [v^3_1, v^3_2) $, and $ u_{j+1} \in (v^3_2, v^3_3] $, see Figure~\ref{fig:shortcutP2square}. Let $ a, b, c $ be the lengths $ |u_jv^3_2|, |v^3_2u_{j+1}| $, and $ |u_ju_{j+1}| $, respectively. According to Observation~\ref{obs:lengthsquares}, part $ 2 $, $ c = \sqrt{a^2+b^2} $. We want to maximize the ratio $ \frac{\lVert X(u_j, u_{j+1})\rVert}{\lVert \mathit{SP_w}(u_j, u_{j+1}) \rVert} $ for all possible weight assignments of $ \omega_2, \omega_3$, and $ \omega_4$.

            \begin{figure}[tb]
		\centering
		\includegraphics{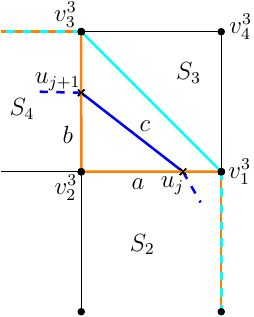}
	    	\caption{Polygon of type~$ P^1_1 $ intersecting cell $ S_3 $. Subpaths of the shortcut path $ \Pi_3^1(s,t) $ (cyan), the crossing path $ X(s,t) $ (orange), and $ \mathit{SP_w}(s,t) $ (blue).}
	    	\label{fig:shortcutP2square}
	\end{figure}
			 
			 Traversing cell $ S_3 $ there is also the grid path $ \Pi_3^1(s,t) $. Since $ \lVert X(s,t)\rVert = \lVert \Pi_3^1(s,t)\rVert \Rightarrow \min\{\omega_{2}, \omega_3\}+\min\{\omega_3, \omega_{4}\} = \sqrt{2}\omega_3 $. Thus,
			\begin{equation*}
				\begin{cases}
			 		\omega_{2}+\omega_{4} = \sqrt{2}\omega_3 & \mbox{if } \omega_{2}, \omega_{4} \leq \omega_3 \\
			 		\omega_{2} = (\sqrt{2}-1)\omega_3 & \mbox{if } \omega_{2} < \omega_3 < \omega_{4} \\
			 		\omega_{4} = (\sqrt{2}-1)\omega_3 & \mbox{if } \omega_{4} < \omega_3 < \omega_{2} \\
			 	\end{cases}.
			\end{equation*}
   
			Note that we do not take into account the case $ \omega_3 \leq \omega_{2}, \omega_{4} $ because it would lead to $ 4w_3=\sqrt{8}w_3 \Rightarrow w_3=0 $. If this is the case, the subpaths $ X(u_j,u_{j+1}) $ and $\mathit{SP_w}(u_j, u_{j+1})$ have the same weighted length, so the ratio is $1$.
   
   Then, the ratio $ R= \frac{\lVert X(u_j, u_{j+1})\rVert}{\lVert \mathit{SP_w}(u_j, u_{j+1})\rVert} $ when $ \mathit{SP_w}(u_j, u_{j+1}) $ and $ X(u_j, u_{j+1}) $ bound a polygon of type~$ P^1_1 $ is upper-bounded by:
			
			\begin{align*}
				R & = \frac{a\min\{\omega_{2}, \omega_{3}\}+b\min\{\omega_3, \omega_{4}\}}{c\omega_3}
				\begin{cases}
					\overbrace{=}^{\text{if } \omega_{2}, \omega_{4} \leq \omega_3} \frac{a\omega_{2}+b\omega_{4}}{c\omega_3}
					\begin{cases}
						\overbrace{\leq}^{\text{if } a \leq b} \frac{a(\sqrt{2}\omega_3-\omega_{4})+b\omega_{4}}{c\omega_3} = \\
						\overbrace{\leq}^{\text{if } b < a} \frac{a\omega_{2}+b(\sqrt{2}\omega_3-\omega_{2})}{c\omega_3} =
					\end{cases} \\
					\overbrace{=}^{\text{if } \omega_{2} < \omega_3 < \omega_{4}} \frac{a\omega_{2}+b\omega_3}{c\omega_3} \leq \frac{a(\sqrt{2}-1)\omega_3+b\omega_3}{c\omega_3} = \\
					\overbrace{=}^{\text{if } \omega_{4} < \omega_3 < \omega_{2}} \frac{a\omega_3+b\omega_{4}}{c\omega_3} \leq \frac{a\omega_3+b(\sqrt{2}-1)\omega_3}{c\omega_3} =
				\end{cases} \\ 
				& = \begin{cases}
					\begin{cases}
						= \frac{a\sqrt{2}\omega_3+(b-a)\omega_{4}}{c\omega_3} \overbrace{\leq}^{\omega_{4} \leq \omega_3} \frac{a\sqrt{2}\omega_3+(b-a)\omega_3}{c\omega_3} = \frac{a(\sqrt{2}-1)+b}{\sqrt{a^2+b^2}} \\
						= \frac{b\sqrt{2}\omega_3+(a-b)\omega_{2}}{c\omega_3} \overbrace{\leq}^{\omega_{2} \leq \omega_3} \frac{b\sqrt{2}\omega_3+(a-b)\omega_3}{c\omega_3} = \frac{b(\sqrt{2}-1)+a}{\sqrt{a^2+b^2}}
					\end{cases} \\
					= \frac{a(\sqrt{2}-1)+b}{\sqrt{a^2+b^2}} \\
					= \frac{a+b(\sqrt{2}-1)}{\sqrt{a^2+b^2}}
				\end{cases} \leq \frac{2}{\sqrt{2+\sqrt{2}}},
    		\end{align*}
    		where the last inequality in the four ratios is obtained by maximization over the values of~$ a, b \in (0,1] $.
		\end{proof}
	
	Finally, we have all the pieces to prove our main result.

	\begin{theorem}
		\label{thm:4}
		In $ G_{8\text{corner}} $, $ \frac{\lVert \mathit{SGP_w}(s, t)\rVert}{\lVert \mathit{SP_w}(s,t) \rVert} \leq  \frac{2}{\sqrt{2+\sqrt{2}}} \approx 1.08 $.
	\end{theorem}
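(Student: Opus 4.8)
The plan is to assemble the machinery developed so far into a single chain of inequalities. Since the vertex sequence $(X_1,\ldots,X_n)$ defines a grid path, we have $\lVert \mathit{SGP_w}(s,t)\rVert \le \lVert X(s,t)\rVert$, so it suffices to bound $\frac{\lVert X(s,t)\rVert}{\lVert \mathit{SP_w}(s,t)\rVert}$. By the mediant inequality (Observation~\ref{thm:1}) this ratio is at most $\max_i \frac{\lVert X(u_i,u_{i+1})\rVert}{\lVert \mathit{SP_w}(u_i,u_{i+1})\rVert}$ over consecutive coincidence points $u_i,u_{i+1}$, which reduces the problem to bounding the ratio on a single induced polygon.

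Next I invoke Proposition~\ref{prop:unique}: the only polygons that can arise are of type $P_k^1$, $k\in\{0,1,2\}$, or $P_k^\ell$, $k\in\{1,2\}$, $\ell\ge 2$. Observation~\ref{obs:reduce0} disposes of $P_0^1$ (ratio $1$), and Observation~\ref{obs:reduce3} shows that a polygon of type $P_2^\ell$ is a degenerate limit of $P_1^{\ell+1}$, hence never worst-case. Only polygons of type $P_1^\ell$, $\ell\ge 1$, remain. For those, Lemma~\ref{lem:Ptriple} lets me replace the surrounding configuration by a $P_1^\ell$-triple carrying the same weights on the relevant cells (the two outer cells' weights being fixed via Snell's law so that the combinatorial structure of $\mathit{SP_w}$ is preserved), and Lemma~\ref{lem:16} then bounds $\frac{\lVert \mathit{SGP_w}(s,t)\rVert}{\lVert \mathit{SP_w}(s,t)\rVert}$ on such a triple by $\frac{\lVert X(u_j,u_{j+1})\rVert}{\lVert \mathit{SP_w}(u_j,u_{j+1})\rVert}$ when $X(u_j,u_{j+1})$ and $\mathit{SP_w}(u_j,u_{j+1})$ bound a polygon of type $P_1^1$.

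It remains to bound that last ratio. Here I use Lemma~\ref{lem:equallength2square} to assume, without loss of generality, that in the corresponding $P_1^1$-triple the crossing path and the shortcut path have equal weighted length, $\lVert X(s,t)\rVert = \lVert \Pi_3^1(s,t)\rVert$; otherwise one could perturb the weight $\omega_3$ of the central cell to strictly increase the ratio. Under that hypothesis, Lemma~\ref{lem:19} yields $\frac{\lVert X(u_j,u_{j+1})\rVert}{\lVert \mathit{SP_w}(u_j,u_{j+1})\rVert} \le \frac{2}{\sqrt{2+\sqrt{2}}}$. Composing all the bounds above gives $\frac{\lVert \mathit{SGP_w}(s,t)\rVert}{\lVert \mathit{SP_w}(s,t)\rVert} \le \frac{2}{\sqrt{2+\sqrt{2}}}$, as claimed.

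The substantive content is entirely in the supporting lemmas, so the proof of the theorem itself is essentially bookkeeping. The point requiring care is checking that the reductions compose correctly: that passing to a $P_1^\ell$-triple (Lemma~\ref{lem:Ptriple}) and then splitting it at the minimum-weight cell (Lemma~\ref{lem:16}) does not change which weight configuration is worst, that the equal-length normalization of Lemma~\ref{lem:equallength2square} is compatible with the position-maximization over $a,b\in(0,1]$ carried out in Lemma~\ref{lem:19}, and that boundary cases (e.g.\ $u_j$ a corner, handled via Proposition~\ref{prop:onlyvertex}) are subsumed. I expect this consistency check to be the only delicate part.
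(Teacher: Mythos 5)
Your proposal is correct and follows essentially the same route as the paper's proof: reduce $\lVert \mathit{SGP_w}\rVert$ to the crossing path, apply the mediant inequality (Observation~\ref{thm:1}), eliminate all polygon types except $P_1^1$ via Observations~\ref{obs:reduce0} and~\ref{obs:reduce3} and Lemma~\ref{lem:16}, and conclude with Lemmas~\ref{lem:equallength2square} and~\ref{lem:19}. The only difference is that you cite Proposition~\ref{prop:unique} and Lemma~\ref{lem:Ptriple} explicitly where the paper leaves them implicit, which if anything makes the bookkeeping more transparent.
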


	\begin{proof}
		Let $ \mathit{SP_w}(s,t) $ be a weighted shortest path between two vertices $ s $ and $ t $ of a square mesh. Let~$ X(s,t) $ be the crossing path from~$ s $ to~$ t $ obtained from~$ \mathit{SP_w}(s,t) $. By Observation~\ref{thm:1}, $ \frac{\lVert X(s,t)\rVert}{\lVert \mathit{SP_w}(s,t)\rVert} \leq \frac{\lVert X(u_j,u_{j+1})\rVert}{\lVert \mathit{SP_w}(u_j,u_{j+1})\rVert} $, over all pairs~$ (u_j,u_{j+1}) $ of consecutive points where $ \mathit{SP_w}(s,t) $ and $ X(s,t) $ coincide.
		
		By Observations~\ref{obs:reduce0} and \ref{obs:reduce3}, and Lemma~\ref{lem:16}, the ratio $ \frac{\lVert X(u_j,u_{j+1})\rVert}{\lVert \mathit{SP_w}(u_j,u_{j+1})\rVert} $ for $ \mathit{SP_w}(u_j,u_{j+1}) $ and $ X(u_j,u_{j+1}) $ bounding polygons of type~$ P_0^1 $, $ P_1^\ell $, for $ \ell \geq 2 $, and~$ P_2^\ell $, for $ \ell \geq 1 $, does not generate the worst-case upper bound on the ratio $ \frac{\lVert X(s,t)\rVert}{\lVert \mathit{SP_w}(s,t)\rVert} $. Therefore, we only need to find an upper bound for polygons of type $ P_1^1$. According to Lemma~\ref{lem:equallength2square}, we know that if $ \mathit{SP_w}(s,t) $ and $ X(s,t) $ are the paths bounding polygons of type~$ P^1_1 $ that maximize the ratio $ \frac{\lVert X(u_j,u_{j+1})\rVert}{\lVert \mathit{SP_w}(u_j,u_{j+1})\rVert} $, then $ \lVert X(s,t)\rVert = \lVert \Pi_3^1(s,t)\rVert $. And, by Lemma~\ref{lem:19}, in this case we get that $ \frac{\lVert X(u_j,u_{j+1})\rVert}{\lVert \mathit{SP_w}(u_j,u_{j+1})\rVert} \leq \frac{2}{\sqrt{2+\sqrt{2}}} $.
		
		All this implies that $ \frac{\lVert X(s,t)\rVert}{\lVert \mathit{SP_w}(s,t)\rVert} $ is at most $ \frac{2}{\sqrt{2+\sqrt{2}}} $. Since $ \lVert \mathit{SGP_w}(s,t)\rVert \leq \lVert \mathit{X}(s,t)\rVert $, we have that  $ \frac{\lVert \mathit{SGP_w}(s,t)\rVert}{\lVert \mathit{SP_w}(s,t)\rVert} \leq \frac{2}{\sqrt{2+\sqrt{2}}} $.
	\end{proof}
	
	\subsection{Ratios $ \frac{\lVert \mathit{SGP_w}(s,t)\rVert}{\lVert \mathit{SVP_w}(s,t)\rVert} $ in $ G_{8\text{corner}} $ and $ \frac{\lVert \mathit{SVP_w}(s,t)\rVert}{\lVert \mathit{SP_w}(s,t)\rVert} $ for square cells}\label{sec:svpsquares}
	
	In this section we provide results for the ratios of the weighted shortest vertex path $ \mathit{SVP_w}(s,t) $, i.e., $ \frac{\lVert \mathit{SGP_w}(s,t)\rVert}{\lVert \mathit{SVP_w}(s,t)\rVert} $ and $ \frac{\lVert \mathit{SVP_w}(s,t)\rVert}{\lVert \mathit{SP_w}(s,t)\rVert} $. The length of a weighted shortest vertex path $ \mathit{SVP_w}(s,t) $ is an upper bound for the length of a weighted shortest path $ \mathit{SP_w}(s,t) $, so the upper bound on the ratio $ \frac{\lVert \mathit{SGP_w}(s, t)\rVert}{\lVert \mathit{SP_w}(s,t) \rVert} $ obtained in Theorem~\ref{thm:4} is an upper bound for $ \frac{\lVert \mathit{SGP_w}(s, t)\rVert}{\lVert \mathit{SVP_w}(s,t) \rVert} $. As a consequence, we obtain Corollary \ref{cor:3}.
	
	\begin{corollary}
		\label{cor:3}
		In $ G_{8\text{corner}}, \ \frac{\lVert \mathit{SGP_w}(s, t)\rVert}{\lVert \mathit{SVP_w}(s,t) \rVert} \leq \frac{2}{\sqrt{2+\sqrt{2}}} \approx 1.08 $.
	\end{corollary}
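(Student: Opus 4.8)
The plan is to obtain Corollary~\ref{cor:3} as an immediate consequence of Theorem~\ref{thm:4}, using only the elementary fact that a shortest vertex path can never be shorter than a true weighted shortest path between the same endpoints.

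First I would make explicit why $\lVert \mathit{SP_w}(s,t)\rVert \le \lVert \mathit{SVP_w}(s,t)\rVert$. The any-angle path $\mathit{SVP_w}(s,t)$ is a polygonal curve in the plane whose breakpoints are corners of the mesh, and its length is measured with the very same weighted-region metric used to define $\mathit{SP_w}(s,t)$. Since $\mathit{SP_w}(s,t)$ is, by definition, a curve of minimum weighted length over \emph{all} rectifiable $s$--$t$ curves in the mesh, and $\mathit{SVP_w}(s,t)$ is one particular such curve, minimality yields the inequality at once. This is precisely the observation stated in the paragraph preceding the corollary.

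Then I would chain this with Theorem~\ref{thm:4}. Assuming $\lVert \mathit{SVP_w}(s,t)\rVert > 0$ (in the degenerate case where $s$ and $t$ are joined by a zero-weight curve every ratio equals $1$ and there is nothing to prove), write
\[ \frac{\lVert \mathit{SGP_w}(s,t)\rVert}{\lVert \mathit{SVP_w}(s,t)\rVert} = \frac{\lVert \mathit{SGP_w}(s,t)\rVert}{\lVert \mathit{SP_w}(s,t)\rVert}\cdot\frac{\lVert \mathit{SP_w}(s,t)\rVert}{\lVert \mathit{SVP_w}(s,t)\rVert} \le \frac{2}{\sqrt{2+\sqrt{2}}}\cdot 1, \]
where the first factor is bounded by Theorem~\ref{thm:4} and the second by the previous step. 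That gives exactly the claimed bound. There is no genuine obstacle here; the only point needing a sentence of care is the identification of the metric used for the any-angle path with the one defining $\mathit{SP_w}(s,t)$, so that the minimality of the latter applies to it, and I would keep the proof to just these two lines.
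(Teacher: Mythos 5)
Your proposal is correct and coincides with the paper's own argument: the authors likewise observe that $\lVert \mathit{SP_w}(s,t)\rVert \le \lVert \mathit{SVP_w}(s,t)\rVert$ because the shortest vertex path is one particular rectifiable $s$--$t$ curve under the same weighted-region metric, and then deduce the corollary directly from Theorem~\ref{thm:4}. Your explicit factorization of the ratio and the remark about the degenerate zero-length case are just slightly more detailed renderings of the same two-line derivation.
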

	
	When the weights of the cells are in the set $ \{1, \infty\} $, a worst-case lower bound on the ratio $ \frac{\lVert \mathit{SGP_w}(s, t)\rVert}{\lVert \mathit{SVP_w}(s,t) \rVert} $ was proved to be $ \frac{2}{\sqrt{2+\sqrt{2}}} $ by Nash~\cite{Nash}. Thus, for general (non-negative) weights this value is a worst-case lower bound on the ratio $ \frac{\lVert \mathit{SGP_w}(s, t)\rVert}{\lVert \mathit{SVP_w}(s,t) \rVert} $ for $ G_{8\text{corner}} $.
	
	As a corollary of Theorem~\ref{thm:4}, we also obtain Corollary~\ref{cor:5}. The result comes from the fact that $ \lVert \mathit{SVP_w}(s,t) \rVert $ is a lower bound for $ \lVert \mathit{SGP_w}(s,t) \rVert $ since every grid path is a vertex path. Recall that $ \mathit{SVP_w}(s, t) $ does not use $ G_{8\text{corner}} $, but $ G_{\text{corner}} $.
	
	\begin{corollary}
		\label{cor:5}
		In a square tessellation, $ \frac{\lVert \mathit{SVP_w}(s, t)\rVert}{\lVert \mathit{SP_w}(s,t) \rVert} \leq \frac{2}{\sqrt{2+\sqrt{2}}} $.
	\end{corollary}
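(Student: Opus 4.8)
\textbf{Proof plan for Corollary~\ref{cor:5}.} The plan is to obtain this bound for free by chaining two facts that are already available: the structural containment of grid paths inside vertex paths, and the upper bound from Theorem~\ref{thm:4}. The key observation is that $G_{8\text{corner}}$ is a spanning subgraph of $G_{\text{corner}}$ (the latter is the complete graph on the corners of the tessellation), so every edge of $G_{8\text{corner}}$ joins two vertices of $G_{\text{corner}}$ and is therefore present in $G_{\text{corner}}$. Consequently any grid path between $s$ and $t$ is, in particular, a vertex path between $s$ and $t$, with exactly the same weighted length. Since $\mathit{SVP_w}(s,t)$ is by definition a shortest vertex path, this yields $\lVert \mathit{SVP_w}(s,t)\rVert \leq \lVert \mathit{SGP_w}(s,t)\rVert$.

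First I would state precisely that $\mathit{SGP_w}(s,t)$ is a vertex path (it is a path in $G_{8\text{corner}}\subseteq G_{\text{corner}}$), and hence $\lVert \mathit{SVP_w}(s,t)\rVert \leq \lVert \mathit{SGP_w}(s,t)\rVert$. Then I would invoke Theorem~\ref{thm:4}, which gives $\lVert \mathit{SGP_w}(s,t)\rVert \leq \frac{2}{\sqrt{2+\sqrt{2}}}\,\lVert \mathit{SP_w}(s,t)\rVert$. Composing the two inequalities gives
\[
\lVert \mathit{SVP_w}(s,t)\rVert \;\leq\; \lVert \mathit{SGP_w}(s,t)\rVert \;\leq\; \frac{2}{\sqrt{2+\sqrt{2}}}\,\lVert \mathit{SP_w}(s,t)\rVert,
\]
and dividing through by $\lVert \mathit{SP_w}(s,t)\rVert > 0$ (if $\lVert \mathit{SP_w}(s,t)\rVert = 0$ the statement is vacuous, since then all three quantities vanish) yields the claimed ratio bound.

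There is essentially no technical obstacle here: the corollary is a direct consequence of Theorem~\ref{thm:4} together with the trivial graph inclusion, so the only thing to be careful about is stating the inclusion $G_{8\text{corner}} \subseteq G_{\text{corner}}$ cleanly and noting that weighted lengths are inherited unchanged, so that a shortest vertex path cannot be longer than any particular grid path. One could remark that the same argument works verbatim for $G_{4\text{corner}}$, and — as the paper does for the hexagonal meshes in Corollary~\ref{cor:7} — this makes the bound on $\frac{\lVert \mathit{SVP_w}(s,t)\rVert}{\lVert \mathit{SP_w}(s,t)\rVert}$ match the best available upper bound on $\frac{\lVert \mathit{SGP_w}(s,t)\rVert}{\lVert \mathit{SP_w}(s,t)\rVert}$ among the square grid graphs considered.
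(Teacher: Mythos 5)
Your proposal is correct and follows exactly the paper's own route: the paper likewise derives Corollary~\ref{cor:5} from Theorem~\ref{thm:4} by noting that every grid path is a vertex path, hence $\lVert \mathit{SVP_w}(s,t)\rVert \leq \lVert \mathit{SGP_w}(s,t)\rVert$. Your extra remarks on the subgraph inclusion and the degenerate case $\lVert \mathit{SP_w}(s,t)\rVert = 0$ only make the argument more explicit.
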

	
	\begin{figure}[tb]
	    \centering
	    \includegraphics{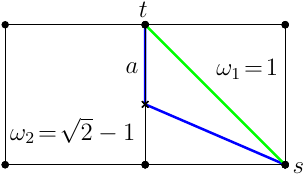}
	    \caption{$ \mathit{SP_w}(s,t) $ and $ \mathit{SVP_w}(s,t) $ are depicted in blue and green, respectively. The ratio~$ \frac{\lVert \mathit{SVP_w}(s,t)\rVert}{\lVert \mathit{SP_w}(s,t)\rVert} $ is $ \frac{\sqrt{2}\sqrt{\sqrt{2}-1}}{(\sqrt{2}-1)^{\frac{3}{2}}-\sqrt{2}+2} $ when $ a = \sqrt{\frac{3-2\sqrt{2}}{2\sqrt{2}-2}}\approx 0.46 $ and side length of mesh cell is~$ 1$.}
	    \label{fig:31}
    \end{figure}
	
	Finally, we provide a lower bound on the ratio $ \frac{\lVert \mathit{SVP_w}(s, t)\rVert}{\lVert \mathit{SP_w}(s,t) \rVert} $. The green path in Figure~\ref{fig:31} is a weighted shortest vertex path~$ \mathit{SVP_w}(s,t) $ between vertices $ s $ and $ t $, thus, we have the following result.
	
	\begin{observation}
	    \label{obs:5}
	    In a square tessellation, $ \frac{\lVert \mathit{SVP_w}(s, t)\rVert}{\lVert \mathit{SP_w}(s,t) \rVert} \geq  \frac{\sqrt{2}\sqrt{\sqrt{2}-1}}{(\sqrt{2}-1)^{\frac{3}{2}}-\sqrt{2}+2} \approx 1.07 $.
	\end{observation}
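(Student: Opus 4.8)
Since the statement asserts only the \emph{existence} of a weighted square mesh and vertices $s,t$ realizing that ratio, it suffices to analyze exactly the instance drawn in Figure~\ref{fig:31} and to compute the two path lengths in closed form. The plan is: (i)~pin down $\mathit{SP_w}(s,t)$ in the instance, (ii)~pin down $\mathit{SVP_w}(s,t)$, (iii)~form the quotient and read off the stated value, which amounts to recording that the weights in Figure~\ref{fig:31} are precisely those maximizing the quotient.

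For step~(i), I would use that in the instance only a constant number of cells have finite weight (the cells drawn), the rest having weight~$\infty$, and $s,t$ are chosen so that every $s$--$t$ path of finite weighted length is confined to the finitely many finite-weight cells. A weighted shortest path is then a polyline that is a straight segment inside each cell and refracts on the shared edges according to Snell's law $\omega_p\sin\theta_p=\omega_q\sin\theta_q$; hence $\mathit{SP_w}(s,t)$ is determined by a single crossing point, sitting at distance $a$ from a corner of the shared edge, where $a$ solves the refraction equation for the chosen weights (equivalently, one may treat $a$ as the free parameter and recover the weights from it). Using Observation~\ref{obs:lengthsquares} to measure each segment, this yields $\lVert \mathit{SP_w}(s,t)\rVert$ as an explicit function of $a$ and the weights.

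For step~(ii), recall that $\mathit{SVP_w}(s,t)$ is a shortest path in the complete graph $G_{\text{corner}}$; since $s$ and $t$ span only $O(1)$ relevant corners, $\mathit{SVP_w}(s,t)$ is the minimum-weight path in that finite subgraph, and is found by inspecting the few candidates — in particular the single edge $st$ (whose weighted cost is the refraction-\emph{free} straight-line cost through the cells), the one-bend paths through the two endpoints of the shared edge, and the path running along that edge — and taking the cheapest; one checks that the minimum is the green path of Figure~\ref{fig:31}, obtaining $\lVert \mathit{SVP_w}(s,t)\rVert$ in closed form. The point that makes the quotient exceed~$1$ (unlike the $\{1,\infty\}$ case, where $\mathit{SP_w}$ is itself a taut vertex path) is that here $\mathit{SP_w}(s,t)$ genuinely refracts at an \emph{interior} point of an edge, so it is not a vertex path and no vertex path matches its length.

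For step~(iii), I would write $\lVert \mathit{SVP_w}(s,t)\rVert/\lVert \mathit{SP_w}(s,t)\rVert$ as a one-variable function of $a$ (after eliminating the weights via the Snell relation) and maximize it by elementary calculus; the stationarity condition simplifies to $a^{2}=\tfrac{\sqrt2-1}{2}=\tfrac{3-2\sqrt2}{2\sqrt2-2}$, i.e.\ $a=\sqrt{\tfrac{3-2\sqrt2}{2\sqrt2-2}}\approx0.46$, and substituting back gives the value $\tfrac{\sqrt2\sqrt{\sqrt2-1}}{(\sqrt2-1)^{3/2}-\sqrt2+2}\approx1.07$. The routine optimization is not the hard part; the real obstacle is the two global-optimality checks: confirming that the blue polyline is a genuine weighted shortest path (ruling out competing homotopy classes and verifying that travelling along cell edges is never cheaper) and confirming that the green path is a genuine shortest vertex path. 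Both are finite case analyses over the handful of cells of the instance, but they must be carried out carefully, since a mistake in either one would change the computed ratio.
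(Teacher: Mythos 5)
Your proposal is correct and follows essentially the same route as the paper: the paper's entire justification for this observation is the single instance of Figure~\ref{fig:31} (with $a=\sqrt{(3-2\sqrt{2})/(2\sqrt{2}-2)}$ and unit side length), whose two path lengths yield the stated quotient. Your plan merely makes explicit the verification steps (Snell's law for the blue path, candidate enumeration in $G_{\text{corner}}$ for the green path, and the one-variable optimization over $a$) that the paper leaves implicit.
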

	
    \subsection{Ratios in $ G_{4\text{corner}} $}

        In this section we turn our attention to obtaining upper bounds in $ G_{4\text{corner}} $, i.e., a square mesh where each vertex is connected to its $ 4 $ adjacent vertices.
	
	The techniques we used in Sections~\ref{sec:8sq} and \ref{sec:svpsquares} for the $ G_{8\text{corner}} $ graph lead in a straightforward way to analogous results in this section for the ratios where the grid paths are paths in the $ G_{4\text{corner}} $ graph. Thus, for the sake of simplicity of exposition we just highlight the details that are different.
	
	The crossing path we defined in Definition~\ref{def:crossingsquare} is also a grid path in $ G_{4\text{corner}} $. Hence, almost all the tools in Section~\ref{sec:8sq} can still be used to upper-bound the ratios for paths intersecting $ G_{4\text{corner}} $. Since the crossing path in Definition~\ref{def:crossingsquare} is a valid path in $ G_{4\text{corner}} $, the polygons induced by the union of $ \mathit{SP_w}(s, t) $ and $X(s, t) $ are the same as those in Definitions~\ref{def:weakly1square} and \ref{def:weakly3square}. By Observations~\ref{obs:reduce0} and \ref{obs:reduce3}, and Lemma~\ref{lem:16}, an upper bound on the ratio $ \frac{\lVert X(u_j, u_{j+1})\rVert}{\lVert \mathit{SP_w}(u_j, u_{j+1}) \rVert} $ is attained by the paths bounding a polygon of type $ P^1_1 $.
    
    The first result that differs from the $ G_{8\text{corner}} $ case is an upper bound on polygons of type~$ P_1^1 $. The main difference is that now we cannot use shortcut paths~$ \Pi_i^1(s,t) $ from Definition~\ref{def:shortcut2square}. This is because the diagonal edge cutting across a cell (like the edge from $ v_1^3 $ to $ v_3^3 $ in Figure~\ref{fig:shortcutP2square}) is in $ G_{8\text{corner}} $ but {\em not} in $ G_{4\text{corner}} $. This increases the upper bound on the ratio. In fact, it also highlights the importance of the diagonal edges in $ G_{8\text{corner}} $.

		\begin{lemma}
			\label{lem:22}
			Let $ u_j, u_{j+1} \in S_i $ be two consecutive points where $ \mathit{SP_w}(s,t) $ and $ X(s,t) $ coincide in a square mesh $ \mathcal{S} $. If $ u_j $ and $ u_{j+1} $ induce a polygon of type~$ P^1_1 $, then $ \frac{\lVert X(u_j, u_{j+1})\rVert}{\lVert \mathit{SP_w}(u_j, u_{j+1}) \rVert} \leq \sqrt{2} $.
		\end{lemma}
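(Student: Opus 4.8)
The plan is to bound the ratio directly on a polygon of type $P^1_1$, following the same scheme as the proof of Lemma~\ref{lem:19} but \emph{without} the relation among the cell weights that the shortcut path of Definition~\ref{def:shortcut2square} supplied there. By Observation~\ref{obs:reduce0} a polygon of type $P_0^1$ has ratio exactly $1$, so it suffices to treat a genuine $P^1_1$ polygon, in which $\mathit{SP_w}(u_j,u_{j+1})\cup X(u_j,u_{j+1})$ meets the interiors of exactly two edges of $S_i$; by the analysis behind Proposition~\ref{prop:unique} these two edges are necessarily \emph{adjacent} (parallel edges would force $\ell\geq 2$ cells). Let $(v_1^i,v_2^i,v_3^i,v_4^i)$ be the vertices of $S_i$ in clockwise order; up to the symmetries of the square I would assume $u_j\in[v_1^i,v_2^i)$ and $u_{j+1}\in(v_2^i,v_3^i]$, the two edges being $[v_1^i,v_2^i]$ and $[v_2^i,v_3^i]$, meeting at $v_2^i$. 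Put $a=|u_jv_2^i|\in(0,1]$ and $b=|v_2^iu_{j+1}|\in(0,1]$; the values $a=1$ or $b=1$ absorb the sub-cases in which $u_j$ or $u_{j+1}$ is a corner, which by Proposition~\ref{prop:onlyvertex} still yield only polygons of type $P_0^1$ or $P_1^1$.

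Next I would pin down the two subpaths. By Definition~\ref{def:crossingsquare} (parts~2 and~3), $X(u_j,u_{j+1})$ is the grid path $(u_j,v_2^i,u_{j+1})$ running along those two edges, so its weighted length is $a\min\{\omega_i,\omega_\ell\}+b\min\{\omega_i,\omega_{\ell'}\}$, where $\omega_\ell,\omega_{\ell'}$ are the weights of the cells sharing those edges with $S_i$; in particular $\lVert X(u_j,u_{j+1})\rVert\le (a+b)\,\omega_i$. On the other hand, between the consecutive coincidence points $u_j$ and $u_{j+1}$ the path $\mathit{SP_w}(u_j,u_{j+1})$ lies in $S_i$ and meets $\partial S_i$ only at its endpoints (contact with $[v_1^i,v_2^i]$ or $[v_2^i,v_3^i]$ would coincide with $X$, and contact with a third edge would make the polygon of type $P_2^1$), hence it is the straight segment $u_ju_{j+1}$, of weighted length $\omega_i|u_ju_{j+1}|=\omega_i\sqrt{a^2+b^2}$ by Observation~\ref{obs:lengthsquares}, part~2. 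Combining the two estimates,
\[
\frac{\lVert X(u_j,u_{j+1})\rVert}{\lVert \mathit{SP_w}(u_j,u_{j+1})\rVert}\ \le\ \frac{(a+b)\,\omega_i}{\omega_i\sqrt{a^2+b^2}}\ =\ \frac{a+b}{\sqrt{a^2+b^2}}\ \le\ \sqrt{2},
\]
the last step being the elementary inequality $(a+b)^2\le 2(a^2+b^2)$, with equality iff $a=b$.

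I expect the one delicate point to be the justification that $\mathit{SP_w}(u_j,u_{j+1})$ is exactly the straight chord of $S_i$ weighted by $\omega_i$, i.e.\ that it cannot shave off length by hugging a cheaper incident edge: this is precisely where the choice of $u_j,u_{j+1}$ as \emph{consecutive} coincidence points of $\mathit{SP_w}(s,t)$ and $X(s,t)$ is used, since any portion of $\mathit{SP_w}$ lying along an edge of $S_i$ would overlap a portion of $X$ and so would insert an extra coincidence point strictly between $u_j$ and $u_{j+1}$. Once this is in place the remainder is the short computation above; it also explains why the bound is weaker than in $G_{8\text{corner}}$, where the diagonal $[v_1^i,v_3^i]$ is an available grid edge and the shortcut path $\Pi_i^1$ can force the weight relation exploited in Lemma~\ref{lem:19}, whereas in $G_{4\text{corner}}$ no such shortcut exists and $\frac{a+b}{\sqrt{a^2+b^2}}$ is genuinely maximized at $a=b$, giving $\sqrt{2}$.
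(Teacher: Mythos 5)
Your proposal is correct and follows essentially the same route as the paper's proof: the same normalization $u_j\in[v_1^i,v_2^i)$, $u_{j+1}\in(v_2^i,v_3^i]$, the same bound $a\min\{\omega_{i-1},\omega_i\}+b\min\{\omega_i,\omega_{i+1}\}\le(a+b)\omega_i$ against $\omega_i\sqrt{a^2+b^2}$, and the same maximization of $\frac{a+b}{\sqrt{a^2+b^2}}$ at $a=b$. The only (cosmetic) difference is that the paper explicitly disposes of the degenerate case $\omega_i=0$ before dividing by it, which your computation leaves implicit.
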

		
		\begin{proof}
			Let $ (v^i_1, v^i_2, v^i_3, v^i_4) $ be the sequence of consecutive vertices on the boundary of $ S_i $ in clockwise order. Since $ u_j $ and $ u_{j+1} $ induce a polygon of type~$ P_1^1 $, $ \mathit{SP_w}(s,t) $ enters $ S_i $ from cell $ S_{i-1} $ through $ u_j $, and $ \mathit{SP_w}(s,t) $ leaves~$ S_i $ and enters cell $ S_{i+1} $ through~$ u_{j+1} $, see Figure~\ref{fig:54}.
			Suppose, without loss of generality, that $ u_j\in [v^i_1, v^i_2) $ and $u_{j+1} \in (v^i_2, v^i_3] $.

            \begin{figure}[htb]
			\centering
			\includegraphics{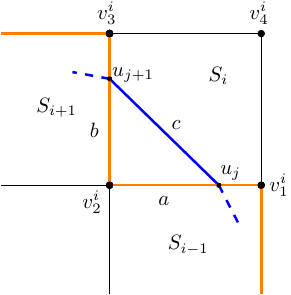}
    		\caption{Polygon of type~$ P_1^1 $ intersecting cell~$ S_i $, showing subpaths of the crossing path~$ X(s,t) $ (orange), and $ \mathit{SP_w}(s,t) $ (blue).}
    		\label{fig:54}
		\end{figure}
   
	    	Let $ a, b, c $ be the lengths $ |u_jv^i_2|, |v^i_2u_{j+1}| $, and $ |u_ju_{j+1}| $, respectively. According to Observation~\ref{obs:lengthsquares}, part~$2$, $c = \sqrt{a^2+b^2}$. Then, we want to upper-bound the ratio $ \frac{\lVert X(u_j, u_{j+1})\rVert}{\lVert \mathit{SP_w}(u_j, u_{j+1})\rVert} $ for all weight assignments $ \omega_{i-1}, \omega_i $ and $ \omega_{i+1} $.
			
			\begin{equation*}
				\frac{\lVert X(u_j, u_{j+1})\rVert}{\lVert \mathit{SP_w}(u_j, u_{j+1})\rVert} = \frac{a\min\{\omega_{i-1}, \omega_i\}+b\min\{\omega_i, \omega_{i+1}\}}{c\omega_i} \leq \frac{a\omega_i+b\omega_i}{c\omega_i} = \frac{a+b}{\sqrt{a^2+b^2}} \leq \sqrt{2},
    		\end{equation*}
    		where the last inequality is obtained by maximizing over the values of $ a, b \in (0,1] $. Also, in the last equality we assume that $ \omega_i \neq 0 $. However, if~$ \omega_i=0$, then the weighted length of $ X(u_j, u_{j+1}) $ and $ \mathit{SP_w}(u_j, u_{j+1}) $ is the same, and $ \frac{\lVert X(u_j, u_{j+1})\rVert}{\lVert \mathit{SP_w}(u_j, u_{j+1})\rVert} = 1$.
		\end{proof}

		Finally, since $ \lVert \mathit{SGP_w}(s, t) \rVert \leq \lVert X(s, t)\rVert $, by using the mediant inequality (see Observation~\ref{thm:1}), we obtain our main result.
		
		\begin{theorem}
		    \label{thm:4square}
	    	In $ G_{4\text{corner}} $, $ \frac{\lVert \mathit{SGP_w}(s, t)\rVert}{\lVert \mathit{SP_w}(s,t) \rVert} \leq  \sqrt{2} $.
	    \end{theorem}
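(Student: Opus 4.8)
The plan is to mirror the proof of Theorem~\ref{thm:4} almost verbatim, replacing only the per-polygon bound: in $G_{4\text{corner}}$ the shortcut path $\Pi_3^1$ is unavailable (its diagonal edge lives in $G_{8\text{corner}}$ but not in $G_{4\text{corner}}$), so in place of Lemma~\ref{lem:19} I would invoke the weaker bound of Lemma~\ref{lem:22}, at the cost of the constant degrading from $\frac{2}{\sqrt{2+\sqrt2}}$ to $\sqrt2$.

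First I would note that the crossing path $X(s,t)$ of Definition~\ref{def:crossingsquare} is built entirely from edges of the tessellation together with corner-to-corner turns along two incident cell edges, all of which are edges of $G_{4\text{corner}}$; hence $X(s,t)$ is a legitimate grid path in this graph and $\lVert \mathit{SGP_w}(s,t)\rVert \leq \lVert X(s,t)\rVert$. It therefore suffices to bound $\frac{\lVert X(s,t)\rVert}{\lVert \mathit{SP_w}(s,t)\rVert}$, and by Observation~\ref{thm:1} this is at most $\max_{j}\frac{\lVert X(u_j, u_{j+1})\rVert}{\lVert \mathit{SP_w}(u_j, u_{j+1})\rVert}$ over consecutive points $u_j,u_{j+1}$ where $\mathit{SP_w}(s,t)$ and $X(s,t)$ coincide.

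Next I would carry over the reduction of polygon types: since the crossing path is the same one, the union $\mathit{SP_w}(s,t)\cup X(s,t)$ still decomposes into polygons of types $P_0^1$, $P_1^1$, $P_2^1$, and $P_k^\ell$ for $k\in\{1,2\}$, $\ell\geq 2$, exactly as in Propositions~\ref{prop:onlyvertex} and~\ref{prop:unique}. Observation~\ref{obs:reduce0} removes $P_0^1$ (ratio $1$); Observation~\ref{obs:reduce3} recasts each $P_2^\ell$ as a limiting case of a $P_1^{\ell+1}$; and Lemma~\ref{lem:16}, whose auxiliary path $\Pi(s,t)$ uses only horizontal and vertical cell edges — all present in $G_{4\text{corner}}$ — reduces each $P_1^\ell$, $\ell\geq 1$, to the $P_1^1$ case. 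Thus the worst case is attained on a polygon of type $P_1^1$, where Lemma~\ref{lem:22} gives $\frac{\lVert X(u_j, u_{j+1})\rVert}{\lVert \mathit{SP_w}(u_j, u_{j+1})\rVert}\leq\sqrt2$. Chaining the inequalities yields $\frac{\lVert \mathit{SGP_w}(s,t)\rVert}{\lVert \mathit{SP_w}(s,t)\rVert}\leq\sqrt2$.

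The step I would treat most carefully is the reduction itself: I need to be sure that none of the lemmas quietly used a diagonal edge of $G_{8\text{corner}}$. The only place that happens is Lemma~\ref{lem:equallength2square} (and, through it, Lemma~\ref{lem:19}), which is exactly the step being dropped; every other ingredient — the crossing path, the polygon classification, Observations~\ref{obs:reduce0} and~\ref{obs:reduce3}, and Lemma~\ref{lem:16} — refers only to grid edges available in $G_{4\text{corner}}$. So the ``hard part'' here is not a computation but this bookkeeping check; once it is in place, the assembly is immediate.
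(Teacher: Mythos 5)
Your proposal is correct and follows the paper's own argument essentially verbatim: the paper likewise observes that the crossing path of Definition~\ref{def:crossingsquare} is already a grid path in $G_{4\text{corner}}$, reuses the polygon classification together with Observations~\ref{obs:reduce0} and~\ref{obs:reduce3} and Lemma~\ref{lem:16} to reduce to type~$P_1^1$, and then replaces the shortcut-path argument of Lemmas~\ref{lem:equallength2square} and~\ref{lem:19} by the direct bound $\frac{a+b}{\sqrt{a^2+b^2}}\leq\sqrt{2}$ of Lemma~\ref{lem:22}. Your explicit check that the auxiliary path in Lemma~\ref{lem:16} uses only axis-parallel cell edges is exactly the bookkeeping the paper leaves implicit.
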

	    
	    As a consequence of Theorem~\ref{thm:4square}, we obtain an upper bound on the ratio~$ \frac{\lVert \mathit{SGP_w}(s, t)\rVert}{\lVert \mathit{SVP_w}(s,t) \rVert} $. The result is obtained by taking into account that $ \lVert \mathit{SP_w}(s, t) \rVert \leq \lVert \mathit{SVP_w}(s, t)\rVert $.
	    
	    \begin{corollary}
		    \label{cor:sgpsvp4}
		    In $ G_{4\text{corner}}, \ \frac{\lVert \mathit{SGP_w}(s, t)\rVert}{\lVert \mathit{SVP_w}(s,t) \rVert} \leq \sqrt{2} $.
	    \end{corollary}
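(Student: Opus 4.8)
The plan is to derive Corollary~\ref{cor:sgpsvp4} directly from Theorem~\ref{thm:4square} by composing two inequalities, so almost no new work is required. The first inequality, $\lVert \mathit{SGP_w}(s,t)\rVert \leq \sqrt{2}\,\lVert \mathit{SP_w}(s,t)\rVert$, is exactly the statement of Theorem~\ref{thm:4square}, which we are entitled to assume. The second inequality is the observation that $\lVert \mathit{SP_w}(s,t)\rVert \leq \lVert \mathit{SVP_w}(s,t)\rVert$: a shortest vertex path $\mathit{SVP_w}(s,t)$ is, in particular, a valid path from $s$ to $t$ in the continuous weighted domain (it just happens to have all its bends at mesh corners), and $\mathit{SP_w}(s,t)$ is by definition a weighted-shortest such path, so it cannot be longer.

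First I would state these two facts, then chain them: since $\lVert \mathit{SVP_w}(s,t)\rVert \geq \lVert \mathit{SP_w}(s,t)\rVert > 0$, we have
\[
\frac{\lVert \mathit{SGP_w}(s,t)\rVert}{\lVert \mathit{SVP_w}(s,t)\rVert}
\;\leq\;
\frac{\lVert \mathit{SGP_w}(s,t)\rVert}{\lVert \mathit{SP_w}(s,t)\rVert}
\;\leq\;
\sqrt{2},
\]
where the first step replaces the denominator by the smaller quantity $\lVert \mathit{SP_w}(s,t)\rVert$ and the second step is Theorem~\ref{thm:4square}. (The degenerate situation $\lVert \mathit{SP_w}(s,t)\rVert = 0$, which forces $\lVert \mathit{SGP_w}(s,t)\rVert = \lVert \mathit{SVP_w}(s,t)\rVert = 0$ as well, can be dismissed separately, or simply excluded since then $s$ and $t$ lie in a common zero-weight region and the ratio is trivially $1$.)

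There is essentially no obstacle here: the corollary is an immediate monotonicity consequence of the theorem, and the only thing to be careful about is the direction of the inequality — one wants the smaller denominator $\lVert \mathit{SP_w}(s,t)\rVert$ on the right-hand side, which is precisely what $\lVert \mathit{SP_w}(s,t)\rVert \leq \lVert \mathit{SVP_w}(s,t)\rVert$ provides. I would keep the proof to these two lines.
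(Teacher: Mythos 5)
Your proposal is correct and matches the paper's own argument exactly: the paper derives Corollary~\ref{cor:sgpsvp4} from Theorem~\ref{thm:4square} using precisely the inequality $\lVert \mathit{SP_w}(s,t)\rVert \leq \lVert \mathit{SVP_w}(s,t)\rVert$. Your extra remark about the degenerate zero-length case is a harmless addition the paper omits.
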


	\section{Hexagonal meshes}
	
	This section is devoted to upper-bounding the ratios $ \frac{\lVert \mathit{SGP_w}(s, t)\rVert}{\lVert \mathit{SP_w}(s,t) \rVert}, \ \frac{\lVert \mathit{SGP_w}(s, t)\rVert}{\lVert \mathit{SVP_w}(s,t) \rVert} $ and $ \frac{\lVert \mathit{SVP_w}(s, t)\rVert}{\lVert \mathit{SP_w}(s,t) \rVert} $ in a hexagonal mesh for any pair of points $ s, t $. Recall that given a hexagonal mesh, we define two types of $k$-corner grid graphs, depending on vertex connectivity, namely the $ 3 $- and $ 12$-corner grid graphs, i.e., $ G_{3\text{corner}} $ and $ G_{12\text{corner}} $, see Section~\ref{cap.introduction}. Thus, we split the proof into two cases depending on the number of neighbors of each vertex. The techniques that we apply in this section are similar to the ones used in Section~\ref{cap.sq}. Hence, we just explain in detail the results that differ. In particular, omitted proofs are analogous to the corresponding ones in Section~\ref{sec:8sq}
	
	\subsection{$ \frac{\lVert \mathit{SGP_w}(s, t)\rVert}{\lVert \mathit{SP_w}(s,t) \rVert} $ ratio in a hexagonal mesh}
	
	Consider a hexagonal mesh $ \mathcal{H} $ where each cell is a closed hexagon of side length 1. Let $ (H_1,\ldots, H_n) $ be the ordered sequence of consecutive hexagonal cells intersected by a shortest path $ \mathit{SP_w}(s,t) $. Let~$ v^i_1, v^i_2, v^i_3, v^i_4, v^i_5, v^i_6 $ be the six corners on the boundary of the cell~$ H_i, \ 1 \leq i \leq n $, in clockwise order. Let $ (s = a_1, a_2, \ldots, a_{n+1} = t) $ be the sequence of consecutive points where~$ \mathit{SP_w}(s, t) $ changes the cell(s) it belongs to in $ \mathcal{H} $. In particular, let $ a_i $ and~$ a_{i+1} $ be, respectively, the points where $ \mathit{SP_w}(s,t) $ enters and leaves $ H_i $. Since the graphs~$ G_{3\text{corner}} $ and $ G_{12\text{corner}} $ defined on a hexagonal mesh are different to the graphs defined in a square mesh, we first define a crossing path $ X(s,t) $ in both graphs. If $ s, t \in G_{3\text{corner}} $, the crossing path $ X(s,t) $ is defined in the following way.

	
	\begin{definition}
		\label{def:crossinghex3}
		The crossing path $ X(s,t) $ between two vertices $ s $ and $ t $ in $ G_{3\text{corner}} $ is the path in $ \mathcal{H} $ with vertex sequence $ (X_1, \ldots, X_n) $, where $ X_i $ is a sequence of at most 4 vertices determined by the pair~$ (a_i, a_{i+1}), \ 1 \leq i \leq n $.
		\begin{enumerate}
		\setlength\itemsep{0em}
		
		    \item If $ a_i $ and $ a_{i+1} $ are on the same edge $ e_1^i \in H_i $, let $ u $ and $ v $ be the endpoints of $ e^i_1 $, where $ a_i $ is encountered before $ a_{i+1} $ when traversing $ e^i_1 $ from $ u $ to $ v $. If $ i=1$ then $ X_i = (a_1, v) $, otherwise $ X_i=(v) $, see Figure~\ref{fig:caseacrossingshex3}.
		    
		    \item If $ a_i $ and $ a_{i+1} $ belong to two adjacent edges $ e_1^i, e_2^i \in H_i $, let $ v $ be the corner of $ H_i $ shared by $ e_1^i $ and $ e_2^i $, and let $ u, u' $ be the other endpoints of $ e^i_1 $ and $ e^i_2 $, respectively. If $ a_i=u$ then $ X_i = (u, v, u') $, otherwise $ X_i = (v,u') $, see Figure~\ref{fig:casebcrossingshex3}.

                \item Let $ e_1^i = [u,v], e_2^i=[v,u'], e_3^i=[u',v'] $ be three consecutive edges of $ H_i$, where $ u $ is encountered before $ v$, and $ u' $ before $ v'$ when traversing $ e_1^i, e_2^i, e_3^i $ in that order. Let $ a_{i+1} \in e_3^i $. If $ a_i = u $ then $ X_i = (u,v,u',v')$, otherwise, if $ a_i $ belong to the interior of $ e_1^i $, $ X_i = (v,u',v') $, see Figure~\ref{fig:caseccrossingshex3}.
		    
		    \item If $ a_i $ and $ a_{i+1} $ belong to the interior of two parallel edges $ e_1^i $ and $ e_2^i $, and the last point in $ X_{i-1} $ was the endpoint of $ e^i_1 $ to the left (resp., right) of $ \overrightarrow{a_ia_{i+1}} $, $ X_i $ is the sequence of corners of $ H_i $, in clockwise (resp., counterclockwise) order to the left (resp., right) of $ \overrightarrow{a_ia_{i+1}} $, see Figure~\ref{fig:casedcrossingshex3}.
		\end{enumerate}
	\end{definition}
	
	\begin{figure}[tb]
		\captionsetup[sub]{justification=centering}
		\centering
		\begin{subfigure}[b]{0.3\textwidth}
	        \includegraphics{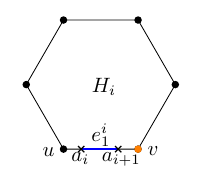}
	        \caption{}
	        \label{fig:caseacrossingshex3}
	    \end{subfigure}
    	\qquad\qquad
		\begin{subfigure}[b]{0.3\textwidth}
	        \includegraphics{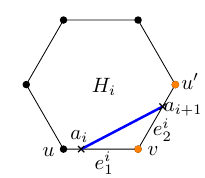}
	        \caption{}
	        \label{fig:casebcrossingshex3}
    	\end{subfigure}
            \\
            
		\begin{subfigure}[b]{0.3\textwidth}
	        \includegraphics{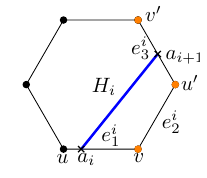}
	        \caption{}
	        \label{fig:caseccrossingshex3}
    	\end{subfigure}
    	\qquad\qquad
		\begin{subfigure}[b]{0.3\textwidth}
	        \includegraphics{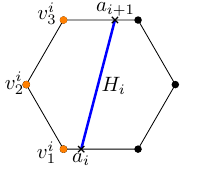}
	        \caption{}
	        \label{fig:casedcrossingshex3}
    	\end{subfigure}
    	\caption{The points $ a_i $ and $ a_{i+1} $ represent the intersection of the shortest path $ \mathit{SP_w}(s,t) $ (blue) with the hexagonal cell $ H_i $. The orange vertices represent the vertices of the crossing path $ X(s,t) \in G_{3\text{corner}} $.}
    \end{figure}
	
	Now, when $ s, t \in G_{12\text{corner}}$, some other edges are allowed, and the corresponding path $ X(s,t) $ is different.
	
	\begin{definition}
		\label{def:crossinghex12}
		The crossing path $ X(s,t) $ between two vertices $ s $ and $ t $ in $ G_{12\text{corner}} $ is the path in $ \mathcal{H} $ with vertex sequence $ (X_1, \ldots, X_n) $, where $ X_i $ is a sequence of at most 3 vertices determined by the pair~$ (a_i, a_{i+1}), \ 1 \leq i \leq n $.
		\begin{enumerate}
		\setlength\itemsep{0em}

                \item If $ a_i$ and $ a_{i+1} $ are two corners of $ H_i$, $ X_i=(a_i,a_{i+1})$, see Figure~\ref{fig:caseacrossingshex12}.
		
		    \item If $ a_i $ and $ a_{i+1} $ are on the same edge $ e_1^i \in H_i $, let $ u $ and $ v $ be the endpoints of $ e^i_1 $, where $ a_i $ is encountered before $ a_{i+1} $ when traversing $ e^i_1 $ from $ u $ to $ v $. If $ i=1$ then $ X_i = (a_1, v) $, otherwise $ X_i=(v) $.

                \item If $ a_i $ and $ a_{i+1} $ belong to two adjacent edges $ e_1^i, e_2^i \in H_i $, let $ v $ be the corner of $ H_i $ shared by $ e_1^i $ and $ e_2^i $, and let $ u, u' $ be the other endpoints of $ e^i_1 $ and $ e^i_2 $, respectively. If $ a_i=u$ then $ X_i = (u, v, u') $, otherwise $ X_i = (v,u') $, see Figure~\ref{fig:casebcrossingshex12}.

                \item Let $ e_1^i = [u,v], e_2^i=[v,u'], e_3^i=[u',v'] $ be three consecutive edges of $ H_i$, where $ u $ is encountered before $ v$, and $ u' $ before $ v'$ when traversing $ e_1^i, e_2^i, e_3^i $ in that order. Let $ a_{i+1} \in e_3^i $. If $ a_i = u $ then $ X_i = (u,u',v')$, see Figure~\ref{fig:caseccrossingshex12}. otherwise, if $ a_i $ belong to the interior of $ e_1^i $, $ X_i = (u,v,u',v') $.

                \item If $ a_i $ and $ a_{i+1} $ belong to the interior of two parallel edges $ e_1^i $ and $ e_2^i $, and the last point in $ X_{i-1} $ was the endpoint of $ e^i_1 $ to the left (resp., right) of $ \overrightarrow{a_ia_{i+1}} $, $ X_i $ is the endpoint of $ e^i_1 $ to the left (resp., right), see Figure~\ref{fig:casedcrossingshex12}.
		\end{enumerate}
	\end{definition}
	
	\begin{figure}[tb]
		\captionsetup[sub]{justification=centering}
		\centering
		\begin{subfigure}[b]{0.3\textwidth}
			\centering
	        \includegraphics{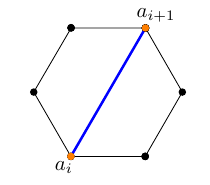}
	        \caption{}
	        \label{fig:caseacrossingshex12}
	    \end{subfigure}
    	\qquad
		\begin{subfigure}[b]{0.3\textwidth}
   		    \centering
	        \includegraphics{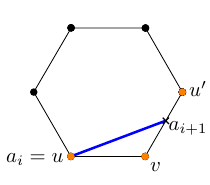}
	        \caption{}
	        \label{fig:casebcrossingshex12}
    	\end{subfigure}
    	\qquad
		\begin{subfigure}[b]{0.3\textwidth}
   		    \centering
	        \includegraphics{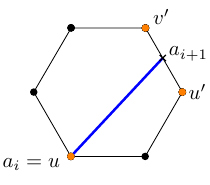}
	        \caption{}
	        \label{fig:caseccrossingshex12}
    	\end{subfigure}
    	\qquad
		\begin{subfigure}[b]{0.3\textwidth}
   		    \centering
	        \includegraphics{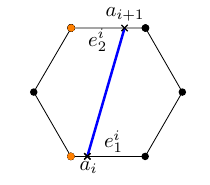}
	        \caption{}
	        \label{fig:casedcrossingshex12}
    	\end{subfigure}
    	\caption{The points $ a_i $ and $ a_{i+1} $ represent the intersection of the shortest path $ \mathit{SP_w}(s,t) $ (blue) with an hexagonal cell. The orange vertices represent the vertices of the crossing path $ X(s,t) \in G_{12\text{corner}} $.}
    \end{figure}
	
	Since we defined the path $ X(s,t) $ as a grid path, its weighted length is as least as large as the weighted length of the shortest grid path $ \mathit{SGP_w}(s, t) $, and we can upper-bound the ratio $ \frac{\lVert \mathit{SGP_w}(s, t)\rVert}{\lVert \mathit{SP_w}(s,t) \rVert} $ by the ratio $ \frac{\lVert X(s, t)\rVert}{\lVert \mathit{SP_w}(s,t) \rVert} $. Let $ (s=u_1, u_2, \ldots, u_\ell=t) $ be the ordered sequence of consecutive points where $ X(s,t)$ and $ \mathit{SP_w}(s,t) $ coincide. Once we have defined the crossing paths in each of the graphs, and using Observation~\ref{thm:1}, we want to upper-bound the ratio $ \frac{\lVert X(u_j, u_{j+1})\rVert}{\lVert \mathit{SP_w}(u_j,u_{j+1}) \rVert} $, for any pair $ (u_j, u_{j+1}), \ j \in \{1, \ldots, \ell-1\} $. So, having a look at the union of $ X(s, t) $ and $ \mathit{SP_w}(s,t) $ between consecutive intersection points $ u_j$ and $ u_{j+1} $, we notice that certain polygons arise. The type of polygon depends on the number of different cells that are crossed by the subpaths $ \mathit{SP_w}(u_j,u_{j+1}) $ and $X(u_j, u_{j+1})$. If these two subpaths intersect the interior of at most one cell, we have the types of polygons in Definition~\ref{def:weakly1hex3}.
	
    \begin{definition}
		\label{def:weakly1hex3}
		Let $ u_j, u_{j+1} \in H_i $ be two consecutive points where $ \mathit{SP_w}(s, t) $ and~$ X(s, t) \in G_{3\text{corner}} $ coincide. A polygon induced by $ u_j $ and $ u_{j+1} $ is of \emph{type}~$ P^1_k, \ 0 \leq k \leq 3 $ if $ \mathit{SP_w}(u_j, u_{j+1}) \cup X(u_j, u_{j+1}) $ intersects the interior of~$ k+1 $ different edges of $ H_i $.
	\end{definition}
	
	\begin{figure}[htb]
		\centering
		\includegraphics[scale=0.9]{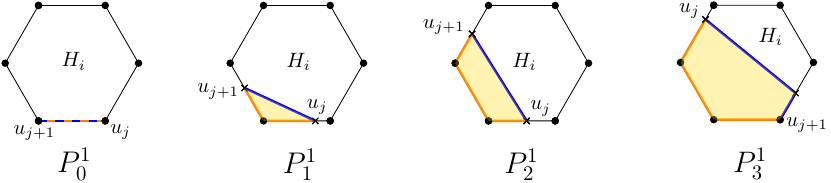}
		\captionof{figure}{Some polygons of type~$ P^1_k, k\in\{0,\ldots,3\} $, bounded by the subpaths $ X(u_j,u_{j+1}) $ (orange) and $ \mathit{SP_w}(u_j,u_{j+1}) $ (blue) in $ G_{3\text{corner}} $.}
		\label{fig:onecellhex3}
	\end{figure}

        Otherwise, if the subpaths $ \mathit{SP_w}(u_j,u_{j+1}) $ and $X(u_j, u_{j+1})$ intersect the interior of more than one cell, we obtain the polygons in Definition~\ref{def:weakly3hex3}.
	
	\begin{definition}
		\label{def:weakly3hex3}
		Let $ u_j\in e_1^i $ and $ u_{j+1} $ be two consecutive points where $ \mathit{SP_w}(s, t) $ and~$ X(s, t) \in G_{3\text{corner}} $ coincide. Let $ u_{j+1} $ be to the right (resp., left) of the line through~$ u_j $ perpendicular to $e_1^i$, with respect to the $ SP_w(s,t) $, when considering $ SP_w(s,t) $ oriented from $ s $ to $ t $, see Figure~\ref{fig:morecellhex3}. A polygon induced by $ u_j $ and $ u_{j+1} $ is of \emph{type}~$ P^{\ell}_k, \ \ell \geq 2, \ k \in \{1, 2, 3\} $, if:
		\begin{itemize}
		\setlength\itemsep{0em}
		    \item $ \mathit{SP_w}(u_j, u_{j+1}) $ intersects the interior of $ \ell $ consecutive cells $ H_i, \ldots, H_m$, with $ \ell=m-i+1$.
			\item $ X(u_j, u_{j+1}) $ contains all the vertices of the $ \ell $ cells intersected by $ \mathit{SP_w}(u_j, u_{j+1}) $ that are to the right (resp., left) of $ \mathit{SP_w}(u_j, u_{j+1}) $.
			\item $ \mathit{SP_w}(u_j, u_{j+1}) \cup X(u_j, u_{j+1}) $ intersects the interior of~$k+1 $ different edges of~$ S_m$.
		\end{itemize}
	\end{definition}
	
	\begin{figure}[htb]
		\centering
		\includegraphics[scale=0.9]{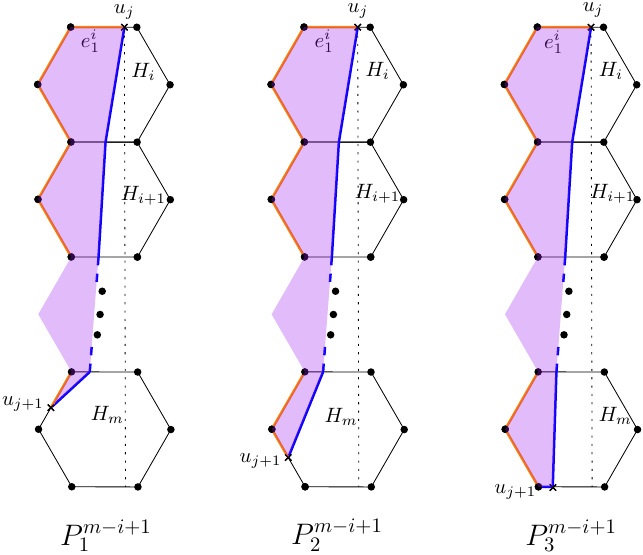}
		\captionof{figure}{Three polygons of type~$ P^{m-i+1}_1 $, $ P_2^{m-i+1} $ and~$ P^{m-i+1}_3 $, bounded by the subpaths~$ \mathit{SP_w}(u_j,u_{j+1}) $ (blue) and $ X(u_j, u_{j+1}) $ (orange) in $ G_{3\text{corner}} $.}
		\label{fig:morecellhex3}
	\end{figure}
	
	Note that in $ G_{12\text{corner}} $ it is possible for the crossing path $ X(s, t) $ to travel through some chords of the hexagonal cells. Thus, we need to define other types of polygons in this graph. Definition~\ref{def:weakly1hex12} gives us the types of polygons when $ u_j $ and $ u_{j+1} $ belong to the same cell.
	
	\begin{definition}
		\label{def:weakly1hex12}
		Let $ u_j, u_{j+1} \in H_i $ be two consecutive points where $ \mathit{SP_w}(s, t) $ and~$ X(s, t) \in G_{12\text{corner}} $ coincide. A polygon induced by $ u_j $ and $ u_{j+1} $ is of \emph{type}~$ P^1_0 $ if $ \mathit{SP_w}(u_j, u_{j+1}) $ and $ X(u_j, u_{j+1}) $ coincide.
  
        Let $ e^i_1 $ be the edge containing $ u_j $. A polygon induced by $ u_j $ and $ u_{j+1} $ is of \emph{type}~$ P^1_3 $ if the only edges of $ H_i $ that $ \mathit{SP_w}(u_j, u_{j+1}) $ and $ X(u_j, u_{j+1}) $ intersect are $ e^i_1 $ and its parallel edge in~$ H_i $.
			
	\end{definition}

        Note that in the case where $ u_j $ and $ u_{j+1}$ belong to the same cell, the type of polygon $ P_3^1$ differs from the one in Definition~\ref{def:weakly1hex3}, see Figure~\ref{fig:onecellhex12}. In addition, $ P_0^1 $ is also different since now we do not need $ u_j $ and $ u_{j+1}$ to be two adjacent vertices. We now define the types of polygons obtained when $ \mathit{SP_w}(u_j, u_{j+1}) $ and $ X(u_j, u_{j+1}) $ intersect the interior of more than one cell.
	
	\begin{figure}[htb]
		\centering
		\includegraphics[scale=0.9]{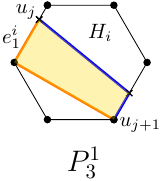}
		\captionof{figure}{Polygon of type~$ P^1_3 $, bounded by the subpaths $ X(u_j,u_{j+1}) $ (orange) and $ \mathit{SP_w}(u_j,u_{j+1}) $ (blue) in $ G_{12\text{corner}} $.}
		\label{fig:onecellhex12}
	\end{figure}
	
	\begin{definition}
		\label{def:weakly3hex12}
		Let $ u_j \in e_1^i$ and $ u_{j+1} $ be two consecutive points where~$ \mathit{SP_w}(s, t) $ and~$ X(s, t) \in G_{12\text{corner}} $ coincide. Let $ u_{j+1} $ be to the right (resp., left) of the line through~$ u_j $ perpendicular to $e_1^i$, with respect to the $ SP_w(s,t) $, when considering $ SP_w(s,t) $ oriented from $ s $ to $ t $, see Figure~\ref{fig:morecellhex12}. A polygon induced by $ u_j $ and $ u_{j+1} $ is of \emph{type}~$ P^{\ell}_k, \ \ell \geq 2, \ k \in \{1, 2, 3\} $, if:
		\begin{itemize}
		\setlength\itemsep{0em}
		    \item $ \mathit{SP_w}(u_j, u_{j+1}) $ intersects the interior of $ \ell $ consecutive cells $ H_i, \ldots, H_m$, with $ \ell=m-i+1$.
			\item $ X(u_j, u_{j+1}) $ contains the right (resp., left) endpoint of all the edges intersected by $ \mathit{SP_w}(u_j, u_{j+1}) $, with respect to $ \mathit{SP_w}(u_j, u_{j+1}) $, when considering $ \mathit{SP_w}(u_j, u_{j+1}) $ oriented from $ u_j $ to $ u_{j+1} $.
			\item If $ k=1,2$, $ \mathit{SP_w}(u_j, u_{j+1}) \cup X(u_j, u_{j+1}) $ intersects the interior of~$k+1 $ different edges of $ H_m $. If $ k=3$, $ \mathit{SP_w}(u_j, u_{j+1}) $ intersects the interior of the two edges in $ H_m $ parallel to $ e_1^i $.
		\end{itemize}
	\end{definition}
	
	\begin{figure}[htb]
		\centering
		\includegraphics[scale=0.9]{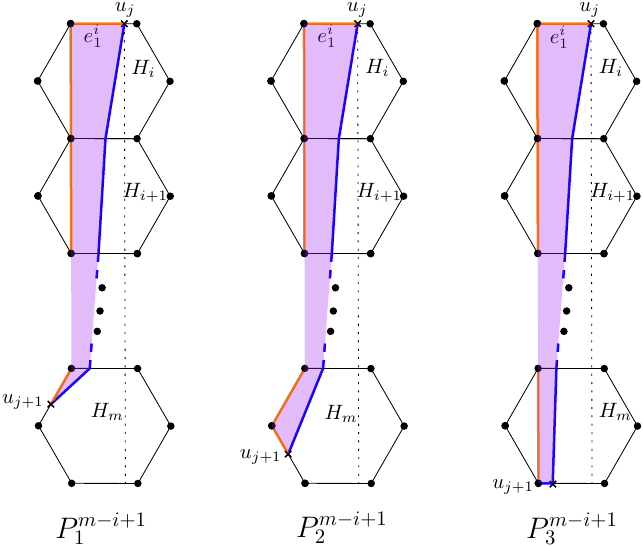}
		\captionof{figure}{Three polygons of type~$ P^{m-i+1}_1 $, $ P_2^{m-i+1} $ and~$ P^{m-i+1}_3 $, bounded by the subpaths~$ \mathit{SP_w}(u_j, u_{j+1}) $ (blue) and $ X(u_j, u_{j+1}) $ (orange) in $ G_{12\text{corner}} $.}
		\label{fig:morecellhex12}
	\end{figure}
	
	Using arguments analogous to those in Propositions~\ref{prop:onlyvertex} and \ref{prop:unique}, one can prove that if the crossing points $u_j,u_{j+1}$ belong to the same cell, we obtain the polygons defined in Definitions~\ref{def:weakly1hex3} and \ref{def:weakly1hex12} (see Figures~\ref{fig:onecellhex3} and \ref{fig:onecellhex12}, for $ X(s,t) $ contained in $ G_{3\text{corner}} $ and $ G_{12\text{corner}} $). Moreover, if the crossing points belong to different cells, the only polygons that arise are those in Definitions~\ref{def:weakly3hex3} and \ref{def:weakly3hex12} (see Figures~\ref{fig:morecellhex3} and \ref{fig:morecellhex12}, respectively).

    Observations~\ref{obs:reduce0} and \ref{obs:reduce3} can also be generalized when the cells of the tessellation are hexagons. Hence, the only relevant ratios are those where the shortest path and the crossing path bound polygons of type~$ P_1^1 $, $ P_2^1$, $ P_1^\ell$ and~$ P_2^\ell$ in both graphs~$ G_{3\text{corner}} $ and~$ G_{12\text{corner}}$. In the remainder of this section, we prove an upper bound for the weighted length of the crossing path with respect to the weighted length of a shortest path when bounding these four types of polygons in each of the graphs.
	
	The following two lemmas provide upper bounds on the ratio $ \frac{\lVert X(u_j, u_{j+1})\rVert}{\lVert \mathit{SP_w}(u_j, u_{j+1}) \rVert} $ when $ \mathit{SP_w}(u_j, u_{j+1}) $ bounds polygons of type~$ P_1^1 $, $ P_2^1$, and~$ X(s,t) \in G_{3\text{corner}} $. The proofs are similar to the proof of Lemma~\ref{lem:22} for squares.
	
	\begin{lemma}
		\label{lem:9}
		Let $ u_j, \ u_{j+1} \in \mathcal{H} $ be two consecutive points where $ \mathit{SP_w}(s,t) $ and $ X(s,t) \in G_{3\text{corner}} $ coincide. If $ u_j $ and $ u_{j+1} $ induce a polygon of type~$ P_1^1$, then $ \frac{\lVert X(u_j, u_{j+1})\rVert}{\lVert \mathit{SP_w}(u_j, u_{j+1}) \rVert} \leq \frac{2}{\sqrt{3}} \approx 1.15 $.
	\end{lemma}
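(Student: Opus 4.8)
The plan is to mirror the proof of Lemma~\ref{lem:22} for square cells, with the only substantive change being that a right angle is replaced by the $120^\circ$ interior angle of a regular hexagon. Since $u_j$ and $u_{j+1}$ induce a polygon of type $P_1^1$, the subpath $\mathit{SP_w}(u_j,u_{j+1})$ is a single straight segment crossing the interior of one cell $H_i$, and $X(u_j,u_{j+1})\in G_{3\text{corner}}$ runs along two adjacent edges $e_1^i,e_2^i$ of $H_i$ meeting at their common corner $v$; writing $(v_1^i,\dots,v_6^i)$ for the corners of $H_i$ in clockwise order, I would assume without loss of generality that $v=v_2^i$, $u_j\in[v_1^i,v_2^i)$ and $u_{j+1}\in(v_2^i,v_3^i]$. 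Set $a=|u_jv|$, $b=|vu_{j+1}|$, $c=|u_ju_{j+1}|$, and let $\omega_{i-1},\omega_{i+1}$ be the weights of the two cells of $\mathcal{H}$ adjacent to $H_i$ across $e_1^i$ and $e_2^i$, respectively. The two segments of $X(u_j,u_{j+1})$ lie on cell boundaries, so $\lVert X(u_j,u_{j+1})\rVert=a\min\{\omega_{i-1},\omega_i\}+b\min\{\omega_i,\omega_{i+1}\}$, while $\lVert \mathit{SP_w}(u_j,u_{j+1})\rVert=c\,\omega_i$.

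The one new ingredient is the hexagonal analog of Observation~\ref{obs:lengthsquares}: since the interior angle of $H_i$ at $v$ is $120^\circ$, the law of cosines gives $c=\sqrt{a^2+b^2+ab}$. Therefore, using $\min\{\omega_{i-1},\omega_i\}\le\omega_i$ and $\min\{\omega_i,\omega_{i+1}\}\le\omega_i$,
\[
\frac{\lVert X(u_j,u_{j+1})\rVert}{\lVert \mathit{SP_w}(u_j,u_{j+1})\rVert}
=\frac{a\min\{\omega_{i-1},\omega_i\}+b\min\{\omega_i,\omega_{i+1}\}}{c\,\omega_i}
\le\frac{a+b}{\sqrt{a^2+b^2+ab}}.
\]
The right-hand side is homogeneous of degree $0$ in $(a,b)$, so I would set $t=a/b>0$ and maximize $g(t)=(t+1)/\sqrt{t^2+t+1}$; its derivative vanishes only at $t=1$, where it attains its maximum $g(1)=2/\sqrt{3}$. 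Hence the ratio is at most $2/\sqrt{3}\approx 1.15$. As in Lemma~\ref{lem:22}, the borderline case $\omega_i=0$ is handled separately: then both subpaths have weighted length $0$ and the ratio is taken to equal $1$.

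I do not foresee a real obstacle here: this is the square argument with a different angle, and the maximization is an elementary one-variable computation. The only point requiring a little care is to check that \emph{every} instance of a $P_1^1$ polygon in $G_{3\text{corner}}$ fits the picture used above — in particular, that $\mathit{SP_w}(u_j,u_{j+1})$ crosses exactly one cell as a straight chord (which is what Definition~\ref{def:weakly1hex3} with $k=1$ encodes, since $u_j,u_{j+1}$ are \emph{consecutive} coincidence points and $X$ hugs the boundary, so $\mathit{SP_w}$ stays in the open interior of $H_i$), and that $X(u_j,u_{j+1})$ traverses only the two adjacent edges $e_1^i,e_2^i$, including the boundary sub-cases where $u_j$ or $u_{j+1}$ coincides with a corner of $H_i$ (covered by Definition~\ref{def:crossinghex3}, parts~1 and~2, together with the hexagonal analog of Proposition~\ref{prop:onlyvertex}). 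Once this is verified, the displayed bound completes the proof.
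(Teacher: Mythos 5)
Your proposal is correct and follows exactly the route the paper intends: the paper omits the proof of this lemma, stating only that it is analogous to Lemma~\ref{lem:22} for squares, and your argument is precisely that adaptation, with the law of cosines at the $120^\circ$ corner giving $c=\sqrt{a^2+ab+b^2}$ and the one-variable maximization of $(a+b)/\sqrt{a^2+ab+b^2}$ yielding $2/\sqrt{3}$ at $a=b$. The geometric setup (two adjacent edges meeting at a corner, weights bounded via $\min\{\omega_{i-1},\omega_i\}\le\omega_i$) and the degenerate case $\omega_i=0$ are all handled as in the paper.
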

	
	\begin{lemma}
		\label{lem:10}
		Let $ u_j, \ u_{j+1} \in \mathcal{H} $ be two consecutive points where $ \mathit{SP_w}(s,t) $ and $ X(s,t) \in G_{3\text{corner}} $ coincide. If $ u_j $ and $ u_{j+1} $ induce a polygon of type~$ P_2^1$, then $ \frac{\lVert X(u_j, u_{j+1})\rVert}{\lVert \mathit{SP_w}(u_j, u_{j+1}) \rVert} \leq \frac{3}{2} $.
	\end{lemma}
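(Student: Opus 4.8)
\emph{Proof proposal.} The plan is to mirror the proofs of Lemma~\ref{lem:22} and Lemma~\ref{lem:9}: bound $X(u_j,u_{j+1})$ segment by segment against the weight $\omega_i$ of the single hexagon $H_i$ it crosses, compare with the straight chord that realizes $\mathit{SP_w}(u_j,u_{j+1})$ inside $H_i$, and close with a two-variable maximization over the positions of $u_j$ and $u_{j+1}$ on $\partial H_i$.

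First I would fix the $P_2^1$ configuration. Since $u_j,u_{j+1}\in H_i$ and, by Definition~\ref{def:weakly1hex3}, $\mathit{SP_w}(u_j,u_{j+1})\cup X(u_j,u_{j+1})$ meets the interiors of exactly three edges of $H_i$, the shortest path enters $H_i$ through an edge $e_1^i$ at $u_j$ and leaves through an edge $e_3^i$ at $u_{j+1}$, where $e_1^i,e_2^i,e_3^i$ are three consecutive edges: if $e_1^i$ and $e_3^i$ were adjacent we would be in case $P_1^1$, and a larger separation would force a fourth edge. By Definition~\ref{def:crossinghex3}, part~3, $X(u_j,u_{j+1})$ then runs from $u_j$ along $e_1^i$ to the corner $v$ shared with $e_2^i$, along all of $e_2^i$ to the corner $u'$ shared with $e_3^i$, and along $e_3^i$ to $u_{j+1}$. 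Write $a=|u_jv|$, $b=|u'u_{j+1}|$ (both in $(0,1]$) and $c=|u_ju_{j+1}|$. Each of the three subsegments of $X(u_j,u_{j+1})$ lies on an edge of $H_i$, so its weighted length is at most $\omega_i$ times its Euclidean length; hence $\lVert X(u_j,u_{j+1})\rVert\le(a+1+b)\,\omega_i$, while $\lVert \mathit{SP_w}(u_j,u_{j+1})\rVert=c\,\omega_i$. (If $\omega_i=0$ both sides vanish and the ratio equals $1$, exactly as in the corresponding case of Lemma~\ref{lem:22}.) So the ratio is at most $(a+1+b)/c$.

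Next comes the geometry. Using that two consecutive edges of a unit regular hexagon meet at $120^{\circ}$ (an analogue of Observation~\ref{obs:lengthsquares} for hexagonal cells), placing $v=(0,0)$ and $u'=(1,0)$ gives $u_j=(-a/2,\,a\sqrt3/2)$ and $u_{j+1}=(1+b/2,\,b\sqrt3/2)$, so $c^2=a^2+b^2-ab+a+b+1$. It remains to maximize $\frac{(1+a+b)^2}{a^2+b^2-ab+a+b+1}$ over $(a,b)\in(0,1]^2$. Setting $s=a+b$, the denominator equals $1+s+s^2-3ab$, which for fixed $s$ is smallest when $ab$ is largest, i.e.\ at $a=b=s/2$ (feasible since $s\le2$); the expression then becomes $\bigl((1+s)/(1+s/2)\bigr)^2$, which increases in $s$, so the maximum is at $s=2$, i.e.\ $a=b=1$, where it equals $(3/2)^2$. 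Hence $\lVert X(u_j,u_{j+1})\rVert/\lVert \mathit{SP_w}(u_j,u_{j+1})\rVert\le 3/2$, with equality when $u_j,u_{j+1}$ are corners of $H_i$ and $X$ runs through three of its edges.

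The statement is not conceptually deep — as the text notes, it is close to Lemma~\ref{lem:22} — so the only delicate points are (i) deriving the hexagonal chord formula $c^2=a^2+b^2-ab+a+b+1$ from the $120^{\circ}$ angles, (ii) checking in the optimization that the reduction to $a=b$ stays inside $(0,1]^2$, and (iii) dispatching the degenerate weight cases ($\omega_i=0$, or an edge of $X$ shared with a lighter neighbour) as in the square proof, so that the bound holds for \emph{every} non-negative weight assignment.
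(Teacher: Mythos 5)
Your proof is correct and follows exactly the route the paper intends: the paper omits the proof of Lemma~\ref{lem:10}, stating only that it is analogous to Lemma~\ref{lem:22}, and your argument is precisely that analogue (bound each boundary segment of $X(u_j,u_{j+1})$ by $\omega_i$, compare with the straight chord of weighted length $c\,\omega_i$, and maximize $(a+1+b)/c$ with $c^2=a^2+b^2-ab+a+b+1$ over $(0,1]^2$ to get $3/2$ at $a=b=1$). The chord formula and the reduction to $a=b$ both check out, so nothing further is needed.
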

	
	Before obtaining the upper bounds when the shortest path and the crossing path bound the remaining types of polygons, we need to adapt Definition~\ref{def:Ptriples} of triples in a square mesh to obtain four new types of triples on a hexagonal mesh. These triples are valid for both graphs~$ G_{3\text{corner}} $ and~$ G_{12\text{corner}} $. However, for $ G_{3\text{corner}} $ we just need the triples containing polygons of type~$ P_1^\ell $ and~$ P_2^\ell$, while for $ X(s,t) \in G_{12\text{corner}} $ we need a triple for each type of polygon.
		
        \begin{definition}
	        \label{def:Ptriples12}
	        A $ P_k^\ell$-triple, for $ k\in\{1,2\}, \ell \geq 1$, from a vertex $ s $ to a vertex $ t $ is defined as a set of $ \ell+4$ cells $ H_1, \ldots, H_{\ell+4}$ with the following properties:
	        \begin{itemize}
                    \setlength\itemsep{0em}
	            \item $ s$ is the vertex common to $ H_1$ and $ H_2$, and not adjacent to $ H_3$.
	            \item $ t$ is the vertex common to $ H_{\ell+3}$ and $ H_{\ell+4}$, and not adjacent to~$ H_{\ell+2}$.
	            \item The union of $ \mathit{SP_w}(s, t) $ and $ X(s,t) $ determines two polygons of type~$ P_1^1$, and one polygon of type~$ P_k^\ell$ in between. 
	        \end{itemize}
	    \end{definition}
	    
	    Figures~\ref{fig:hexagonaltriples} and \ref{fig:hexagonaltriples2} depict four of the six possible types of triples in Definition~\ref{def:Ptriples12}. Figures \ref{fig:hextriple3corner2} and \ref{fig:hextriple3corner1} represent the triples in $ G_{3\text{corner}} $ (analogous triples can be defined for $ X(s,t) \in G_{12\text{corner}} $), and Figures \ref{fig:hextriple12corner1} and \ref{fig:hextriple12corner2}, two triples in~$ G_{12\text{corner}} $.
	    
	    \begin{figure}[tb]
		\centering
		\begin{subfigure}[b]{0.4\textwidth}
   			\centering
	    	\includegraphics[scale=0.7]{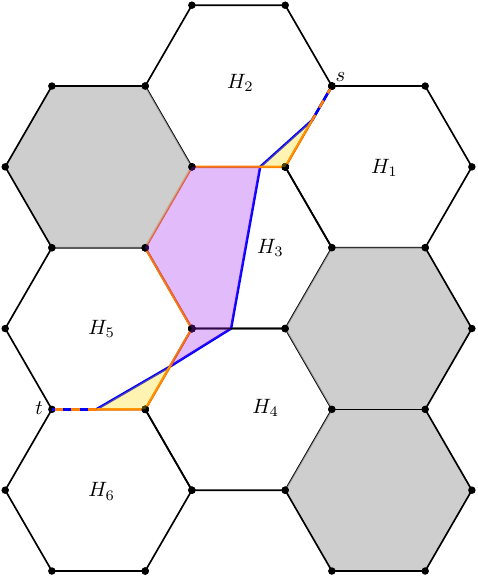}
			\caption{$P_1^2$-triple from $ s $ to $ t $ is represented in white.}
			\label{fig:hextriple3corner2}
    	\end{subfigure}
    	\qquad\quad
		\begin{subfigure}[b]{0.4\textwidth}
			\centering
	    	\includegraphics[scale=0.7]{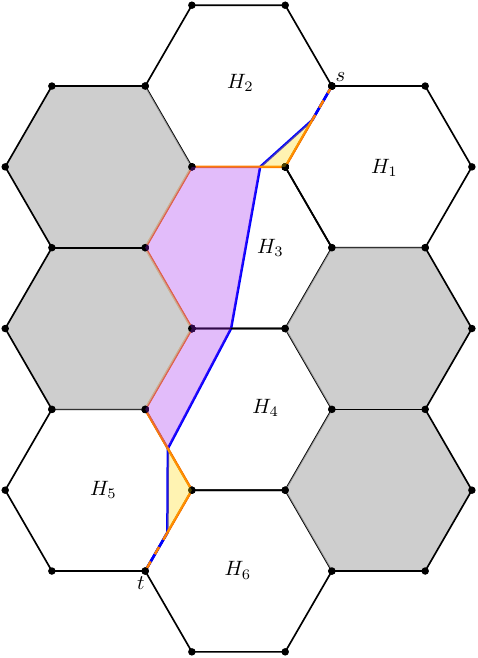}
	    	\caption{$P_2^2$-triple from $ s $ to $ t $ is represented in white.}
	    	\label{fig:hextriple3corner1}
	    \end{subfigure}
    	\caption{The crossing path $ X(s,t) $ (orange) and $ \mathit{SP_w}(s,t) $ (blue) traversing some triples in~$ G_{3\text{corner}} $.}
    	\label{fig:hexagonaltriples}
	\end{figure}

        \begin{figure}[tb]
		\centering
		\begin{subfigure}[t]{0.4\textwidth}
			\centering
	    	\includegraphics[scale=0.7]{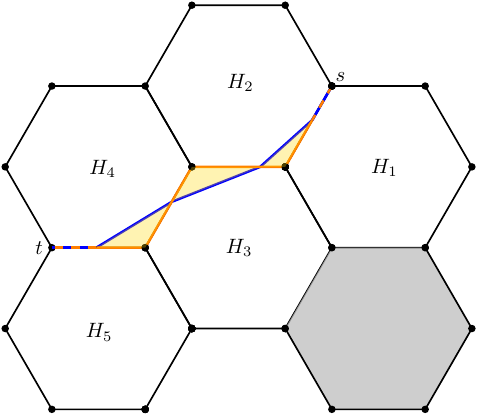}
	    	\caption{$P_1^1$-triple between $ s $ and $t$.}
	    	\label{fig:hextriple12corner1}
	    \end{subfigure}
	    \qquad\quad
		\begin{subfigure}[t]{0.4\textwidth}
			\centering
	    	\includegraphics[scale=0.7]{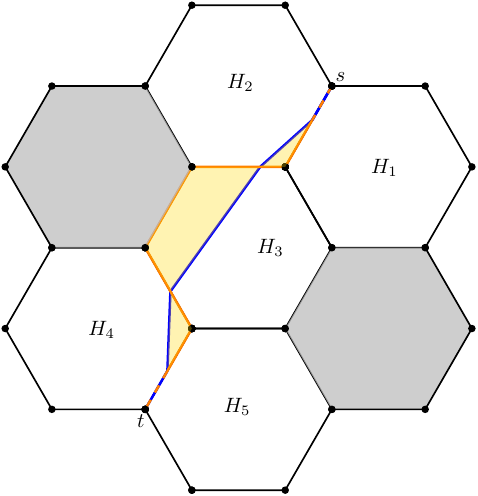}
	    	\caption{$P_2^1$-triple between $ s $ and $t$.}
	    	\label{fig:hextriple12corner2}
	    \end{subfigure}
    	\caption{The crossing path $ X(s,t) $ (orange) and $ \mathit{SP_w}(s,t) $ (blue) traversing some triples in $ G_{12\text{corner}} $.}
    	\label{fig:hexagonaltriples2}
	\end{figure}

     Now, we obtain a result analogous to Lemma~\ref{lem:16}, that is valid when $ X(s,t) \in G_{3\text{corner}} $ and $ X(s,t) \in G_{12\text{corner}}$. Note that this result is not possible when the cells are equilateral triangles because of the position of the cells and the definition of the crossing path, see \cite{bose2023approximating}.
	
	\begin{lemma}
		\label{lem:largecasehex}
    		The ratio $ \frac{\lVert \mathit{SGP_{w}}(s,t)\rVert}{\lVert \mathit{SP_{w}}(s,t)\rVert} $ in a $ P^\ell_k $-triple, $ k \in \{1,2\} $, is upper-bounded by the ratio $\frac{\lVert X(u_j, u_{j+1})\rVert}{\lVert \mathit{SP_w}(u_j, u_{j+1}) \rVert}$ when $ \mathit{SP_{w}}(u_j,u_{j+1}) $ and $ X(u_j,u_{j+1}) $ bound a polygon of type~$ P^1_k $, for any pair $ (u_j, u_{j+1}) $ of consecutive points where $ \mathit{SP_{w}}(s,t) $ and $ X(s,t) $ coincide.
	\end{lemma}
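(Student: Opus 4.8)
The argument follows the proof of Lemma~\ref{lem:16}, adapted to the hexagonal geometry, and works uniformly for $X(s,t)\in G_{3\text{corner}}$ and $X(s,t)\in G_{12\text{corner}}$. Fix a $P_k^\ell$-triple $(H_1,\dots,H_{\ell+4})$, $k\in\{1,2\}$, and let $u_j,u_{j+1}$ be the points where $\mathit{SP_w}(s,t)$ enters and leaves the interior polygon $P'$ of type $P_k^\ell$. By Definition~\ref{def:weakly3hex3} (resp.\ Definition~\ref{def:weakly3hex12}), $\mathit{SP_w}(u_j,u_{j+1})$ crosses a maximal straight run of cells $H_3,\dots,H_{\ell+2}$, entering and leaving each through a pair of opposite edges, while $X(u_j,u_{j+1})$ runs along one fixed side of that run. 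Label the corners of each $H_m$, $3\le m\le\ell+2$, so that corresponding corners lie on lines parallel to the run, and write $e^m$ for the edge of $H_m$ shared with $H_{m-1}$ (so $e^{m+1}$, shared with $H_{m+1}$, is opposite and parallel to $e^m$).

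First I would construct a splitting grid path $\Pi(s,t)$, exactly as in Lemma~\ref{lem:16}. Let $H_i$, $3\le i\le\ell+2$, be a cell of minimum weight among $H_3,\dots,H_{\ell+2}$. Let $\Pi(s,t)$ coincide with $X(s,t)$ until the endpoint of $e^i$ on the side of the run followed by $X$, then traverse the single edge $e^i$ --- a grid edge in both $G_{3\text{corner}}$ and $G_{12\text{corner}}$, so no chord is needed and the construction is uniform --- to its other endpoint, then follow the opposite side of the run over $H_i,\dots,H_{\ell+2}$, and finally rejoin $X(s,t)$ and continue to $t$. As in Lemma~\ref{lem:16} one checks that $\Pi(s,t)$ is a legal grid path and that $\mathit{SP_w}(s,t)\cup\Pi(s,t)$ decomposes into: the two $P_1^1$ polygons of the triple, unchanged (each handled by Lemma~\ref{lem:9} for $G_{3\text{corner}}$ or its $G_{12\text{corner}}$ counterpart, and each with ratio at most the $P_k^1$ bound); possibly some degenerate $P_0^1$ polygons (irrelevant, by the hexagonal analogue of Observation~\ref{obs:reduce0}); a polygon of type $P_1^{i-2}$ on the $s$-side of $e^i$; and a polygon of type $P_k^{\ell-i+3}$ on the $t$-side --- here $k$ is preserved because the final cell $H_{\ell+2}$, which determines $k$, is not affected by the split. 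Since $\lVert\mathit{SGP_w}(s,t)\rVert\le\lVert\Pi(s,t)\rVert$, the mediant inequality (Observation~\ref{thm:1}) reduces the claim to bounding $\frac{\lVert\Pi\rVert}{\lVert\mathit{SP_w}\rVert}$ on each piece; the $P_1^{i-2}$ piece is handled the same way as the $P_k^{\ell-i+3}$ one, so the crux is the latter.

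On that piece I would argue as in Lemma~\ref{lem:16}. Every segment of $\Pi$ bounding a cell $H_{\ell'}$, $i\le\ell'\le\ell+2$, lies on an edge shared with a cell outside the polygon, so its weighted length is at most $\omega_{\ell'}$ times its length, and the crossing of $e^i$ costs at most $\omega_i$ times its length, since $\min\{\omega_{i-1},\omega_i\}\le\omega_i$. Using the hexagonal version of Observation~\ref{obs:lengthsquares} for distances inside a cell, I would rewrite the ratio in the form $\frac{\sum_{\ell'=i}^{\ell+2}(\alpha_{\ell'}\omega_{\ell'}+\gamma_{\ell'}\omega_i)}{\sum_{\ell'=i}^{\ell+2}\beta_{\ell'}\omega_{\ell'}}$, where $\beta_{\ell'}$ is the length of $\mathit{SP_w}$ inside $H_{\ell'}$ and $\alpha_{\ell'},\gamma_{\ell'}$ are the contributions into which the length of $\Pi$ inside $H_{\ell'}$ and the length of the $e^i$-crossing, respectively, split among the cells; the point is that $\sum_{\ell'}\alpha_{\ell'}$ and $\sum_{\ell'}\gamma_{\ell'}$ telescope to exactly the lengths of $\Pi$'s two runs, so the rewritten numerator is the weighted length of $\Pi$ on the piece and the denominator that of $\mathit{SP_w}$. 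Applying the mediant inequality over $\ell'$ and then using $\omega_i\le\omega_{\ell'}$ (minimality of $\omega_i$) gives $\frac{\lVert\Pi\rVert}{\lVert\mathit{SP_w}\rVert}\le\max_{i\le\ell'\le\ell+2}\frac{\alpha_{\ell'}+\gamma_{\ell'}}{\beta_{\ell'}}$, and for a single cell the right-hand quantity is precisely $\frac{\lVert X(u_{j'},u_{j'+1})\rVert}{\lVert\mathit{SP_w}(u_{j'},u_{j'+1})\rVert}$ when $\mathit{SP_w}$ and $X$ bound a polygon of type $P_k^1$.

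The main obstacle is this cell-by-cell bookkeeping. For a hexagon the side of the run through one cell is not a straight segment (it consists of two cell edges meeting at a vertex) and is not perpendicular to the shared edges $e^m$, so the clean orthogonal split available for squares is replaced by a subtler one, and one must check, cell by cell, that the ``components'' $\alpha_{\ell'},\gamma_{\ell'}$ telescope correctly --- in particular that the displacement $\Pi$ accumulates along its two runs together with the $e^i$-crossing equals the displacement of $\mathit{SP_w}$ between the endpoints of the piece --- so that the mediant inequality genuinely collapses the whole run to a single-cell $P_k^1$ ratio. A secondary point is the case $k=2$: at $H_{\ell+2}$ the union $\mathit{SP_w}\cup\Pi$ meets three edges, and one must verify that the split leaves a genuine $P_2^{\,\cdot}$ piece, so that the $P_2^1$ bound (Lemma~\ref{lem:10}, or its $G_{12\text{corner}}$ analogue) rather than a weaker statement is what is inherited.
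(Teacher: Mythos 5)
Your proposal is correct and follows exactly the route the paper intends: the paper omits the proof of Lemma~\ref{lem:largecasehex}, stating only that it is analogous to Lemma~\ref{lem:16}, and your argument is precisely that adaptation --- splitting the run at a minimum-weight cell via the shared edge $e^i$ (a grid edge in both $G_{3\text{corner}}$ and $G_{12\text{corner}}$), decomposing the union into $P_0^1$, $P_1^1$, $P_1^{i-2}$ and $P_k^{\ell-i+3}$ pieces, and collapsing via the mediant inequality and the minimality of $\omega_i$ to a single-cell $P_k^1$ ratio. The two subtleties you flag (the non-orthogonal per-cell bookkeeping of $\alpha_{\ell'},\gamma_{\ell'}$ for the zigzag side chains, and preservation of $k=2$ at $H_{\ell+2}$) are exactly the hexagon-specific details the paper leaves implicit.
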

	
	Lemma~\ref{lem:largecasehex} implies that we just need to calculate an upper bound on the ratio $ \frac{\lVert X(u_j, u_{j+1})\rVert}{\lVert \mathit{SP_w}(u_j, u_{j+1}) \rVert} $ for $X(u_j, u_{j+1})$ and $ \mathit{SP_w}(u_j, u_{j+1})$ bounding polygons of type $ P^1_1 $ and $ P^1_2 $. According to Lemmas~\ref{lem:9} and  \ref{lem:10}, an upper bound on the ratio~$ \frac{\lVert X(u_j, u_{j+1})\rVert}{\lVert \mathit{SP_w}(u_j,u_{j+1}) \rVert} $ for a shortest path and a crossing path bounding these two types of polygons in $ G_{3\text{corner}} $ is~$ \frac{3}{2} $. Hence, using Observation~\ref{thm:1}, and the fact that~$ \lVert \mathit{SGP_w}(s, t)\rVert \leq \lVert X(s, t)\rVert$, we obtain the desired result.

	\begin{theorem}
		\label{thm:ratiohex3corner}
		In $ G_{3\text{corner}} $, $ \frac{\lVert \mathit{SGP_w}(s, t)\rVert}{\lVert \mathit{SP_w}(s,t) \rVert} \leq \frac{3}{2} $.
	\end{theorem}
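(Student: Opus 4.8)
The plan is to run the same argument used for Theorem~\ref{thm:4} in the square case, now with the crossing path of Definition~\ref{def:crossinghex3} and the polygon types of Definitions~\ref{def:weakly1hex3} and~\ref{def:weakly3hex3}. Given a shortest path $\mathit{SP_w}(s,t)$ (assumed unique, otherwise repeat the argument for each one), build $X(s,t)\in G_{3\text{corner}}$. Since $X(s,t)$ is a grid path, $\lVert\mathit{SGP_w}(s,t)\rVert\le\lVert X(s,t)\rVert$, so it is enough to bound $\frac{\lVert X(s,t)\rVert}{\lVert\mathit{SP_w}(s,t)\rVert}$. By the mediant inequality (Observation~\ref{thm:1}) this is at most $\max_j\frac{\lVert X(u_j,u_{j+1})\rVert}{\lVert\mathit{SP_w}(u_j,u_{j+1})\rVert}$ over consecutive points $u_j,u_{j+1}$ where $X(s,t)$ and $\mathit{SP_w}(s,t)$ coincide, and every such pair induces one of the polygon types in Definitions~\ref{def:weakly1hex3} and~\ref{def:weakly3hex3}.

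Next I would trim the list of relevant polygon types. The hexagonal version of Observation~\ref{obs:reduce0} eliminates type $P_0^1$ (ratio $1$), and the hexagonal version of Observation~\ref{obs:reduce3} absorbs each $P_3^\ell$ polygon into a configuration with one more cell, leaving only types $P_1^1$, $P_2^1$, $P_1^\ell$ and $P_2^\ell$ with $\ell\ge 2$ as relevant. For $\ell\ge 2$, the hexagonal counterpart of Lemma~\ref{lem:Ptriple} (using the triples of Definition~\ref{def:Ptriples12}) lets me replace such a polygon by a corresponding $P_k^\ell$-triple with the same inner weights, and then Lemma~\ref{lem:largecasehex} bounds the ratio over that triple by the ratio over a single polygon of type $P_1^1$ (if $k=1$) or $P_2^1$ (if $k=2$). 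Hence only the two single-cell ratios, for $P_1^1$ and $P_2^1$, remain to be bounded.

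Those are exactly Lemmas~\ref{lem:9} and~\ref{lem:10}: a $P_1^1$ polygon gives ratio at most $\frac{2}{\sqrt{3}}\approx 1.15$, and a $P_2^1$ polygon gives ratio at most $\frac{3}{2}$; taking the larger of the two, $\frac{\lVert X(u_j,u_{j+1})\rVert}{\lVert\mathit{SP_w}(u_j,u_{j+1})\rVert}\le\frac{3}{2}$ for every relevant pair, so $\frac{\lVert X(s,t)\rVert}{\lVert\mathit{SP_w}(s,t)\rVert}\le\frac{3}{2}$ and therefore $\frac{\lVert\mathit{SGP_w}(s,t)\rVert}{\lVert\mathit{SP_w}(s,t)\rVert}\le\frac{3}{2}$. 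I expect the main obstacle to be verifying that the reductions imported from the square case ---Observations~\ref{obs:reduce0} and~\ref{obs:reduce3}, and Lemmas~\ref{lem:16} and~\ref{lem:Ptriple}--- genuinely carry over to hexagonal geometry; in particular that the crossing path of Definition~\ref{def:crossinghex3} never runs around a full cell's vertex cycle in a way that escapes the classification of Definitions~\ref{def:weakly1hex3} and~\ref{def:weakly3hex3}. Unlike the $G_{8\text{corner}}$ case, no shortcut-path refinement should be needed here, since Lemma~\ref{lem:10} already attains $\frac{3}{2}$, which by the lower bound of~\cite{bailey2021path} is tight.
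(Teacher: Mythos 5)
Your proposal follows the paper's own argument essentially step for step: crossing path from Definition~\ref{def:crossinghex3}, the mediant inequality of Observation~\ref{thm:1}, the hexagonal generalizations of Observations~\ref{obs:reduce0} and~\ref{obs:reduce3} together with Lemma~\ref{lem:largecasehex} to reduce everything to single-cell polygons of types $P_1^1$ and $P_2^1$, and then Lemmas~\ref{lem:9} and~\ref{lem:10} giving the maximum $\frac{3}{2}$. Your observation that no shortcut-path refinement is needed here (unlike $G_{8\text{corner}}$ and $G_{12\text{corner}}$) because Lemma~\ref{lem:10} already matches the known lower bound is also consistent with the paper.
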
 
 
        We now turn our attention to $G_{12\text{corner}} $. In this case, as in $G_{8\text{corner}} $, the crossing path $ X(s,t) $ can intersect the chords of the cells. Thus, the sequence of vertices intersected by $ X(s,t) $ is not the same as in $G_{3\text{corner}} $. Thus, we need to define two classes of shortcut paths that intersect the edges of certain cells that~$ X(s,t)\in G_{12\text{corner}} $ does not intersect. Definition~\ref{def:8} determines these grid paths.

	\begin{definition}
		\label{def:8}
		Let $ (v^i_1, v^i_2, v^i_3, v^i_4, v^i_5, v^i_6) $ be the sequence of vertices of cell $ H_i $, in clockwise or counterclockwise order. Let $ \mathit{SP_w}(s,t) $ enter cell $ H_i $ through the edge $ [v_j^i, v_{\text{mod(j, 6)}+1}^i] $.
		\begin{itemize}
  \setlength\itemsep{0em}
		    \item If $ X(s,t) $ contains the subpath $ (v^i_{\text{mod(j, 6)}+1}, v^i_j, v^i_{\text{mod(j+4, 6)}+1}) $, the \emph{shortcut path $ \Pi_{i}^2(s,t) $} is defined as the grid path $ X(s,v^i_{\text{mod(j, 6)}+1}) \cup (v^i_{\text{mod(j, 6)}+1}, v^i_{\text{mod(j+4, 6)}+1}) \cup X(v^i_{\text{mod(j+4, 6)}+1},t) $, see yellow path in Figure~\ref{fig:27}. Symmetrically, if $ X(s,t) $ contains the subpath $ (v^i_j, v^i_{\text{mod(j, 6)}+1}, v^i_{\text{mod(j+1, 6)}+1}) $, $ \Pi_{i}^2(s,t) $ is defined as the grid path $ X(s,v^i_j) \cup (v^i_j, v^i_{\text{mod(j+1, 6)}+1}) \cup X(v^i_{\text{mod(j+1, 6)}+1},t) $.
                \item If $ X(s,t) $ contains the subpath $ (v^i_{\text{mod(j, 6)}+1}, v^i_j, v^i_{\text{mod(j+4, 6)}+1}, v_{\text{mod(j+3, 6)}+1}^i) $, the \emph{shortcut path $ \Pi_{i}^3(s,t) $} is defined as the grid path $ X(s,v^i_{\text{mod(j, 6)}+1}) \cup (v^i_{\text{mod(j, 6)}+1}, v^i_{\text{mod(j+3, 6)}+1}) \cup X(v^i_{\text{mod(j+3, 6)}+1},t) $, see green path in Figure~\ref{fig:27}. Symetrically, if $ X(s,t) $ contains the subpath $ (v^i_j, v^i_{\text{mod(j, 6)}+1}, v^i_{\text{mod(j+1, 6)}+1}, v_{\text{mod(j+2, 6)}+1}^i) $, $ \Pi_{i}^3(s,t) $ is defined as the grid path $ X(s,v^i_j) \cup (v^i_j, v^i_{\text{mod(j+2, 6)}+1}) \cup X(v^i_{\text{mod(j+2, 6)}+1},t) $.
		\end{itemize}
	\end{definition}
	
	\begin{figure}[htb]
		\centering
		\includegraphics[scale=0.7]{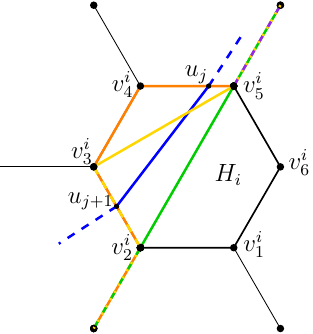}
	    	\caption{Subpaths of the grid paths $ \Pi_i^2(s,t) $ (yellow), $ \Pi_i^3(s,t) $ (green). The crossing path~$ X(s,t) $ (orange) and $ \mathit{SP_w}(s,t) $ (blue) intersect cell $ H_i $.}
    	\label{fig:27}
	\end{figure}

        In shortcut paths $ \Pi_i^2(s,t) $ and $ \Pi_i^3(s,t) $ there is an edge between two non-adjacent vertices of cell $ H_i $. Hence, they allow us to improve the worst-case upper bounds on the ratio $ \frac{\lVert X(s, t)\rVert}{\lVert \mathit{SP_w}(s, t) \rVert} $ by using a similar result to Lemma~\ref{lem:Ptriple}.
 
		Lemma~\ref{lem:Ptriple} states that a triple can be defined from a given polygon of certain type in a square mesh. This result can also be generalized to hexagonal meshes. Hence, we can assume that an upper bound on the ratio $ \frac{\lVert X(s, t)\rVert}{\lVert \mathit{SP_w}(s, t) \rVert} $ is given when the shortest path intersects either a $P_1^1$-triple or a $P_2^1$-triple. 
		Thus, using the grid paths specified in Definition~\ref{def:8}, and the following lemmas, we obtain a relation between the weights $ \omega_\ell $ of each cell in the triples where the crossing path $ X(s,t) $ and each of the grid paths from Definition~\ref{def:8} do not coincide.
		
		\begin{lemma}
			\label{lem:equallengthhex121}
			Let $ X(s,t) \in G_{12\text{corner}} $. Let $ \frac{\lVert \mathit{SGP_w}(s',t') \rVert}{\lVert \mathit{SP_w}(s',t') \rVert} $ be the maximum ratio attained when $\mathit{SP_w}(s',t')$ bounds a polygon of type~$ P_1^1 $. Consider the corresponding~$ P_1^1$-triple between $ s$ and $ t$. Let $ H_3 $ be the cell for which the maximum ratio~$ \frac{\lVert \mathit{SGP_w}(u_j,u_{j+1}) \rVert}{\lVert \mathit{SP_w}(u_j,u_{j+1}) \rVert} $ is attained, where $ u_j, u_{j+1} \in H_3 $ are two consecutive points where $ \mathit{SP_{w}}(s,t) $ and $ X(s,t) $ coincide. Then, $ \lVert X(s,t) \rVert = \lVert \Pi_3^2(s,t) \rVert $.
		\end{lemma}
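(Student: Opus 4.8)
The plan is to follow, almost verbatim, the argument of Lemma~\ref{lem:equallength2square}, with the square shortcut path $\Pi_3^1(s,t)$ replaced by the hexagonal shortcut path $\Pi_3^2(s,t)$ of Definition~\ref{def:8}. First I would argue by contradiction: suppose that among the two grid paths in $\{X(s,t),\Pi_3^2(s,t)\}$ one of them, call it $\mathit{GP_w}(s,t)$, is strictly shorter than the other, so that $\mathit{SGP_w}(s,t)$ coincides with $\mathit{GP_w}(s,t)$ on the triple. The goal is then to exhibit a weight assignment $w'$ on the five cells $H_1,\dots,H_5$ of the $P_1^1$-triple under which $\frac{\lVert \mathit{GP_{w'}}(s,t)\rVert}{\lVert \mathit{SP_{w'}}(s,t)\rVert} > \frac{\lVert \mathit{GP_w}(s,t)\rVert}{\lVert \mathit{SP_w}(s,t)\rVert}$, contradicting the assumed maximality of the instance.

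As in the square case, I would begin by setting the weights of all cells not traversed by $X(s,t)$ to infinity, so that sufficiently small changes to the remaining weights preserve the combinatorial structure of $\mathit{SP_w}(s,t)$. Since $u_j,u_{j+1}$ induce a polygon of type~$P_1^1$ in $H_3$, the crossing path $X(s,t)$ enters $H_3$ through one edge and leaves through an adjacent one, so along $H_3$ it traverses two unit edges shared with $H_2$ and $H_4$ and contributes $\min\{\omega_2,\omega_3\}+\min\{\omega_3,\omega_4\}$ to $\lVert X(s,t)\rVert$; by the same reason $X(s,t)$ contains the subpath required in Definition~\ref{def:8}, so $\Pi_3^2(s,t)$ is well defined and crosses $H_3$ along the short diagonal joining the two far endpoints of those edges, contributing $\sqrt{3}\,\omega_3$ (the short diagonal of a unit hexagon has length $\sqrt{3}$). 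Writing $c_1,c_2$ for the endpoints of that diagonal, both candidate ratios take the form
\[
\frac{\lVert \mathit{GP_w}(s,t)\rVert}{\lVert \mathit{SP_w}(s,t)\rVert}
= \frac{\lVert \mathit{GP_w}(s,c_1)\rVert + D + \lVert \mathit{GP_w}(c_2,t)\rVert}
       {\lVert \mathit{SP_w}(s,u_j)\rVert + |u_j u_{j+1}|\,\omega_3 + \lVert \mathit{SP_w}(u_{j+1},t)\rVert},
\]
where $D=\sqrt{3}\,\omega_3$ if $\mathit{GP_w}(s,t)=\Pi_3^2(s,t)$, and $D=\min\{\omega_2,\omega_3\}+\min\{\omega_3,\omega_4\}$ if $\mathit{GP_w}(s,t)=X(s,t)$.

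Then I would invoke the strict monotonicity (pseudolinearity) of this ratio as a function of the single variable $\omega_3$~\cite{chew1984pseudolinearity, RAPCSAK1991353}: being monotone, it can be strictly increased by perturbing $\omega_3$ in the appropriate direction (decreasing $\omega_3$ if the ratio decreases in $\omega_3$, increasing it otherwise). Finally I would observe that this perturbation can be made arbitrarily small, so that $\mathit{GP_{w'}}(s,t)$ remains no longer than the other grid path in $\{X(s,t),\Pi_3^2(s,t)\}$; hence $\mathit{SGP_{w'}}(s,t)$ still coincides with $\mathit{GP_{w'}}(s,t)$ on the triple, and the strictly larger ratio contradicts maximality. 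Therefore $\lVert X(s,t)\rVert=\lVert \Pi_3^2(s,t)\rVert$.

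The main obstacle I expect is exactly the one that makes the square version delicate: ensuring that it really suffices to tune only $\omega_3$ and that none of the relevant combinatorial structures — that of $\mathit{SP_w}(s,t)$, that of the two grid paths, and which of the two is shorter — breaks under a sufficiently small perturbation. The only genuinely new point compared with Lemma~\ref{lem:equallength2square} is verifying that the relevant chord of $H_3$ (the short diagonal, of length $\sqrt{3}$) is an edge of $G_{12\text{corner}}$ and that the $P_1^1$ configuration guarantees $X(s,t)$ contains the subpath needed for $\Pi_3^2(s,t)$ to exist; both hold by construction of the crossing path in Definition~\ref{def:crossinghex12}.
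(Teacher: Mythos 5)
Your proposal is correct and takes essentially the same route as the paper: the authors omit this proof, stating it is analogous to Lemma~\ref{lem:equallength2square}, and your adaptation is exactly that analogy, correctly replacing the square diagonal $\sqrt{2}\,\omega_3$ by the short hexagonal chord $\sqrt{3}\,\omega_3$ of $\Pi_3^2(s,t)$ and reusing the pseudolinearity/perturbation argument on $\omega_3$. The two verifications you flag at the end (the chord being an edge of $G_{12\text{corner}}$ and $X(s,t)$ containing the three consecutive corners required by Definition~\ref{def:8}) are indeed the only genuinely new points, and both hold as you say.
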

		
		\begin{lemma}
			\label{lem:equallengthhex122}
			Let $ X(s,t) \in G_{12\text{corner}} $. Let $ \frac{\lVert \mathit{SGP_w}(s',t') \rVert}{\lVert \mathit{SP_w}(s',t') \rVert} $ be the maximum ratio attained when $ \mathit{SP_w}(s',t') $ bounds a polygon of type~$ P_2^1 $. Consider the corresponding~$ P_2^1$-triple between $ s $ and $ t$. Let $ H_3 $ be the cell for which the maximum ratio~$ \frac{\lVert \mathit{SGP_w}(u_j,u_{j+1}) \rVert}{\lVert \mathit{SP_w}(u_j,u_{j+1}) \rVert} $ is attained, where $ u_j, u_{j+1} \in H_3 $ are two consecutive points where $ \mathit{SP_{w}}(s,t) $ and $ X(s,t) $ coincide. Then, $ \lVert X(s,t) \rVert = \lVert \Pi_3^2(s,t) \rVert = \lVert \Pi_3^3(s,t) \rVert $.
		\end{lemma}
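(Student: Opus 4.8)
The plan is to follow the contradiction argument used for the two‑path statements Lemma~\ref{lem:equallength2square} and Lemma~\ref{lem:equallengthhex121}, the only genuinely new ingredient being that here three grid paths through the $P_2^1$-triple must be balanced at once: the crossing path $X(s,t)$, which runs along three consecutive edges $v_{j+1}v_j$, $v_jv_{j-1}$, $v_{j-1}v_{j-2}$ of $H_3$ (indices mod $6$), and the two shortcut paths $\Pi_3^2(s,t)$ and $\Pi_3^3(s,t)$, which replace that detour around $H_3$ by the chord of length $\sqrt3$ joining $v_{j+1}$ and $v_{j-1}$, respectively the long diagonal of length $2$ joining $v_{j+1}$ and $v_{j-2}$, each traversed with weight $\omega_3$. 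As in the earlier proofs, I would first set to $\infty$ the weight of every cell not met by $X(s,t)$, so that small changes of the remaining weights do not affect the combinatorial type of $\mathit{SP_w}(s,t)$, and work with whichever of $X(s,t)$, $\Pi_3^2(s,t)$, $\Pi_3^3(s,t)$ realizes the minimum weighted length in the triple.

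Suppose, for contradiction, that these three lengths are not all equal, and let $\mathit{GP_w}(s,t)$ be one of them realizing the minimum, so that its weighted length is strictly smaller than that of at least one of the other two. Splitting all the paths at the points $u_j,u_{j+1}\in H_3$ where $\mathit{SP_w}(s,t)$ and $X(s,t)$ coincide, the portions of $\mathit{GP_w}(s,t)$ and of $\mathit{SP_w}(s,t)$ outside $H_3$ do not depend on $\omega_3$; inside $H_3$, $\mathit{SP_w}(u_j,u_{j+1})$ contributes $|u_ju_{j+1}|\,\omega_3$, $X(u_j,u_{j+1})$ contributes a sum of terms $\min\{\omega_3,\omega'\}$ over the edges of $H_3$ it uses (with $\omega'$ the weight on the far side of each such edge), while $\Pi_3^2$ and $\Pi_3^3$ contribute $\sqrt3\,\omega_3$ and $2\,\omega_3$. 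Hence $\frac{\lVert\mathit{GP_w}(s,t)\rVert}{\lVert\mathit{SP_w}(s,t)\rVert}$ is a pseudolinear, and therefore strictly monotone, function of $\omega_3$ (the same fact used in Lemma~\ref{lem:equallength2square} via \cite{chew1984pseudolinearity, RAPCSAK1991353}); moving $\omega_3$ in the direction that increases it produces a weight assignment $w'$ with a strictly larger ratio, and choosing the perturbation small enough keeps $\mathit{GP_{w'}}(s,t)$ a shortest grid path in the triple, so that $\lVert\mathit{SGP_{w'}}(s,t)\rVert=\lVert\mathit{GP_{w'}}(s,t)\rVert$ and $\frac{\lVert\mathit{SGP_{w'}}(s,t)\rVert}{\lVert\mathit{SP_{w'}}(s,t)\rVert}>\frac{\lVert\mathit{SGP_w}(s,t)\rVert}{\lVert\mathit{SP_w}(s,t)\rVert}$, contradicting the choice of the maximizing instance. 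The hexagonal analogue of Lemma~\ref{lem:Ptriple} then transfers the equality back from the triple to the original $P_2^1$ polygon.

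The step I expect to be the real obstacle is exactly the three‑way comparison. In the two‑path statements, ``not all equal'' already forces one path to be the unique strict minimum, whereas now two of $\lVert X(s,t)\rVert$, $\lVert\Pi_3^2(s,t)\rVert$, $\lVert\Pi_3^3(s,t)\rVert$ may be tied at the minimum, and then an arbitrary perturbation of $\omega_3$ need not leave the chosen $\mathit{GP_w}(s,t)$ shortest. I plan to handle this by a short case analysis on which of the three lengths attain the minimum and on the signs of their (linear, resp.\ piecewise-linear) dependence on $\omega_3$ and on the neighbouring weights — for instance first perturbing so as to break any tie without decreasing the ratio, and only then applying the monotonicity step — reading off the needed incidences from the picture of the $P_2^1$-triple in $G_{12\text{corner}}$ (Figure~\ref{fig:hextriple12corner2}). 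Apart from this, the proof is a routine repetition of the square argument.
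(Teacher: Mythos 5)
The paper does not actually write out a proof of this lemma --- it is one of the ``omitted proofs analogous to the corresponding ones in Section~\ref{sec:8sq}'', i.e., to the perturbation argument of Lemma~\ref{lem:equallength2square} --- and your contradiction via strict monotonicity (pseudolinearity) of $\frac{\lVert \mathit{GP_w}(s,t)\rVert}{\lVert \mathit{SP_w}(s,t)\rVert}$ in $\omega_3$, after freezing the untraversed cells at weight $\infty$, is exactly that argument correctly transplanted to the three paths $X(s,t)$, $\Pi_3^2(s,t)$, $\Pi_3^3(s,t)$. Your flagged subtlety --- that when two of the three lengths are tied at the minimum a perturbation of $\omega_3$ alone need not increase the ratio of the minimum, so one must also exploit the neighbouring weights that enter $\lVert X(s,t)\rVert$ through the $\min\{\omega_3,\omega'\}$ terms but leave $\lVert\Pi_3^2(s,t)\rVert$ and $\lVert\Pi_3^3(s,t)\rVert$ untouched --- is a genuine point that the paper's ``analogous'' claim glosses over, and your proposed tie-breaking case analysis is the right way to close it.
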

        
        Now, we have all the tools needed to obtain upper bounds on the ratio~$ \frac{\lVert X(u_j, u_{j+1})\rVert}{\lVert \mathit{SP_w}(u_j, u_{j+1}) \rVert} $ for $X(u_j, u_{j+1})$ and $\mathit{SP_w}(u_j, u_{j+1}) $ bounding polygons of type~$ P^1_1 $ and $ P^1_2 $ when~$ X(s,t) \in G_{12\text{corner}} $. The proofs of the bounds in Lemmas~\ref{lem:ratio12corner1} and \ref{lem:ratio12corner2} require multiple case analysis. However, the analysis is similar to the proof of Lemma~\ref{lem:19} for square cells.

		\begin{lemma}
			\label{lem:ratio12corner1}
			Let $ u_j, \ u_{j+1} \in \mathcal{H} $ be two consecutive points where a shortest path~$ \mathit{SP_w}(s,t) $ and the crossing path $ X(s,t) \in G_{12\text{corner}} $ coincide. If $ u_j $ and $ u_{j+1} $ induce a polygon of type~$ P_1^1$, and $ \lVert X(s,t) \rVert = \lVert \Pi_3^2(s,t) \rVert $, then $ \frac{\lVert X(u_j, u_{j+1})\rVert}{\lVert \mathit{SP_w}(u_j, u_{j+1}) \rVert} \leq \frac{2}{\sqrt{2+\sqrt{3}}} \approx 1.04 $.
		\end{lemma}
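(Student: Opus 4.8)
The plan is to run the proof of Lemma~\ref{lem:19} essentially verbatim, with the unit square replaced by a regular hexagon of side~$1$: the relevant ``diagonal'' is now the short chord skipping one vertex, of length~$\sqrt3$, and the angle enclosed at the turning corner of $X(s,t)$ is $120^\circ$ rather than $90^\circ$. First I would fix the local picture in $H_3$. Since $u_j$ and $u_{j+1}$ induce a polygon of type~$P_1^1$, the subpath $\mathit{SP_w}(u_j,u_{j+1})$ enters $H_3$ through a point $u_j$ in the relative interior of one edge and leaves through a point $u_{j+1}$ in the relative interior of an adjacent edge; say $u_j\in(v_1^3,v_2^3)$, $u_{j+1}\in(v_2^3,v_3^3)$, and put $a=|u_jv_2^3|$, $b=|v_2^3u_{j+1}|$, $c=|u_ju_{j+1}|$. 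Because the interior angle of a regular hexagon is $120^\circ$, the law of cosines gives $c=\sqrt{a^2+ab+b^2}$, the hexagonal analogue of Observation~\ref{obs:lengthsquares}. By Definition~\ref{def:crossinghex12}, $X(u_j,u_{j+1})$ runs along the two cell edges through the corner $v_2^3$, so its weighted length is $a\min\{\omega_2,\omega_3\}+b\min\{\omega_3,\omega_4\}$, where $\omega_3$ is the weight of $H_3$ and $\omega_2,\omega_4$ are the weights of the cells sharing the edges $[v_1^3,v_2^3]$ and $[v_2^3,v_3^3]$ with $H_3$, whereas $\mathit{SP_w}(u_j,u_{j+1})$ crosses the interior of $H_3$ and has weighted length $c\,\omega_3$.

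Next I would exploit the hypothesis $\lVert X(s,t)\rVert=\lVert\Pi_3^2(s,t)\rVert$. Outside $H_3$ the two grid paths coincide, and inside $H_3$ the shortcut $\Pi_3^2$ replaces the two unit edges $(v_1^3,v_2^3,v_3^3)$ of $X$ by the length-$\sqrt3$ chord $(v_1^3,v_3^3)$ lying in the interior of $H_3$ (see Definition~\ref{def:8}); hence the hypothesis is equivalent to the local identity $\min\{\omega_2,\omega_3\}+\min\{\omega_3,\omega_4\}=\sqrt3\,\omega_3$. As in Lemma~\ref{lem:19} this yields the case split $\omega_2+\omega_4=\sqrt3\,\omega_3$ (if $\omega_2,\omega_4\le\omega_3$), $\omega_2=(\sqrt3-1)\omega_3$ (if $\omega_2<\omega_3<\omega_4$), and $\omega_4=(\sqrt3-1)\omega_3$ (if $\omega_4<\omega_3<\omega_2$); the remaining case $\omega_3\le\omega_2,\omega_4$ would force $2\omega_3=\sqrt3\,\omega_3$, i.e.\ $\omega_3=0$, in which case $X(u_j,u_{j+1})$ and $\mathit{SP_w}(u_j,u_{j+1})$ have equal weighted length and the ratio is $1$.

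Then I would bound $R=\frac{a\min\{\omega_2,\omega_3\}+b\min\{\omega_3,\omega_4\}}{c\,\omega_3}$ in each case exactly as in Lemma~\ref{lem:19}: substitute the weight relation, use $\min\{\omega_i,\omega_3\}\le\omega_3$ to eliminate the last free weight, and observe that $R$ is always at most $\frac{a(\sqrt3-1)+b}{\sqrt{a^2+ab+b^2}}$ or $\frac{a+b(\sqrt3-1)}{\sqrt{a^2+ab+b^2}}$. By homogeneity I may normalize $a=1$ and optimize the single-variable function of $b$; its critical point is $b=\sqrt3+1$ (a ratio $b/a$ realizable with $a,b\in(0,1]$ after rescaling), where the value equals $\frac{2\sqrt3}{\sqrt{6+3\sqrt3}}=\frac{2}{\sqrt{2+\sqrt3}}$, which is the claimed bound.

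The main obstacle, as in the square case, is combinatorial rather than analytic: one must verify carefully that for a polygon of type~$P_1^1$ in $G_{12\text{corner}}$ the portion of $X(s,t)$ inside $H_3$ that $\Pi_3^2$ shortcuts consists of exactly the two unit edges incident to the turning corner of $X$, so that $\lVert X(s,t)\rVert=\lVert\Pi_3^2(s,t)\rVert$ really does collapse to the clean identity $\min\{\omega_2,\omega_3\}+\min\{\omega_3,\omega_4\}=\sqrt3\,\omega_3$; once this is in place, the four-way case analysis and the elementary maximization transfer from Lemma~\ref{lem:19} with $\sqrt2$ replaced by $\sqrt3$ and $a^2+b^2$ replaced by $a^2+ab+b^2$.
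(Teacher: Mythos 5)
Your proposal is correct and is essentially the paper's intended argument: the paper omits the proof of this lemma, stating only that it parallels Lemma~\ref{lem:19}, and you carry out exactly that adaptation — replacing the $90^\circ$ corner and $\sqrt{2}$-diagonal by the $120^\circ$ corner and $\sqrt{3}$-chord of $\Pi_3^2$, so that $c=\sqrt{a^2+ab+b^2}$ and the weight identity becomes $\min\{\omega_2,\omega_3\}+\min\{\omega_3,\omega_4\}=\sqrt{3}\,\omega_3$. The case analysis and the maximization (critical point $b/a=\sqrt{3}+1$, value $\frac{2\sqrt{3}}{\sqrt{6+3\sqrt{3}}}=\frac{2}{\sqrt{2+\sqrt{3}}}$) both check out.
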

		
		\begin{lemma}
			\label{lem:ratio12corner2}
			Let $ u_j, \ u_{j+1} \in \mathcal{H} $ be two consecutive points where a shortest path~$ \mathit{SP_w}(s,t) $ and the crossing path $ X(s,t) \in G_{12\text{corner}} $ coincide. If $ u_j $ and $ u_{j+1} $ induce a polygon of type~$ P_2^1$, and $ \lVert X(s,t) \rVert = \lVert \Pi_3^2(s,t) \rVert = \lVert \Pi_3^3(s,t) \rVert $, then $ \frac{\lVert X(u_j, u_{j+1})\rVert}{\lVert \mathit{SP_w}(u_j, u_{j+1}) \rVert} \leq \frac{2}{\sqrt{2+\sqrt{3}}} $.
		\end{lemma}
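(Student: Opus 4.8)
The plan is to carry out the argument of Lemma~\ref{lem:19} inside the hexagonal cell $H_3$ of the $P_2^1$-triple in which the maximum ratio is attained. By the hexagonal version of Lemma~\ref{lem:Ptriple} (and Lemma~\ref{lem:largecasehex}) we may assume $\mathit{SP_w}(u_j,u_{j+1})$ meets only the interior of $H_3$, entering through an edge and leaving through the edge two positions away; then, by Definition~\ref{def:crossinghex12}, $X(u_j,u_{j+1})\in G_{12\text{corner}}$ runs along (parts of) the three consecutive boundary edges $e_1,e_2,e_3$ of $H_3$ joining the entry edge to the exit edge, through the two intermediate corners $v,u'$. I would set $a=|u_jv|\in(0,1]$ along $e_1$ and $b=|u'u_{j+1}|\in(0,1]$ along $e_3$, and record
\[
\lVert\mathit{SP_w}(u_j,u_{j+1})\rVert=c\,\omega_3,\qquad
\lVert X(u_j,u_{j+1})\rVert=a\min\{\omega_2,\omega_3\}+\min\{\omega',\omega_3\}+b\min\{\omega_3,\omega_4\},
\]
where $\omega_2,\omega',\omega_4$ are the weights of the cells sharing $e_1,e_2,e_3$ with $H_3$ and $c=|u_ju_{j+1}|$. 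The value of $c$ comes from a hexagonal analogue of Observation~\ref{obs:lengthsquares}: the law of cosines applied to the polyline $u_j,v,u',u_{j+1}$, using the interior angle $2\pi/3$ of a unit hexagon, gives $c=\sqrt{a^2+b^2-ab+a+b+1}$ (a one-line computation I would state separately, exactly as Observation~\ref{obs:lengthsquares} is used in Lemma~\ref{lem:19}).

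Next I would invoke the hypothesis $\lVert X(s,t)\rVert=\lVert\Pi_3^2(s,t)\rVert=\lVert\Pi_3^3(s,t)\rVert$, which is the conclusion of Lemma~\ref{lem:equallengthhex122}. Since $X$, $\Pi_3^2$ and $\Pi_3^3$ coincide outside $H_3$, and inside $H_3$ the path $\Pi_3^2$ replaces two consecutive boundary edges by a short diagonal of length $\sqrt3$ while $\Pi_3^3$ replaces all three by a long diagonal of length $2$, the two equalities reduce to two linear relations among $\min\{\omega_2,\omega_3\}$, $\min\{\omega',\omega_3\}$, $\min\{\omega_3,\omega_4\}$ and $\omega_3$ --- of the same flavour as the relation $\min\{\omega_2,\omega_3\}+\min\{\omega_3,\omega_4\}=\sqrt2\,\omega_3$ in Lemma~\ref{lem:19} --- one of which isolates one of the end terms as $(2-\sqrt3)\,\omega_3$ while the other two sum to $\sqrt3\,\omega_3$. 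Splitting on the relative order of $\omega_2,\omega',\omega_4$ against $\omega_3$ as in Lemma~\ref{lem:19} (the branch $\omega_3\le\min\{\omega_2,\omega',\omega_4\}$ being vacuous, since it forces $\omega_3=0$ and makes the ratio $1$), these relations express the neighbouring weights through $\omega_3$ up to one degree of freedom, and substituting into the two displayed formulas cancels $\omega_3$ entirely.

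What remains in each branch (after a further split on $a\lessgtr b$ where the surviving minima demand it, just as in Lemma~\ref{lem:19}) is a bound of the shape $\dfrac{\lVert X(u_j,u_{j+1})\rVert}{\lVert\mathit{SP_w}(u_j,u_{j+1})\rVert}\le\dfrac{\alpha a+\beta+\gamma b}{\sqrt{a^2+b^2-ab+a+b+1}}$ with $\alpha,\beta,\gamma$ built from $\sqrt3$ and $2-\sqrt3$; a short optimization over $a,b\in(0,1]$ (after normalizing, a one-variable calculus problem) evaluates to $\tfrac{2}{\sqrt{2+\sqrt3}}$ --- one can check that the squared value is $8-4\sqrt3=4/(2+\sqrt3)$ --- and I would confirm that no branch exceeds it. The main obstacle is the bookkeeping forced by having \emph{two} shortcut paths interacting with $H_3$, instead of the single one available for squares in Lemma~\ref{lem:19} and for the $P_1^1$ polygon in Lemma~\ref{lem:ratio12corner1}: this roughly doubles the number of weight-ordering branches, and in each branch one must correctly pin down which pair of edges $\Pi_3^2$ replaces and which triple $\Pi_3^3$ replaces, solve the resulting linear system, and push the two-variable maximization through; verifying that every branch obeys $\tfrac{2}{\sqrt{2+\sqrt3}}$, with equality attained in one of them, is the delicate part.
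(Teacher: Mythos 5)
Your proposal follows exactly the route the paper prescribes for this lemma (whose proof is omitted there and described only as ``similar to the proof of Lemma~\ref{lem:19}'' with multiple case analysis): the same $P_2^1$-triple reduction, the law-of-cosines length $c=\sqrt{a^2+b^2-ab+a+b+1}$, the two shortcut equalities $\lVert X\rVert=\lVert\Pi_3^2\rVert=\lVert\Pi_3^3\rVert$ turned into linear relations forcing one edge weight to $(2-\sqrt3)\omega_3$ and the other two to sum to $\sqrt3\omega_3$, and a branch-by-branch maximization over $a,b$. Your key quantities check out --- in the extremal branch the ratio $\bigl(a(2-\sqrt3)+1+b(\sqrt3-1)\bigr)/\sqrt{a^2+b^2-ab+a+b+1}$ attains its maximum $2/\sqrt{2+\sqrt3}$ at $b=1$, $a=2\sqrt3-3$ --- so the plan is correct, modulo carrying out the remaining weight-ordering cases you already identify.
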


		Analogously as for Theorem~\ref{thm:ratiohex3corner}, and using Lemmas \ref{lem:ratio12corner1} and \ref{lem:ratio12corner2}, we obtain the following result.
		
		\begin{theorem}
		    \label{thm:ratiohex12corner}
		    In $ G_{12\text{corner}} $, $ \frac{\lVert \mathit{SGP_w}(s, t)\rVert}{\lVert \mathit{SP_w}(s,t) \rVert} \leq \frac{2}{\sqrt{2+\sqrt{3}}} $.
    	\end{theorem}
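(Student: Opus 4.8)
The plan is to follow exactly the skeleton of the proof of Theorem~\ref{thm:ratiohex3corner}, replacing each $G_{3\text{corner}}$ ingredient by its $G_{12\text{corner}}$ counterpart. First I would fix a weighted shortest path $\mathit{SP_w}(s,t)$ between two corners of the hexagonal mesh $\mathcal{H}$ and let $X(s,t)$ be the crossing path in $G_{12\text{corner}}$ obtained from it via Definition~\ref{def:crossinghex12}. Since $X(s,t)$ is a grid path, $\lVert \mathit{SGP_w}(s,t)\rVert \leq \lVert X(s,t)\rVert$, so it suffices to bound $\frac{\lVert X(s,t)\rVert}{\lVert \mathit{SP_w}(s,t)\rVert}$. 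By the mediant inequality (Observation~\ref{thm:1}), this ratio is at most the largest of the per-segment ratios $\frac{\lVert X(u_j,u_{j+1})\rVert}{\lVert \mathit{SP_w}(u_j,u_{j+1})\rVert}$ over consecutive coincidence points $u_j,u_{j+1}$ of $\mathit{SP_w}(s,t)$ and $X(s,t)$.

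Next I would cut down the polygon types that need analysis. As observed after Definition~\ref{def:weakly3hex12}, the hexagonal generalizations of Observations~\ref{obs:reduce0} and~\ref{obs:reduce3}, together with the hexagonal analogue of Proposition~\ref{prop:unique}, leave only polygons of type $P_1^1$, $P_2^1$, $P_1^\ell$ and $P_2^\ell$ as candidates for the worst case: polygons of type $P_0^1$ contribute ratio $1$, and a polygon of type $P_3^\ell$ (resp.\ the larger-$k$ types of Definitions~\ref{def:weakly1hex12} and~\ref{def:weakly3hex12}) degenerates into one of type $P_2^{\ell+1}$ as the last crossing-path vertex slides to $u_{j+1}$. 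Then Lemma~\ref{lem:largecasehex}, which is valid for $X(s,t)\in G_{12\text{corner}}$, replaces a polygon of type $P_1^\ell$ or $P_2^\ell$ sitting inside a $P_k^\ell$-triple by a single polygon of type $P_1^1$ or $P_2^1$, so the whole bound reduces to bounding $\frac{\lVert X(u_j,u_{j+1})\rVert}{\lVert \mathit{SP_w}(u_j,u_{j+1})\rVert}$ for just those two single-cell polygon types.

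Now I would invoke the triple construction: the hexagonal version of Lemma~\ref{lem:Ptriple} lets me assume the maximizing instance is a $P_1^1$-triple or a $P_2^1$-triple, and then Lemmas~\ref{lem:equallengthhex121} and~\ref{lem:equallengthhex122} show that at such a maximizing instance the crossing path and the relevant shortcut paths of Definition~\ref{def:8} have equal weighted length, that is $\lVert X(s,t)\rVert = \lVert \Pi_3^2(s,t)\rVert$ for a $P_1^1$-triple and $\lVert X(s,t)\rVert = \lVert \Pi_3^2(s,t)\rVert = \lVert \Pi_3^3(s,t)\rVert$ for a $P_2^1$-triple. With those equalities as hypotheses, Lemma~\ref{lem:ratio12corner1} gives $\frac{\lVert X(u_j,u_{j+1})\rVert}{\lVert \mathit{SP_w}(u_j,u_{j+1})\rVert} \leq \frac{2}{\sqrt{2+\sqrt{3}}}$ for a polygon of type $P_1^1$, and Lemma~\ref{lem:ratio12corner2} gives the same bound for $P_2^1$. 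Feeding this back through the mediant inequality yields $\frac{\lVert X(s,t)\rVert}{\lVert \mathit{SP_w}(s,t)\rVert} \leq \frac{2}{\sqrt{2+\sqrt{3}}}$, and hence $\frac{\lVert \mathit{SGP_w}(s,t)\rVert}{\lVert \mathit{SP_w}(s,t)\rVert} \leq \frac{2}{\sqrt{2+\sqrt{3}}}$, as claimed.

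The genuine difficulty, already packaged inside the cited lemmas, is Lemmas~\ref{lem:ratio12corner1} and~\ref{lem:ratio12corner2}: bounding the single-cell ratio requires a careful case split according to which of the three weights $\omega_2,\omega_3,\omega_4$ realizes the minima that appear in $\lVert X(u_j,u_{j+1})\rVert$, combined with the linear relations among these weights forced by the equal-length conditions, followed by a maximization over the positions of $u_j$ and $u_{j+1}$ on the edges of $H_3$ (using the hexagonal analogue of Observation~\ref{obs:lengthsquares} for the chord lengths $\sqrt{2+\sqrt{3}}$ and $2$). For the theorem statement itself, the only point that needs care is checking that the reduction above really exhausts every polygon type that the union $\mathit{SP_w}(s,t)\cup X(s,t)$ can produce in $G_{12\text{corner}}$ --- in particular the $P_3^1$ and $P_3^\ell$ types peculiar to this graph --- which is precisely what the hexagonal analogues of Propositions~\ref{prop:onlyvertex} and~\ref{prop:unique} guarantee.
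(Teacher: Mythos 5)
Your proposal is correct and follows essentially the same route as the paper: reduce to the crossing path of Definition~\ref{def:crossinghex12}, apply the mediant inequality, eliminate all polygon types except $P_1^1$ and $P_2^1$ via the hexagonal analogues of Observations~\ref{obs:reduce0} and~\ref{obs:reduce3} together with Lemma~\ref{lem:largecasehex}, and conclude with Lemmas~\ref{lem:equallengthhex121}, \ref{lem:equallengthhex122}, \ref{lem:ratio12corner1} and~\ref{lem:ratio12corner2}. Your explicit remark about checking that the $P_3^1$ and $P_3^\ell$ types peculiar to $G_{12\text{corner}}$ are covered by the degeneration argument is exactly the point the paper leaves implicit.
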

    	
    	Theorems \ref{thm:ratiohex3corner} and \ref{thm:ratiohex12corner} imply Corollaries \ref{cor:ratiohex3corner} and \ref{cor:ratiohex12corner}, respectively. These results are the upper bounds for the ratios involving the shortest vertex path $ \mathit{SVP_w}(s,t) $. The proofs are straightforward using that $ \lVert \mathit{SVP_w}(s,t) \rVert \geq \lVert \mathit{SP_w(s,t)} \rVert $.
    	
    	\begin{corollary}
		\label{cor:ratiohex3corner}
		In $ G_{3\text{corner}} $, $ \frac{\lVert \mathit{SGP_w}(s, t)\rVert}{\lVert \mathit{SVP_w}(s,t) \rVert} \leq \frac{3}{2} $.
	\end{corollary}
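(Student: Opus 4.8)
The plan is to obtain Corollary~\ref{cor:ratiohex3corner} as an immediate consequence of Theorem~\ref{thm:ratiohex3corner}, in exactly the same way Corollary~\ref{cor:3} followed from Theorem~\ref{thm:4} in the square case. The one ingredient beyond Theorem~\ref{thm:ratiohex3corner} is the elementary comparison $\lVert \mathit{SP_w}(s,t)\rVert \leq \lVert \mathit{SVP_w}(s,t)\rVert$: a shortest vertex path is by definition a polygonal path from $s$ to $t$ all of whose bends lie at corners of $\mathcal{H}$, hence it is one particular path in the continuous weighted region; since $\mathit{SP_w}(s,t)$ minimises weighted length over all paths from $s$ to $t$ with no constraint on the turning points, its length cannot exceed that of $\mathit{SVP_w}(s,t)$.

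First I would record this comparison, together with the already-noted fact that every grid path is a vertex path (the edge set of $G_{3\text{corner}}$ is contained in that of $G_{\text{corner}}$), so that all three quantities are comparable. Then, in the nondegenerate case where the lengths are positive, one simply divides: from $\lVert \mathit{SP_w}(s,t)\rVert \leq \lVert \mathit{SVP_w}(s,t)\rVert$ we get
\[
\frac{\lVert \mathit{SGP_w}(s,t)\rVert}{\lVert \mathit{SVP_w}(s,t)\rVert} \;\leq\; \frac{\lVert \mathit{SGP_w}(s,t)\rVert}{\lVert \mathit{SP_w}(s,t)\rVert} \;\leq\; \frac{3}{2},
\]
the last step being Theorem~\ref{thm:ratiohex3corner}.

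The only point requiring a word of care — the closest thing to an obstacle, though a very minor one — is the degenerate situation in which some cell weights vanish and $\lVert \mathit{SP_w}(s,t)\rVert = 0$. Then the same zero-weight region is traversable by a vertex path, so $\lVert \mathit{SVP_w}(s,t)\rVert = 0$ as well, and likewise $\lVert \mathit{SGP_w}(s,t)\rVert = 0$; the ratio is then interpreted as $1 \leq \tfrac{3}{2}$ and the statement holds trivially. Outside this case the displayed chain is valid as written. The identical argument with Theorem~\ref{thm:ratiohex12corner} substituted for Theorem~\ref{thm:ratiohex3corner} gives Corollary~\ref{cor:ratiohex12corner}.
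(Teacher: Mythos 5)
Your proposal is correct and matches the paper's own argument: the paper derives Corollary~\ref{cor:ratiohex3corner} directly from Theorem~\ref{thm:ratiohex3corner} using precisely the inequality $\lVert \mathit{SVP_w}(s,t)\rVert \geq \lVert \mathit{SP_w}(s,t)\rVert$. Your extra remark on the degenerate zero-length case is a harmless refinement the paper does not bother to spell out.
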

	
	\begin{corollary}
		\label{cor:ratiohex12corner}
		In $ G_{12\text{corner}} $, $ \frac{\lVert \mathit{SGP_w}(s, t)\rVert}{\lVert \mathit{SVP_w}(s,t) \rVert} \leq \frac{2}{\sqrt{2+\sqrt{3}}} $.
	\end{corollary}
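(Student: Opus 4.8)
The plan is to obtain this corollary directly from Theorem~\ref{thm:ratiohex12corner}, using only the elementary fact that a shortest vertex path is never shorter, in the weighted region metric, than a shortest path in the continuous plane. First I would recall that $\mathit{SVP_w}(s,t)$ is by definition a path in $G_{\text{corner}}$, hence in particular a rectifiable polygonal curve in the hexagonal mesh $\mathcal{H}$ joining $s$ to $t$; since $\mathit{SP_w}(s,t)$ minimizes weighted length over \emph{all} such curves, we get the inequality $\lVert \mathit{SP_w}(s,t)\rVert \le \lVert \mathit{SVP_w}(s,t)\rVert$.

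Next I would chain this with the bound from Theorem~\ref{thm:ratiohex12corner}, namely $\lVert \mathit{SGP_w}(s,t)\rVert \le \frac{2}{\sqrt{2+\sqrt{3}}}\,\lVert \mathit{SP_w}(s,t)\rVert$. Combining the two displays gives
\[
\lVert \mathit{SGP_w}(s,t)\rVert \;\le\; \frac{2}{\sqrt{2+\sqrt{3}}}\,\lVert \mathit{SP_w}(s,t)\rVert \;\le\; \frac{2}{\sqrt{2+\sqrt{3}}}\,\lVert \mathit{SVP_w}(s,t)\rVert,
\]
so dividing through by $\lVert \mathit{SVP_w}(s,t)\rVert$ yields $\frac{\lVert \mathit{SGP_w}(s,t)\rVert}{\lVert \mathit{SVP_w}(s,t)\rVert} \le \frac{2}{\sqrt{2+\sqrt{3}}}$, which is exactly the claim. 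The only edge case to note is the degenerate one where $\lVert \mathit{SVP_w}(s,t)\rVert = 0$: then $\lVert \mathit{SGP_w}(s,t)\rVert = 0$ as well, and the ratio is taken to be $1$, which still satisfies the bound.

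There is essentially no obstacle specific to this corollary: all of the difficulty has already been discharged in Theorem~\ref{thm:ratiohex12corner}, whose proof carries the real content (the crossing path in $G_{12\text{corner}}$, the classification of the polygons of type $P_k^\ell$, the shortcut paths $\Pi_i^2(s,t)$ and $\Pi_i^3(s,t)$ of Definition~\ref{def:8}, and Lemmas~\ref{lem:ratio12corner1}--\ref{lem:ratio12corner2}). The one point worth double-checking is the \emph{direction} of the inequality $\lVert \mathit{SP_w}(s,t)\rVert \le \lVert \mathit{SVP_w}(s,t)\rVert$: since $\lVert \mathit{SVP_w}(s,t)\rVert$ appears in the denominator, replacing $\lVert \mathit{SP_w}(s,t)\rVert$ by the \emph{larger} quantity $\lVert \mathit{SVP_w}(s,t)\rVert$ can only shrink the ratio, so the upper bound from Theorem~\ref{thm:ratiohex12corner} is preserved — exactly the same passage used to derive Corollary~\ref{cor:ratiohex3corner} from Theorem~\ref{thm:ratiohex3corner}.
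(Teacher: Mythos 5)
Your proposal is correct and follows exactly the paper's route: the corollary is obtained from Theorem~\ref{thm:ratiohex12corner} together with the inequality $\lVert \mathit{SP_w}(s,t)\rVert \le \lVert \mathit{SVP_w}(s,t)\rVert$, which holds because any vertex path is in particular a path in the continuous weighted domain. The paper states this derivation in one line, and your chain of inequalities (including the check on the direction of the inequality) is precisely that argument spelled out.
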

	
	\subsection{Ratio $ \frac{\lVert \mathit{SVP_w}(s,t)\rVert}{\lVert \mathit{SP_w}(s,t)\rVert} $ for hexagonal cells}\label{sec:svphex}
	
	In this section we present two bounds on the ratio where the weighted shortest paths $ \mathit{SP_w}(s,t) $ and $ \mathit{SVP_w}(s,t) $ do not use any of the graphs $ G_{3\text{corner}} $ or $ G_{12\text{corner}} $, but $ G_{\text{corner}} $, i.e., the ratio $ \frac{\lVert \mathit{SVP_w}(s,t)\rVert}{\lVert \mathit{SP_w}(s,t)\rVert} $. As already mentioned in Section~\ref{sec:svpsquares}, the length of a weighted shortest vertex path $ \mathit{SVP_w}(s,t) $ is a lower bound for the length of a weighted shortest grid path $ \mathit{SGP_w}(s,t) $, so the upper bound on the ratio $ \frac{\lVert \mathit{SVP_w}(s, t)\rVert}{\lVert \mathit{SP_w}(s,t) \rVert} $ is obtained in Corollary~\ref{cor:7}.
	
	\begin{corollary}
		\label{cor:7}
		In a hexagonal tessellation, $ \frac{\lVert \mathit{SVP_w}(s, t)\rVert}{\lVert \mathit{SP_w}(s,t) \rVert} \leq \frac{2}{\sqrt{2+\sqrt{3}}} \approx 1.04 $.
	\end{corollary}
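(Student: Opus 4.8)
The plan is to obtain this upper bound directly from Theorem~\ref{thm:ratiohex12corner}, mirroring the way Corollary~\ref{cor:5} was deduced from Theorem~\ref{thm:4} in the square case. The key structural fact is that $G_{12\text{corner}}$ is a subgraph of $G_{\text{corner}}$: every edge of $G_{12\text{corner}}$ joins two corners of a common hexagonal cell, hence is in particular an edge of the complete graph $G_{\text{corner}}$ on the set of all corners. Therefore every grid path in $G_{12\text{corner}}$ is also a vertex path, and in particular $\mathit{SGP_w}(s,t)$ (computed in $G_{12\text{corner}}$) is one admissible $s$--$t$ path in $G_{\text{corner}}$. Since the weighted length of a path depends only on its geometry and the cell weights, not on which graph we regard it as belonging to, this comparison is meaningful.

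First I would note that $\mathit{SVP_w}(s,t)$ is by definition a shortest path in $G_{\text{corner}}$, whereas $\mathit{SGP_w}(s,t)$ is merely one path in $G_{\text{corner}}$; hence $\lVert \mathit{SVP_w}(s,t)\rVert \le \lVert \mathit{SGP_w}(s,t)\rVert$. Then I would divide through by $\lVert \mathit{SP_w}(s,t)\rVert$ and invoke Theorem~\ref{thm:ratiohex12corner} to conclude
\[
\frac{\lVert \mathit{SVP_w}(s,t)\rVert}{\lVert \mathit{SP_w}(s,t)\rVert}
\;\le\; \frac{\lVert \mathit{SGP_w}(s,t)\rVert}{\lVert \mathit{SP_w}(s,t)\rVert}
\;\le\; \frac{2}{\sqrt{2+\sqrt{3}}},
\]
which is the claimed bound.

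There is essentially no obstacle in the argument; the only point deserving a moment of care is to invoke the $G_{12\text{corner}}$ bound (Theorem~\ref{thm:ratiohex12corner}) rather than the weaker $G_{3\text{corner}}$ bound (Theorem~\ref{thm:ratiohex3corner}). This is legitimate because $\mathit{SGP_w}(s,t)$ in $G_{12\text{corner}}$ is no longer than $\mathit{SGP_w}(s,t)$ in $G_{3\text{corner}}$ (the former graph has strictly more edges available), so the sharper constant $\frac{2}{\sqrt{2+\sqrt{3}}}$ is the one that transfers to $\mathit{SVP_w}(s,t)$. For completeness I would also remark that the matching lower bound reported in Table~\ref{tab:hexw} (Observation~\ref{obs:lowerhex}) is established separately, by exhibiting an explicit two-weight configuration analogous to the one in Figure~\ref{fig:31} for square cells; that construction is independent of Corollary~\ref{cor:7} and only certifies tightness of the bound proved here.
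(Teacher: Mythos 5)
Your argument is correct and matches the paper's: the paper likewise observes that $\lVert \mathit{SVP_w}(s,t)\rVert \leq \lVert \mathit{SGP_w}(s,t)\rVert$ because every grid path in $G_{12\text{corner}}$ is a vertex path in $G_{\text{corner}}$, and then divides by $\lVert \mathit{SP_w}(s,t)\rVert$ and applies Theorem~\ref{thm:ratiohex12corner}. Your additional remarks about choosing the sharper $G_{12\text{corner}}$ bound and about the independence of the lower-bound construction are accurate but not needed.
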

	
	Finally, we provide a lower bound for the ratio $ \frac{\lVert \mathit{SVP_w}(s, t)\rVert}{\lVert \mathit{SP_w}(s,t) \rVert} $. The green path in Figure~\ref{fig:lowerhex} is a weighted shortest vertex path~$ \mathit{SVP_w}(s,t) $ between vertices $ s $ and $ t $, thus, we have the following result.
	
	\begin{observation}
	    \label{obs:lowerhex}
	    In $ G_{\text{corner}} $, $ \frac{\lVert \mathit{SVP_w}(s, t)\rVert}{\lVert \mathit{SP_w}(s,t) \rVert} \geq \frac{2\sqrt{4\sqrt{3}-6}}{(2-\sqrt{3})(\sqrt{4\sqrt{3}-6}+6)} \approx 1.03 $.
	\end{observation}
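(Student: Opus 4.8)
The plan is to exhibit a single explicit weighted hexagonal mesh, depending on one free length parameter, for which the ratio $\frac{\lVert \mathit{SVP_w}(s,t)\rVert}{\lVert \mathit{SP_w}(s,t)\rVert}$ equals the claimed constant; since this quantity is a worst-case (supremum) ratio, one such instance suffices for the lower bound. Concretely, as in Figure~\ref{fig:lowerhex} and in analogy with the square construction behind Observation~\ref{obs:5}, I would let almost every cell of the mesh have infinite weight, leaving only a short corridor of finitely weighted cells — including at least one cell of weight $w>1$, so that refraction actually occurs — and place $s$ and $t$ at two corners of the mesh, with a parameter $a$ controlling how far $s$ (symmetrically $t$) sits from the corridor's boundary edge. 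The finitely weighted region must be genuinely two-dimensional rather than a pure $\{1,\infty\}$ obstacle configuration, because a Euclidean shortest path around obstacles bends only at obstacle corners, which are mesh vertices, and would then be matched exactly by a vertex path, forcing the ratio down to $1$.

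Next I would pin down $\mathit{SP_w}(s,t)$ for this instance. The infinite-weight cells confine $\mathit{SP_w}(s,t)$ to a single homotopy class, and a shortest path is a straight segment inside each region, so $\mathit{SP_w}(s,t)$ is a polygonal path that refracts at the boundary between the weight-$1$ and the weight-$w$ region according to Snell's law $w_1\sin\theta_1=w_2\sin\theta_2$ (see Mitchell and Papadimitriou~\cite{Mitchell2}). Solving the refraction equations, with the unit-hexagon geometry contributing the values $\tfrac{\sqrt3}{2}$, $\sqrt3$, $2$ (apothem, short diagonal, long diagonal), yields $\lVert \mathit{SP_w}(s,t)\rVert$ as an explicit function of $a$ (and of $w$, if $w$ is left free rather than fixed by the construction). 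In parallel I would identify $\mathit{SVP_w}(s,t)$: a vertex path may turn only at mesh corners, and the only corners not enclosed by infinite-weight cells are the finitely many corners of the corridor, so the competing vertex paths form a short finite list; a direct length comparison shows that the path drawn in green in Figure~\ref{fig:lowerhex} is shortest among them, giving $\lVert \mathit{SVP_w}(s,t)\rVert$ as a function of $a$.

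Finally I would optimize $f(a)=\frac{\lVert \mathit{SVP_w}(s,t)\rVert}{\lVert \mathit{SP_w}(s,t)\rVert}$ over the admissible range of $a$ (and $w$): setting $f'(a)=0$, solving for the critical value, substituting back, and simplifying the nested radicals — using identities such as $2-\sqrt3=\tfrac{(\sqrt3-1)^2}{2}$ and $4\sqrt3-6=2\sqrt3\,(2-\sqrt3)$ — collapses the expression to $\frac{2\sqrt{4\sqrt3-6}}{(2-\sqrt3)(\sqrt{4\sqrt3-6}+6)}\approx 1.03$, which is therefore a lower bound on the worst-case ratio in $G_{\text{corner}}$.

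I expect the main obstacle to be the two exactness claims: proving that the green path is genuinely $\mathit{SVP_w}(s,t)$ — ruling out every other vertex path, including ones that leave the obvious corridor — and that the blue path is genuinely $\mathit{SP_w}(s,t)$ rather than merely some path through that region, since shortest paths in a weighted subdivision can refract in unexpected ways and the refraction angles and the choice of corridor have to be verified carefully. The accompanying optimization over $a$ is routine calculus, but the radical simplification that turns the optimum into the stated closed form is delicate bookkeeping and is where a computational slip is most likely.
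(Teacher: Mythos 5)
Your proposal is correct and matches the paper's approach: the paper proves this observation purely by exhibiting the explicit instance of Figure~\ref{fig:lowerhex} (a corridor of finitely weighted hexagonal cells in an otherwise infinite-weight mesh, with the free parameter set to $a = 1-\sqrt{\tfrac{3(7-4\sqrt{3})}{4\sqrt{3}-6}}\approx 0.52$) and reading off the ratio of the green vertex path to the blue refracting shortest path. Your additional care about verifying that the green path really is $\mathit{SVP_w}(s,t)$ and the blue path really is $\mathit{SP_w}(s,t)$, and your remark that a pure $\{1,\infty\}$ weighting cannot work, are sound and in fact more explicit than what the paper provides.
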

	
	\begin{figure}[tb]
	    \centering
	    \includegraphics{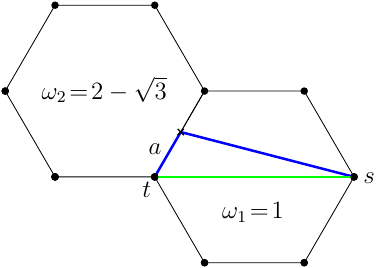}
	    \caption{$ \mathit{SP_w}(s,t) $ and $ \mathit{SVP_w}(s,t) $ are depicted in blue and green, respectively. The ratio~$ \frac{\lVert \mathit{SVP_w}(s,t)\rVert}{\lVert \mathit{SP_w}(s,t)\rVert} $ is $ \frac{2\sqrt{4\sqrt{3}-6}}{(2-\sqrt{3})(\sqrt{4\sqrt{3}-6}+6)} $ when $ a = 1-\sqrt{\frac{3(7-4\sqrt{3})}{4\sqrt{3}-6}}\approx 0.52 $ and a side length of mesh cell is $ 1$.}
	    \label{fig:lowerhex}
    \end{figure}

	\section{Conclusions}
            In this paper, we determined how the ratios between the weighted lengths of a shortest path, a shortest vertex path and a shortest grid path vary in a square and a hexagonal mesh. We studied the ratios for four different definitions of the neighbors of a vertex placed at a corner of a cell. This analysis is an extension of a previous work on triangular tessellations, see~\cite{bose2023approximating}. Thus, with this work we close the problem of calculating the ratios on every regular meshes that tessellates the continuous space. We believe that the techniques we propose in this paper can also be used to upper-bound the same ratios in 3D environments.
            
            As future work, it will be interesting to close the gap on the ratio $ \frac{\lVert \mathit{SVP_w}(s,t)\rVert}{\lVert \mathit{SP_w}(s,t)\rVert} $, although given how small it is, this would be only of theoretical interest. Another open problem to consider is to bound the ratios when the vertices are placed at the center of the cells. We showed in~\cite{bose2023approximating} that the ratios in this model in a triangular tessellation are unbounded. If a square mesh is considered, and every vertex is connected to four other vertices, the ratios involving $ \mathit{SGP_w}(s,t) $ are also unbounded, see Figure~\ref{fig:infinite}. However, for the case where each vertex is connected to eight neighboring vertices, or in a hexagonal mesh, it would be interesting to either improve the ratios that have been previously studied, see~\cite{Bound3}, or to conduct a study for the cases where no upper bounds have been obtained.

            \vspace{15mm}

{\small \noindent \textbf{Acknowledgments}}

P. B. is partially supported by NSERC. G. E., D. O. and R. I. S. are partially supported by project PID2019-104129GB-I00 funded by MICIU/AEI/10.13039/501100011033. G. E. is also funded by an FPU of the Universidad de Alcal\'a.

\bibliographystyle{abbrv}
\bibliography{arXiv-squares-v1}

\begin{thebibliography}{10}

\bibitem{Aleksandrov}
L.~Aleksandrov, M.~Lanthier, A.~Maheshwari, and J.-R. Sack.
\newblock An $\varepsilon$-approximation algorithm for weighted shortest paths on polyhedral surfaces.
\newblock In {\em Scandinavian Workshop on Algorithm Theory}, pages 11--22. Springer, 1998.

\bibitem{Aleksandrov2}
L.~Aleksandrov, A.~Maheshwari, and J.-R. Sack.
\newblock Approximation algorithms for geometric shortest path problems.
\newblock In {\em Proceedings of the thirty-second annual ACM symposium on Theory of computing}, pages 286--295, 2000.

\bibitem{Aleksandrov3}
L.~Aleksandrov, A.~Maheshwari, and J.-R. Sack.
\newblock Determining approximate shortest paths on weighted polyhedral surfaces.
\newblock {\em Journal of the ACM}, 52(1):25--53, 2005.

\bibitem{alliez2003anisotropic}
P.~Alliez, D.~Cohen-Steiner, O.~Devillers, B.~L{\'e}vy, and M.~Desbrun.
\newblock Anisotropic polygonal remeshing.
\newblock In {\em ACM SIGGRAPH 2003 Papers}, pages 485--493. 2003.

\bibitem{Ammar}
A.~Ammar, H.~Bennaceur, I.~Ch\v{a}ari, A.~Koub\v{a}a, and M.~Alajlan.
\newblock Relaxed {D}ijkstra and $ {A}^{*} $ with linear complexity for robot path planning problems in large-scale grid environments.
\newblock {\em Soft Computing}, 20(10):4149--4171, 2016.

\bibitem{hexagonsproperties}
{ArcGIS Pro 3.0}.
\newblock Why hexagons?
\newblock \url{https://pro.arcgis.com/en/pro-app/latest/tool-reference/spatial-statistics/h-whyhexagons.htm}.
\newblock Accessed: 2022-11-15.

\bibitem{bailey2021path}
J.~P. Bailey, A.~Nash, C.~A. Tovey, and S.~Koenig.
\newblock Path-length analysis for grid-based path planning.
\newblock {\em Artificial Intelligence}, 301:103560, 2021.

\bibitem{bailey2015path}
J.~P. Bailey, C.~Tovey, T.~Uras, S.~Koenig, and A.~Nash.
\newblock Path planning on grids: The effect of vertex placement on path length.
\newblock In {\em Eleventh Artificial Intelligence and Interactive Digital Entertainment Conference}, 2015.

\bibitem{bjornsson2003comparison}
Y.~Bj{\"o}rnsson, M.~Enzenberger, R.~Holte, J.~Schaeffer, and P.~Yap.
\newblock Comparison of different grid abstractions for pathfinding on maps.
\newblock In {\em IJCAI}, pages 1511--1512. Citeseer, 2003.

\bibitem{bose2023approximating}
P.~Bose, G.~Esteban, D.~Orden, and R.~I. Silveira.
\newblock On approximating shortest paths in weighted triangular tessellations.
\newblock {\em Artificial Intelligence}, 318:103898, 2023.

\bibitem{buyssens2014eikonal}
P.~Buyssens, I.~Gardin, S.~Ruan, and A.~Elmoataz.
\newblock Eikonal-based region growing for efficient clustering.
\newblock {\em Image and Vision Computing}, 32(12):1045--1054, 2014.

\bibitem{carsten2009global}
J.~Carsten, A.~Rankin, D.~Ferguson, and A.~Stentz.
\newblock Global planning on the {M}ars exploration rovers: Software integration and surface testing.
\newblock {\em Journal of Field Robotics}, 26(4):337--357, 2009.

\bibitem{ChenKT00}
D.~Z. Chen, K.~S. Klenk, and H.~Y. Tu.
\newblock Shortest path queries among weighted obstacles in the rectilinear plane.
\newblock {\em {SIAM} J. Comput.}, 29(4):1223--1246, 2000.

\bibitem{ChengJV15}
S.~W. Cheng, J.~Jin, and A.~Vigneron.
\newblock Triangulation refinement and approximate shortest paths in weighted regions.
\newblock In P.~Indyk, editor, {\em Proceedings of the Twenty-Sixth Annual {ACM-SIAM} Symposium on Discrete Algorithms, {SODA} 2015, San Diego, CA, USA, January 4-6, 2015}, pages 1626--1640. {SIAM}, 2015.

\bibitem{chew1984pseudolinearity}
K.~L. Chew and E.~U. Choo.
\newblock Pseudolinearity and efficiency.
\newblock {\em Mathematical Programming}, 28(2):226--239, 1984.

\bibitem{de1997trekking}
M.~{de Berg} and M.~van Kreveld.
\newblock Trekking in the {A}lps without freezing or getting tired.
\newblock {\em Algorithmica}, 18(3):306--323, 1997.

\bibitem{Lou}
J.-L. {De Carufel}, C.~Grimm, A.~Maheshwari, M.~Owen, and M.~Smid.
\newblock A note on the unsolvability of the weighted region shortest path problem.
\newblock {\em Computational Geometry}, 47(7):724--727, 2014.

\bibitem{floriani}
L.~de~Floriani, P.~Magillo, and E.~Puppo.
\newblock Applications of computational geometry to geographic information systems.
\newblock {\em Handbook of computational geometry}, 7:333--388, 2000.

\bibitem{fu2009direct}
Y.~Fu and B.~Zhou.
\newblock Direct sampling on surfaces for high quality remeshing.
\newblock {\em Computer Aided Geometric Design}, 26(6):711--723, 2009.

\bibitem{galin2010procedural}
E.~Galin, A.~Peytavie, N.~Mar{\'e}chal, and E.~Gu{\'e}rin.
\newblock Procedural generation of roads.
\newblock In {\em Computer Graphics Forum}, volume~29, pages 429--438. Wiley Online Library, 2010.

\bibitem{gaw}
D.~Gaw and A.~Meystel.
\newblock Minimum-time navigation of an unmanned mobile robot in a 2-1/2{D} world with obstacles.
\newblock In {\em Proceedings of the 1986 IEEE International Conference on Robotics and Automation}, volume~3, pages 1670--1677. IEEE, 1986.

\bibitem{GewaliMMN90}
L.~Gewali, A.~C. Meng, J.~S.~B. Mitchell, and S.~C. Ntafos.
\newblock Path planning in 0/1/$\infty$ weighted regions with applications.
\newblock {\em {INFORMS} J. Comput.}, 2(3):253--272, 1990.

\bibitem{giannelli2016path}
C.~Giannelli, D.~Mugnaini, and A.~Sestini.
\newblock Path planning with obstacle avoidance by {$G^1$ PH} quintic splines.
\newblock {\em Computer-Aided Design}, 75:47--60, 2016.

\bibitem{Bound3}
N.~S. Jaklin.
\newblock {\em On Weighted Regions and Social Crowds: Autonomous-agent Navigation in Virtual Worlds}.
\newblock PhD thesis, Utrecht University, 2016.

\bibitem{jiang2009interpolatory}
Q.~Jiang, B.~Li, and W.~Zhu.
\newblock Interpolatory quad/triangle subdivision schemes for surface design.
\newblock {\em Computer Aided Geometric Design}, 26(8):904--922, 2009.

\bibitem{jigang2010notice}
W.~Jigang, P.~Han, G.~R. Jagadeesh, and T.~Srikanthan.
\newblock Notice of retraction: Practical algorithm for shortest path on large networks with time-dependent edge-length.
\newblock In {\em 2010 2nd International Conference on Computer Engineering and Technology}, volume~2, pages V2--57. IEEE, 2010.

\bibitem{kamphuis}
A.~Kamphuis, M.~Rook, and M.~H. Overmars.
\newblock Tactical path finding in urban environments.
\newblock In {\em First International Workshop on Crowd Simulation}. Citeseer, 2005.

\bibitem{kapoor1997efficient}
S.~Kapoor, S.~N. Maheshwari, and J.~S.~B. Mitchell.
\newblock An efficient algorithm for {E}uclidean shortest paths among polygonal obstacles in the plane.
\newblock {\em Discrete \& Computational Geometry}, 18(4):377--383, 1997.

\bibitem{kim2011coons}
Y.-J. Kim, Y.-T. Oh, S.-H. Yoon, M.-S. Kim, and G.~Elber.
\newblock Coons {BVH} for freeform geometric models.
\newblock {\em ACM Transactions on Graphics (TOG)}, 30(6):1--8, 2011.

\bibitem{kovacs2011anisotropic}
D.~Kovacs, A.~Myles, and D.~Zorin.
\newblock Anisotropic quadrangulation.
\newblock {\em Computer Aided Geometric Design}, 28(8):449--462, 2011.

\bibitem{ColossalCitadels}
V.~Kvachev.
\newblock Colossal {C}itadels.
\newblock \url{http://colossalcitadels.com}.
\newblock Accessed: 2022-04-05.

\bibitem{cartesian}
V.~Kvachev.
\newblock Triangle {G}rids.
\newblock \url{https://kvachev.com/blog/posts/triangular-grid/}.
\newblock Accessed: 2022-11-15.

\bibitem{lanthier1999shortest}
M.~Lanthier, A.~Maheshwari, and J.-R. Sack.
\newblock Shortest anisotropic paths on terrains.
\newblock In {\em International Colloquium on Automata, Languages, and Programming}, pages 524--533. Springer, 1999.

\bibitem{lee1990shortest}
D.~T. Lee, T.~H. Chen, and C.-D. Yang.
\newblock Shortest rectilinear paths among weighted obstacles.
\newblock In {\em Proceedings of the sixth annual symposium on Computational geometry}, pages 301--310, 1990.

\bibitem{losasso2004simulating}
F.~Losasso, F.~Gibou, and R.~Fedkiw.
\newblock Simulating water and smoke with an octree data structure.
\newblock In {\em ACM SIGGRAPH 2004 Papers}, pages 457--462. 2004.

\bibitem{Mitchell1}
J.~S.~B. Mitchell.
\newblock Shortest paths and networks.
\newblock In J.~E. Goodman, J.~O'Rourke, and C.~D. Toth, editors, {\em Handbook of Discrete and Computational Geometry, Second Edition}, pages 811--848. Chapman and Hall/CRC, 2017.

\bibitem{Mitchell2}
J.~S.~B. Mitchell and C.~Papadimitrou.
\newblock The weighted region problem: Finding shortest paths through a weighted planar subdivision.
\newblock {\em Journal of the ACM}, 38(1):18--73, 1991.

\bibitem{mitra2014structure}
N.~J. Mitra, M.~Wand, H.~Zhang, D.~Cohen-Or, V.~Kim, and Q.-X. Huang.
\newblock Structure-aware shape processing.
\newblock In {\em ACM SIGGRAPH 2014 Courses}, pages 1--21. 2014.

\bibitem{Minecraft}
{Mojang Studios}.
\newblock Minecraft.
\newblock PC, Nintendo, Xbox One, Playstation 4, 2009.

\bibitem{nagy}
B.~N. Nagy.
\newblock Shortest paths in triangular grids with neighbourhood sequences.
\newblock {\em Journal of Computing and Information Technology}, 11(2):111--122, 2003.

\bibitem{Nash}
A.~Nash.
\newblock {\em Any-Angle Path Planning}.
\newblock PhD thesis, University of Southern California, 2012.

\bibitem{paragios2003non}
N.~Paragios, M.~Rousson, and V.~Ramesh.
\newblock Non-rigid registration using distance functions.
\newblock {\em Computer Vision and Image Understanding}, 89(2-3):142--165, 2003.

\bibitem{peyer2009generalization}
S.~Peyer, D.~Rautenbach, and J.~Vygen.
\newblock A generalization of dijkstra's shortest path algorithm with applications to vlsi routing.
\newblock {\em Journal of Discrete Algorithms}, 7(4):377--390, 2009.

\bibitem{ramanathan2013shortest}
S.~B. Ram and M.~Ramanathan.
\newblock Shortest path in a multiply-connected domain having curved boundaries.
\newblock {\em Computer-Aided Design}, 45(3):723--732, 2013.

\bibitem{RAPCSAK1991353}
T.~Rapcsák.
\newblock On pseudolinear functions.
\newblock {\em European Journal of Operational Research}, 50(3):353--360, 1991.

\bibitem{reuter2006laplace}
M.~Reuter, F.-E. Wolter, and N.~Peinecke.
\newblock Laplace--{B}eltrami spectra as ‘{S}hape-{DNA}’ of surfaces and solids.
\newblock {\em Computer-Aided Design}, 38(4):342--366, 2006.

\bibitem{rowe}
N.~C. Rowe and R.~S. Ross.
\newblock Optimal grid-free path planning across arbitrarily contoured terrain with anisotropic friction and gravity effects.
\newblock {\em IEEE Transactions on Robotics and Automation}, 6(5):540--553, 1990.

\bibitem{Sharir}
M.~Sharir and S.~Sifrony.
\newblock Coordinated motion planning for two independent robots.
\newblock {\em Annals of Mathematics and Artificial Intelligence}, 3(1):107--130, 1991.

\bibitem{shekhovtsov2008efficient}
A.~Shekhovtsov, I.~Kovtun, and V.~Hlav{\'a}{\v{c}}.
\newblock Efficient {MRF} deformation model for non-rigid image matching.
\newblock {\em Computer Vision and Image Understanding}, 112(1):91--99, 2008.

\bibitem{shen2016converting}
J.~Shen, J.~Kosinka, M.~Sabin, and N.~Dodgson.
\newblock Converting a {CAD} model into a non-uniform subdivision surface.
\newblock {\em Computer Aided Geometric Design}, 48:17--35, 2016.

\bibitem{sturtevant2}
N.~R. Sturtevant, D.~Sigurdson, B.~Taylor, and T.~Gibson.
\newblock Pathfinding and abstraction with dynamic terrain costs.
\newblock In {\em Proceedings of the AAAI Conference on Artificial Intelligence and Interactive Digital Entertainment}, volume~15, pages 80--86, 2019.

\bibitem{Sun}
Z.~Sun and J.~H. Reif.
\newblock On finding approximate optimal paths in weighted regions.
\newblock {\em Journal of Algorithms}, 58(1):1--32, 2006.

\bibitem{van2016comparative}
W.~van Toll, R.~Triesscheijn, M.~Kallmann, R.~Oliva, N.~Pelechano, J.~Pettr{\'e}, and R.~Geraerts.
\newblock A comparative study of navigation meshes.
\newblock In {\em Proceedings of the 9th International Conference on Motion in Games}, pages 91--100, 2016.

\bibitem{wang2017superpixel}
M.~Wang, X.~Liu, Y.~Gao, X.~Ma, and N.~Q. Soomro.
\newblock Superpixel segmentation: {A} benchmark.
\newblock {\em Signal Processing: Image Communication}, 56:28--39, 2017.

\bibitem{wang2017discrete}
X.~Wang, Z.~Fang, J.~Wu, S.-Q. Xin, and Y.~He.
\newblock Discrete geodesic graph ({DGG}) for computing geodesic distances on polyhedral surfaces.
\newblock {\em Computer Aided Geometric Design}, 52:262--284, 2017.

\bibitem{ying2014parallel}
X.~Ying, Z.~Li, and Y.~He.
\newblock A parallel algorithm for improving the maximal property of {P}oisson disk sampling.
\newblock {\em Computer-Aided Design}, 46:37--44, 2014.

\bibitem{zhang2019real}
C.~Zhang, L.~Yang, L.~Xu, G.~Wang, and W.~Wang.
\newblock Real-time editing of man-made mesh models under geometric constraints.
\newblock {\em Computers \& Graphics}, 82:174--182, 2019.

\bibitem{zhang2003synthesis}
J.~Zhang, K.~Zhou, L.~Velho, B.~Guo, and H.-Y. Shum.
\newblock Synthesis of progressively-variant textures on arbitrary surfaces.
\newblock In {\em ACM SIGGRAPH 2003 Papers}, pages 295--302. 2003.

\end{thebibliography}

\end{document}